\documentclass{cta-author}
\usepackage{textcomp}
\usepackage{appendix}

\usepackage{graphicx}
\usepackage{xspace}
\usepackage{float}
\usepackage{booktabs}
\usepackage{url}
\usepackage[
    hyperfootnotes=false,
    colorlinks=true,
    linkcolor=blue,
    urlcolor=blue,
    citecolor=blue,
    anchorcolor=blue,
    pagebackref=false]{hyperref}

\usepackage[dvipsnames]{xcolor}

\usepackage{amsmath,amssymb,amsfonts}
\usepackage{mathrsfs}
\usepackage{url}
\usepackage{empheq}
\usepackage{caption}
\usepackage{subcaption}
\usepackage[capitalize]{cleveref}

\newtheorem{lemma}{Lemma}
\newtheorem{definition}{Definition}
\newtheorem{theorem}{Theorem}
\newtheorem{remark}{Remark}
\newtheorem{proposition}{Proposition}
\newtheorem{corollary}{Corollary}
\newtheorem{assumption}{Assumption}
\newtheorem{condition}{Condition}

\usepackage{enumitem}

\usepackage{tikz}
\usepackage{xspace}

\DeclareRobustCommand{\legendline}[1]{\hspace{-2pt}
\tikz[#1,line width=1.5pt,baseline=-0.5ex]{\draw (0,0) -- (.35,0);}
\hspace{-2pt}}

\definecolor{mblue}{rgb}{0,0.4470,0.7410}
\definecolor{morange}{rgb}{0.8500,0.3250,0.0980}
\definecolor{myellow}{rgb}{0.9290,0.6940,0.1250}
\definecolor{mpurple}{rgb}{0.4940,0.1840,0.5560}
\definecolor{mgreen}{rgb}{0.4660,0.6740,0.1880}
\definecolor{mcyan}{rgb}{0.3010,0.7450,0.9330}
\definecolor{mred}{rgb}{0.6350,0.0780,0.1840}
\definecolor{mgreenblue}{rgb}{0.0,1.0,0.5}
\definecolor{parulablue}{rgb}{0.2431,0.1490,0.6588}
\definecolor{parulalblue}{RGB}{39,151,235}
\definecolor{parulagreen}{RGB}{129,204,89}
\definecolor{parulayellow}{RGB}{249,251,21}
\definecolor{wintergreen}{cmyk}{0.61,0,0.74,0}

\DeclareFontFamily{OT1}{pzc}{}
\DeclareFontShape{OT1}{pzc}{m}{it}{ <-> s*[1.1] pzcmi7t }{}
\DeclareMathAlphabet{\mathpzc}{OT1}{pzc}{m}{it}

\newcommand{\mc}[1]{\mathcal{#1}}

\newcommand{\mr}[1]{\mathrm{#1}}
\newcommand{\mf}[1]{\mathfrak{#1}}
\newcommand{\mb}[1]{\mathbb{#1}}
\newcommand{\ms}[1]{\mathscr{#1}}

\newcommand{\mpz}[1]{\mathpzc{#1}}

\newcommand{\Partial}[2]{\frac{\partial #1}{\partial #2}}

\newcommand{\C}[1]{\ensuremath{\mc{C}_{#1}}\xspace}   %

\newcommand{\dlp}[1]{\ensuremath{\ell_{#1}}\xspace}

\newcommand{\dltwo}{\dlp{2}}

\newcommand{\dlsp}[1]{\ensuremath{\ell_{\mr{s}#1}}\xspace}

\newcommand{\dlstwo}{\dlsp{2}}

\newcommand{\posClassSym}{\mc{Q}}
\newcommand{\posClass}[1]{\posClassSym_{#1}}
\newcommand{\posClassI}{\posClassSym_\mr{i}}

\newcommand{\reals}{{\mb{R}}}
\newcommand{\preals}{{\reals^+}}
\newcommand{\nnreals}{{\reals_0^+}}

\newcommand{\integ}{\ensuremath{\mb{Z}}\xspace}
\newcommand{\nninteg}{{\ensuremath{\integ_0^+}}}

\newcommand{\sym}{\ensuremath{\mb{S}}\xspace}

\DeclareMathOperator{\col}{col}

\DeclareMathOperator*{\arginf}{arg\,inf}
\DeclareMathOperator{\proj}{\pi}

\newcommand{\norm}[1]{\left\lVert#1\right\rVert}

\newcommand{\st}{x}		          %
\newcommand{\stSet}{\mc{X}}      %
\newcommand{\stSize}{{n_\mr{x}}}   %

\newcommand{\stIc}{\st_0}

\newcommand{\op}{y}		%

\newcommand{\opSize}{{n_\mr{y}}}

\newcommand{\gd}{w}   %
\newcommand{\gdSet}{\mc{W}}
\newcommand{\gdSize}{{n_\mr{w}}}

\newcommand{\gp}{z}   %

\newcommand{\gpSize}{{n_\mr{z}}}

\newcommand{\stEq}{\st_*}

\newcommand{\opEq}{\op_*}
\newcommand{\gdEq}{\gd_*}
\newcommand{\gpEq}{\gp_*}

\newcommand{\eqSet}{\ms{E}}
\newcommand{\stSetEq}{\ms{X}}
\newcommand{\gdSetEq}{\ms{W}}
\newcommand{\gpSetEq}{\ms{Z}}

\newcommand{\eqMap}{\kappa}

\newcommand{\perf}{\gamma}  %

\newcommand{\slackA}{\alpha}

\newcommand{\stMap}{f} 
\newcommand{\opMap}{h}

\newcommand{\B}{\mf{B}}  %
\newcommand{\Bw}{\mf{B}_\mr{\gd}} %

\newcommand{\stip}{\xi}

\newcommand{\var}{\lambda}

\newcommand{\difA}{A_\delta}
\newcommand{\difB}{B_\delta}
\newcommand{\difC}{C_\delta}
\newcommand{\difD}{D_\delta}

\newcommand{\sttran}{\phi_\mr{\st}}

\newcommand{\stdot}{\st_\mr{v}}
\newcommand{\stbdot}{\stb_\mr{v}}

\newcommand{\stdotSet}{\mc{D}}

\newcommand{\ltiB}{B}
\newcommand{\ltiC}{C}

\newcommand{\lpvA}{A}
\newcommand{\lpvB}{B}
\newcommand{\lpvC}{C}
\newcommand{\lpvD}{D}

\newcommand{\sch}{p}		%
\newcommand{\schSet}{\mc{P}}
\newcommand{\schSize}{{n_\mr{p}}}

\newcommand{\schMap}{\eta}

\newcommand{\stSetLPV}{\mpz{X}}
\newcommand{\gdSetLPV}{\mpz{W}}

\newcommand{\storquad}{M}  %

\newcommand{\schdot}{v}
\newcommand{\schdotSet}{\Pi}

\newcommand{\lyapfun}{V}
\newcommand{\supfun}{s}
\newcommand{\storfun}{\mc{V}}

\newcommand{\supQ}{Q}
\newcommand{\supR}{R}
\newcommand{\supS}{S}

\newcommand{\qsr}{(\supQ,\supS,\supR)}

\newcommand{\qsrMat}{\begin{bmatrix} \supQ  & \supS\\ \star & \supR \end{bmatrix}}

\newcommand{\dst}{\st_\delta}

\newcommand{\dgd}{\gd_\delta}
\newcommand{\dgp}{\gp_\delta}

\newcommand{\lyapfunDif}{\lyapfun_\delta}
\newcommand{\supfunDif}{\supfun_\delta}
\newcommand{\storfunDif}{\storfun_\delta}

\newcommand{\lyapfunIncr}{\lyapfun_\mr{i}}
\newcommand{\supfunIncr}{\supfun_\mr{i}}
\newcommand{\storfunIncr}{\storfun_\mr{i}}

\newcommand{\otherTraj}[1]{\expandafter\tilde #1}
\newcommand{\sto}{\otherTraj{\st}}

\newcommand{\gdo}{\otherTraj{\gd}}
\newcommand{\gpo}{\otherTraj{\gp}}

\newcommand{\stoIc}{\sto_0}

\newcommand{\parTraj}[1]{\expandafter\bar #1}
\newcommand{\stb}{\parTraj{\st}}

\newcommand{\gdb}{\parTraj{\gd}}
\newcommand{\gpb}{\parTraj{\gp}}

\newcommand{\stbIc}{\stb_0}

\newcommand{\pathSet}{\Gamma}
\newcommand{\stbmin}{\chi}

\newcommand{\Bv}{\B_\mr{v}}
\newcommand{\Bvw}{\B_\mr{v,\gd}}
\newcommand{\Bvset}[1]{\B_\mr{v,#1}}

\newcommand{\lyapfunShift}{\lyapfun_\mr{s}}
\newcommand{\supfunShift}{\supfun_\mr{s}}
\newcommand{\storfunShift}{\storfun_\mr{s}}

\newcommand{\lyapfunVelo}{\lyapfun_\mr{v}}
\newcommand{\supfunVelo}{\supfun_\mr{v}}
\newcommand{\storfunVelo}{\storfun_\mr{v}}

\newcommand{\velA}{A_\mr{v}}
\newcommand{\velB}{B_\mr{v}}
\newcommand{\velC}{C_\mr{v}}
\newcommand{\velD}{D_\mr{v}}

\newcommand{\gpdot}{\gp_\mr{v}}

\newcommand{\exoA}{A_\mr{\gd}}
\newcommand{\exoBvr}{\mf{W}}

\newcommand{\ddelta}[1]{#1_{\Delta}}
\newcommand{\dnabla}[1]{#1_{\nabla}}

\newcommand{\dtst}{\ddelta{\st}}
\newcommand{\dtgd}{\ddelta{\gd}}
\newcommand{\dtgp}{\ddelta{\gp}}
\newcommand{\bdtst}{\dnabla{\st}}

\newcommand{\bdtgp}{\dnabla{\gp}}

\newcommand{\stpl}{\st_+}
\newcommand{\gdpl}{\gd_+}

\newcommand{\deltB}{\Delta}

\newcommand{\vintA}{\bar{A}_\mr{v}}
\newcommand{\vintB}{\bar{B}_\mr{v}}
\newcommand{\vintC}{\bar{C}_\mr{v}}
\newcommand{\vintD}{\bar{D}_\mr{v}}

\newcommand{\customurl}[2]{\href{#1}{\path{#2}}}
\newcommand{\proofsection}[1]{\subsection{Proof of \cref{#1}}}

\allowdisplaybreaks

\makeatletter
\g@addto@macro\normalsize{%
  \setlength\abovedisplayskip{.5em}
  \setlength\belowdisplayskip{.5em}
  \setlength\abovedisplayshortskip{.5em}%
  \setlength\belowdisplayshortskip{.5em}%
}

\let\olddot\dot 
\gdef\dot{\expandafter\olddot}

\let\oldddot\ddot 
\gdef\ddot{\expandafter\oldddot}

\crefformat{equation}{(#2#1#3)}   %
\crefrangeformat{equation}{(#3#1#4)--(#5#2#6)}
\crefmultiformat{equation}{(#2#1#3)}{ and~(#2#1#3)}{, (#2#1#3)}{, and~(#2#1#3)}
\crefrangemultiformat{equation}{(#3#1#4)--(#5#2#6)}{ and~(#3#1#4)--(#5#2#6)}{, #3#1#4--(#5#2#6)}{, and~#3#1#4--(#5#2#6)}

\crefformat{assumption}{Assumption~#2#1#3}
\crefrangeformat{assumption}{Assumptions~#3#1#4--#5#2#6}
\crefmultiformat{assumption}{Assumptions~#2#1#3}{ and~#2#1#3}{, #2#1#3}{, and~#2#1#3}
\crefrangemultiformat{assumption}{Assumptions~#3#1#4--#5#2#6}{ and~#3#1#4--#5#2#6}{, #3#1#4--#5#2#6}{, and~#3#1#4--#5#2#6}

\hyphenation{a-na-ly-sis}

\begin{document}
\supertitle{Manuscript submitted to IET Control Theory \& Applications}

\title{Convex Equilibrium-Free Stability and Performance Analysis of Discrete-Time Nonlinear Systems$^\dagger$}

\author{\au{P.J.W. Koelewijn\,$^{1}$}, \au{S. Weiland\,$^1$}, \au{R. T\'{o}th\,$^{1,2,*}$}}

\address{%
\add{1}{Control Systems Group, Department of Electrical Engineering, Eindhoven University of Technology, The Netherlands}
\add{2}{Systems and Control Laboratory, HUN-REN Institute for Computer Science and Control, Hungary}
\email{r.toth@tue.nl}}

\begin{abstract}
	This paper considers the equilibrium-free stability and performance analysis of discrete-time nonlinear systems. We consider two types of equilibrium-free notions. Namely, the universal shifted concept, which considers stability and performance w.r.t. all equilibrium points of the system, and the incremental concept, which considers stability and performance between trajectories of the system. In this paper, we show how universal shifted stability and performance of discrete-time systems can be analyzed by making use of the time-difference dynamics. Moreover, we extend the existing results for incremental dissipativity for discrete-time systems based on dissipativity analysis of the differential dynamics to more general state-dependent storage functions for less conservative results. Finally, we show how both these equilibrium-free notions can be cast as a convex analysis problem by making use of the linear parameter-varying framework, which is also demonstrated by means of an example. \vspace{-1mm}
\end{abstract}

\maketitle

\footnotetext[2]{\normalfont This work has received funding from the European Research Council (ERC) under the European Union Horizon 2020 research and innovation programme (grant agreement No 714663) and was also supported by the European Union within the framework of the National Laboratory for Autonomous Systems (RRF-2.3.1-21-2022-00002).}

\section{Introduction}\label{8_sec:introduction}

The analysis of nonlinear systems has been an important topic of research over the last decades as systems have become more complex due to the push for higher performance requirements. While a large part of the existing tools for analyzing stability and performance of nonlinear systems, such as Lyapunov stability \cite{Khalil2002} and dissipativity \cite{Willems1972,VanderSchaft2017}, are very powerful, they only analyze these properties w.r.t. a single (equilibrium) point in the state-space representation, often the origin of the state-space. This poses limitations in cases when analyzing the stability and performance of a system w.r.t. multiple equilibria or even w.r.t. multiple trajectories, such as in reference tracking and disturbance rejection, is required. Therefore, in recent years, equilibrium-free stability and performance methods have become increasingly popular. These equilibrium-free methods include  concepts such as universal shifted and incremental stability and performance. Universal shifted stability and performance, also referred to as equilibrium independent stability and performance, considers these notions w.r.t. all equilibrium points of the system \cite{Simpson-Porco2019, Hines2011, Koelewijn2023}. Another, stronger, equilibrium-free concept is that of incremental stability and performance, which considers stability and performance between trajectories of the systems \cite{Verhoek2020, Pavlov2008, Forni2013a, Angeli2002,Tran2016}. This means that in case of incremental stability, trajectories of the system converge towards each other. Often, incremental stability and performance is analyzed through so-called contraction analysis \cite{Forni2013a, Tran2018, Lohmiller1998}.

While \emph{Continuous-Time} (CT) dynamical systems are often used for analysis and control, the recent surge of developments on data and learning based analysis and control methods heavily relies on \emph{Discrete-Time} (DT) formulations.
Also, controllers are almost exclusively implemented with digital hardware, hence analysis of the implemented discretized form of the controller with the actuator and sampling dynamics requires DT analysis tools, which are also the first step towards reliable DT synthesis methods. Therefore, analysis of DT nonlinear systems is inevitably important.

 For CT nonlinear systems, it has been shown how the velocity form of the system, i.e., the time-differentiated dynamics, imply universal shifted stability and performance \cite{Koelewijn2023}. In DT, the time-difference dynamics, analogous to the time-differentiated dynamics in CT, have primarily received attention in the context (nonlinear) model predictive control methods. In these works, the time-difference dynamics have been mainly used to realize offset free tracking of reference signals \cite{Cisneros2016,Ferramosca2009}. 
However, to the authors' knowledge, there are no results in literature which connect the time-difference dynamics to stability and performance guarantees w.r.t. equilibrium points in the nonlinear context and how to cast the corresponding analysis problem as computable tests. Therefore, in this paper, we show how these concepts are connected and how the universal shifted stability and performance analysis problem of DT systems can be solved as a convex optimization problem.

Similarly, w.r.t. incremental stability and performance, most literature has focussed on CT systems. For CT systems, it has been shown how the so-called differential form -- representing the dynamics of the variations along trajectories -- can be used in order to imply incremental stability, dissipativity, and performance properties \cite{Verhoek2020, Forni2013a}. There have been some results on incremental and contraction based stability of DT systems, see e.g. \cite{Tran2016,Tran2018}, or focussing on Lipschitz properties \cite{Revay2023, Wang2023,Pauli2021,Fazlyab2019}. Moreover, for DT nonlinear systems, the link to incremental dissipativity based on the differential form has been made in \cite{Koelewijn2021a} for quadratic supply functions. However, the work in \cite{Koelewijn2021a} only considers quadratic storage functions depending on a constant matrix. This is more conservative compared to results available for CT systems, where the links between dissipativity of the differential form and incremental dissipativity has been made for more general storage functions. Therefore, in this paper, we will provide a novel generalization of the CT incremental stability and performance results for DT systems and show how these results can be cast as a convex optimization problem.

To summarize, the main contributions of this paper are as follows
\begin{enumerate}[label={\bfseries C\arabic*:\;},ref={C\arabic*}, left=\parindent] \vspace{-4mm}
\item Show how stability and performance properties of the time-difference dynamics of a DT nonlinear system imply universal shifted stability and performance properties of  the system. (\cref{8_thm:velotoshiftstab,8_thm:veloshiftperf})
\item Extend existing CT incremental dissipativity analysis results for DT nonlinear systems based on the differential form, enabling the use of state-dependent storage functions. (\cref{7_an_thm:ind-incr-dissip})
\item Show that both the universal shifted and incremental analysis problems can be cast as an analysis problem of a \emph{Linear Parameter-Varying} (LPV) representation. This allows these problems to be solved via convex optimization in terms of \emph{Semi-Definite Programs} (SDPs), providing computable tests for equilibrium-free stability and performance analysis of DT nonlinear systems. (\cref{7_an_thm:diffdissipLPV})
\end{enumerate} \vskip -3mm

The paper is structured as follows, in \cref{8_sec:shiftanalysis}, we give an overview of the definitions of universal shifted and incremental stability and performance. Next, in \cref{8_sec:veloanalysis}, we will introduce velocity based analysis for DT nonlinear systems and show how it can be used in order to imply universal shifted stability and performance properties. In \cref{sec:diffanalysis}, we will show how dissipativity analysis of the differential form in DT can be used in order to imply incremental dissipativity. Then, in \cref{sec:veloanddiff}, we discuss the connections between the velocity and differential based analysis results and the relation between universal shifted and incremental stability and performance. \cref{7_an_sec:lpv} shows how the velocity and differential analysis results can be cast as analysis problems of an LPV representation, enabling to solve the equilibrium-free analysis problem via convex optimization in terms of SDPs. In \cref{7_an_sec:example}, the usefulness of the developed analysis tools is demonstrated on an example. Finally, the conclusions are drawn in \cref{8_sec:concl}.\\

\noindent \textbf{Notation:} 
$\reals$ is the set of real numbers, while $\nnreals$ is the set of non-negative reals. $\nninteg$ is the set of non-negative integers. We denote by $\sym^n$ the set of real symmetric matrices of size $n\times n$. Denote the projection operation by $\proj$, s.t. for $\mc{D}=\mc{A}\times\mc{B}$, $a\in\proj_\mr{a}\mc{D}$ if and only if 
$\exists b \in \mc{B}$ s.t. $(a,b)\in\mc{D}$. For a signal $\gd:\nninteg\to\reals^n$ and a $\gd_*\in\reals^n$, denote by $\gd\equiv\gd_*$ that $\gd(t)=\gd_*$ for all $t\in\nninteg$. $\C{n}$ is the class of $n$-times continuously differentiable functions. A function $V:\reals^n\to\reals$ belongs to the class $\posClass{\stEq}$, if it is positive definite and decrescent w.r.t. $\stEq\in\reals^n$ (see \cite[Definition 3.3]{Scherer2015}). A function $\lyapfunIncr:\reals^n\times\reals^n\to\nnreals$ is in $\posClassI$, if 
$\lyapfunIncr(\cdot,\sto)\in\posClass{\sto}$ for all $\sto\in\reals^n$ and $\lyapfunIncr(\st,\cdot)\in\posClass{\st}$ for all $\st\in\reals^n$. $\norm{\centerdot}$ is the Euclidean (vector) norm. We use $(\star)$ to denote a symmetric term {in a quadratic expression, e.g., $(\star)^\top  Q(a-b) = (a-b)^\top Q(a-b)$ for $Q\in\sym^n$ and $a,b\in\reals^{n}$}. The notation $A\succ 0$ ($A\succeq 0$) indicates that $A\in\sym^n$ is positive (semi-) definite, while $A\prec 0$ ($A\preceq 0$) {denotes a} negative (semi-)definite $A\in\sym^n$. Moreover, $\col(x_1, \dots ,x_n)$ denotes the column vector $[x_1^\top\ \cdots\ x_n^\top]^\top$. 

\section{Equilibrium-Free Stability and Performance}\label{8_sec:shiftanalysis}
\subsection{Nonlinear system}
We consider DT nonlinear dynamical systems given in the form of
\begin{subequations}\label{8_eq:nonlinsys}\label{7_an_eq:nl}
\begin{align}
	\st(t+1) &= \stMap(\st(t),\gd(t));\label{8_eq:nonlinsyssteq}\\
	\gp(t) &= \opMap(\st(t),\gd(t));\label{8_eq:nonlinsysopeq}
\end{align}
\end{subequations}
where $t\in\nninteg$ is the discrete-time, $\st(t)\in\reals^\stSize$ is the state with initial condition $\st(0)=\stIc\in\reals^\stSize$, while $\gd(t)\in\reals^\gdSize$ is the input and $\gp(t)\in\reals^\gpSize$ is the output of the system. Moreover, the functions $\stMap:\reals^\stSize\!\times\reals^\gdSize\to\reals^\stSize$ and $\opMap:\reals^\stSize\!\times\reals^\gdSize\to\reals^\opSize$ are considered to be in $\C{1}$. We define the set of solutions of \cref{8_eq:nonlinsys} as
\begin{equation}\label{8_eq:behavior}
	\B := \lbrace (\st,\gd,\gp)\in(\reals^\stSize\!\times\reals^\gdSize\!\times\reals^\gpSize)^{\nninteg}\mid (\st,\gd,\gp)\text{ satisfy \cref{8_eq:nonlinsys}} \rbrace,
\end{equation}
called the behavior of \cref{8_eq:nonlinsys}. For a specific input $\bar\gd\in(\reals^\gdSize)^{\nninteg}$, %
\begin{equation}
	\Bw(\bar\gd) := \{ (\st,\bar\gd,\gp)\in\B\},
\end{equation}
denotes the compatible solution trajectories of  \cref{8_eq:nonlinsys}.
We also define the state transition map $\sttran:\nninteg\times\nninteg\times\reals^\stSize\!\times(
\reals^\gdSize)^{\nninteg}\to\reals^\stSize$, such that $\st(t) = \sttran(t,0,\stIc,\gd)$. Moreover, we assume that all solutions are forward complete and unique.  

For %
\cref{8_eq:nonlinsys}, the equilibrium points satisfy
\begin{subequations}\label{8_eq:equilibriumeq}
\begin{align}
	\stEq &= \stMap(\stEq,\gdEq);\label{8_eq:eqx}\\
	\gpEq &= \opMap(\stEq,\gdEq);
\end{align}
\end{subequations}
where $\stEq\in\reals^\stSize$, $\gdEq\in\reals^\gdSize$ and $\gpEq\in\reals^\gpSize$. The set of equilibrium points of the nonlinear system is then defined as 
\begin{equation}\label{8_eq:eqsetdef}
	\eqSet\!:=\!\lbrace (\stEq,\gdEq,\gpEq)\in\reals^\stSize\!\times\! \reals^\gdSize\!\times\!\reals^\gpSize\mid  (\stEq,\gdEq,\gpEq)\text{ satisfy \cref{8_eq:equilibriumeq}}\rbrace.
\end{equation}
Define $\stSetEq := \proj_\mr{\stEq}\eqSet$, $\gdSetEq := \proj_\mr{\gdEq}\eqSet$, and $\gpSetEq := \proj_\mr{\gpEq}\eqSet$. For our results concerning universal shifted stability and performance, we take the following assumption:
\begin{assumption}[Unique equilibria]\label{4_assum:uniqueEq}
	For the nonlinear system given by \cref{8_eq:nonlinsys} with equilibrium points $(\stEq,\gdEq,\gpEq)\in\eqSet$, we assume that there exists a bijective map $\eqMap:\gdSetEq\to \stSetEq$ such that $\stEq = \eqMap(\gdEq)$, for all $(\stEq,\gdEq)\in \proj_\mr{\stEq,\gdEq}\eqSet$. This means that, for each $\gdEq\in\gdSetEq$, there is a unique corresponding $\stEq\in\stSetEq$, and vice versa, for each $\stEq\in\stSetEq$ there is a unique corresponding $\gdEq\in\gdSetEq$.
\end{assumption}
Note that this assumption is only taken in order to simplify the discussion.

\subsection{Equilibrium-free stability}
As aforementioned, in this paper, we will consider two forms of equilibrium-free stability. Namely, universal shifted stability and incremental stability. In DT, we consider the same definition for universal shifted stability as has been considered in CT in \cite{Koelewijn2023}, i.e., a system given by \cref{8_eq:nonlinsys} is universally shifted (asymptotically) stable if it is (asymptotically) stable w.r.t. to all its forced equilibrium points. More concretely:
\begin{definition}[Universal shifted stability \cite{Koelewijn2023}]\label{4_def:shiftedstab}
	The nonlinear system given by \cref{8_eq:nonlinsys} is \emph{Universally Shifted Stable} (USS), if for each $\epsilon>0$ and $\stEq \in\stSetEq$ with corresponding $\gdEq\in\gdSetEq$, i.e., $(\stEq,\gdEq)\in \proj_\mr{\stEq,\gdEq}\eqSet$, there exists a $\delta(\epsilon)>0$ s.t. any $\st\in\Bw(\gd\equiv\gdEq)$ with $\norm{\st(0)-\stEq} < \delta(\epsilon)$ satisfies %
	$\norm{\st(t)-\stEq}< \epsilon$ for all $t\in\nninteg$. %
	The system is \emph{Universally Shifted Asymptotically Stable} (USAS) if it is USS and, for each $(\stEq,\gdEq)\in \proj_\mr{\stEq,\gdEq}\eqSet$, there exists a $\delta > 0$ such that for $\gd\equiv\gdEq$ we have that $\norm{\st(0)-\stEq} < \delta(\epsilon)$ implies $\lim_{t\to\infty}\norm{\sttran(t,0,\st(0),\gd)-\stEq}=0$.
\end{definition}

To the authors' knowledge, a Lyapunov based theorem to imply US(A)S for DT systems does not yet exist literature. Therefore, we provide the following novel result:
\begin{theorem}[Universal shifted Lyapunov stability]\label{8_thm:shiftlyapstab}
	The nonlinear system given by \cref{8_eq:nonlinsys} is USS, if there exists a function $\lyapfunShift:\reals^\stSize\!\times\gdSetEq\to\nnreals$ with $\lyapfunShift(\cdot,\gdEq)\in\C{1}$ and $\lyapfunShift(\cdot,\gdEq)\in\posClass{\stEq}$ for every $(\stEq,\gdEq)\in\proj_\mr{\stEq,\gdEq}\eqSet$, such that
		\begin{equation}\label{8_eq:shiftedstability}
		\lyapfunShift(\st(t+1),\gdEq)-\lyapfunShift(\st(t),\gdEq)\leq 0,
	\end{equation}
	holds for every $(\stEq,\gdEq)\in\proj_\mr{\stEq,\gdEq}\eqSet$ and for all $t\in\nninteg$ and $\st\in\proj_\mr{\st}\Bw(\gd\equiv \gdEq)$. If \cref{8_eq:shiftedstability} holds, but with strict inequality except when $\st(t)=\stEq$, then the system is USAS.
\end{theorem}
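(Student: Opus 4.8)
The plan is to reduce the statement to the classical discrete-time Lyapunov stability theorem, applied fiber by fiber to the family of autonomous systems obtained by freezing the input at each equilibrium value. Concretely, I would fix an arbitrary pair $(\stEq,\gdEq)\in\proj_\mr{\stEq,\gdEq}\eqSet$ and consider the autonomous recursion $\st(t+1)=\stMap(\st(t),\gdEq)$, whose state trajectories are precisely the elements of $\proj_\mr{\st}\Bw(\gd\equiv\gdEq)$. Since $\stEq=\stMap(\stEq,\gdEq)$ by \cref{8_eq:eqx}, the point $\stEq$ is an equilibrium of this autonomous system, and the hypotheses $\lyapfunShift(\cdot,\gdEq)\in\posClass{\stEq}$ together with the one-step decrease \cref{8_eq:shiftedstability} make $\lyapfunShift(\cdot,\gdEq)$ a bona fide Lyapunov function for it. The essential point is that \cref{8_eq:shiftedstability} is assumed to hold for \emph{every} equilibrium pair under a single uniform quantifier, so that stability of each $\stEq$ follows from one and the same argument, which is exactly the content of USS in \cref{4_def:shiftedstab}.

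For the non-strict case I would first use $\lyapfunShift(\cdot,\gdEq)\in\posClass{\stEq}$ to extract, on a neighborhood of $\stEq$, class-$\classK$ bounds $\kfun_1(\norm{\st-\stEq})\le\lyapfunShift(\st,\gdEq)\le\kfun_2(\norm{\st-\stEq})$, where positive definiteness supplies $\kfun_1$ and the decrescent property supplies $\kfun_2$ (cf. \cite[Definition 3.3]{Scherer2015}). Given $\epsilon>0$, I would pick $r\le\epsilon$ small enough that both bounds are valid on $\norm{\st-\stEq}\le r$, and then choose $\delta(\epsilon)>0$ so that $\kfun_2(\delta(\epsilon))<\kfun_1(r)$. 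The standard level-set argument then closes the loop: if $\norm{\st(0)-\stEq}<\delta(\epsilon)$, then $\lyapfunShift(\st(0),\gdEq)<\kfun_1(r)$, and since \cref{8_eq:shiftedstability} makes $t\mapsto\lyapfunShift(\st(t),\gdEq)$ non-increasing along the trajectory, we keep $\kfun_1(\norm{\st(t)-\stEq})\le\lyapfunShift(\st(t),\gdEq)<\kfun_1(r)$, hence $\norm{\st(t)-\stEq}<r\le\epsilon$ for all $t\in\nninteg$. As $(\stEq,\gdEq)$ was arbitrary, this is precisely USS.

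For the USAS claim I would retain the $\delta$ from the stability part so that any trajectory with $\norm{\st(0)-\stEq}<\delta$ stays in the compact ball $\norm{\st-\stEq}\le r$. The sequence $\lyapfunShift(\st(t),\gdEq)$ is then non-increasing and bounded below by $0$, hence converges to some $c\ge 0$, and it remains to prove $c=0$, equivalently $\lim_{t\to\infty}\norm{\sttran(t,0,\st(0),\gd)-\stEq}=0$. This is the main obstacle: the strict inequality in \cref{8_eq:shiftedstability} is only pointwise, so I must rule out convergence to a nonzero level, and I would do so by a compactness argument relying on $\stMap,\lyapfunShift(\cdot,\gdEq)\in\C{1}$. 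Suppose $c>0$; the upper bound $\kfun_2$ forces $\norm{\st(t)-\stEq}\ge\kfun_2^{-1}(c)=:\mu>0$, so the trajectory is confined to the compact annulus $\{\mu\le\norm{\st-\stEq}\le r\}$, on which the continuous function $\lyapfunShift(\stMap(\st,\gdEq),\gdEq)-\lyapfunShift(\st,\gdEq)$ attains a strictly negative maximum $-\eta<0$. Summing \cref{8_eq:shiftedstability} would then give $\lyapfunShift(\st(t),\gdEq)\le\lyapfunShift(\st(0),\gdEq)-\eta t$, contradicting non-negativity; hence $c=0$, and the lower bound $\kfun_1$ yields $\norm{\st(t)-\stEq}\to 0$. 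Since the pair $(\stEq,\gdEq)$ was arbitrary, USAS follows.
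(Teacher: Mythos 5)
Your route is the same as the paper's: fix $(\stEq,\gdEq)\in\proj_\mr{\stEq,\gdEq}\eqSet$, freeze the input at $\gdEq$, note that $\stEq$ is an equilibrium of the autonomous recursion $\st(t+1)=\stMap(\st(t),\gdEq)$ and that $\lyapfunShift(\cdot,\gdEq)$ is a Lyapunov function for it, and conclude stability equilibrium-by-equilibrium, which is exactly USS/USAS. The only difference is that the paper stops at this reduction and cites the classical discrete-time Lyapunov theorem, whereas you re-prove that theorem inline; the gap lies in this re-derivation.

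Concretely, in the USS part you establish $\kfun_1(\norm{\st-\stEq})\le\lyapfunShift(\st,\gdEq)\le\kfun_2(\norm{\st-\stEq})$ only on the ball $\norm{\st-\stEq}\le r$, and then infer $\norm{\st(t)-\stEq}<r$ from $\lyapfunShift(\st(t),\gdEq)<\kfun_1(r)$ by applying the lower bound at $\st(t)$. That application presupposes that $\st(t)$ already lies in the ball, which is precisely what you are trying to prove: as written, the argument is circular. In continuous time this circularity is harmless because a trajectory cannot leave the ball without hitting the sphere $\norm{\st-\stEq}=r$, where $V\ge\kfun_1(r)$; in discrete time the state can jump over that sphere, and the decrease \cref{8_eq:shiftedstability} alone does not exclude this. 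Indeed, for a merely locally positive definite $V$ that decays at infinity, e.g. $V(\st)=\norm{\st-\stEq}^2/(1+\norm{\st-\stEq}^4)$ in one dimension together with the (discontinuous) map $f(x)=1/x$, $f(0)=0$, one has $V(f(x))=V(x)$ everywhere, yet the equilibrium is unstable. The missing ingredient is the continuity of $\stMap(\cdot,\gdEq)$, which is available since $\stMap\in\C{1}$: choose $\rho\in(0,r]$ such that $\norm{\st-\stEq}<\rho\Rightarrow\norm{\stMap(\st,\gdEq)-\stEq}<r$, take $\delta\le\rho$ with $\kfun_2(\delta)<\kfun_1(\rho)$, and induct: if $\st(t)$ lies in the $\rho$-ball, then $\st(t+1)$ lies in the $r$-ball, where the lower bound is valid and, combined with $\kfun_1(\norm{\st(t+1)-\stEq})\le\lyapfunShift(\st(t+1),\gdEq)\le\lyapfunShift(\st(0),\gdEq)<\kfun_1(\rho)$, forces $\st(t+1)$ back into the $\rho$-ball. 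With this repair the USS part closes, and your USAS argument (monotone limit $c$, confinement to a compact annulus, strictly negative maximum of the continuous one-step difference, summation contradiction) is correct as it stands, since it only uses that the trajectory remains in the $r$-ball, the pointwise strict decrease (which follows from the trajectory-wise hypothesis because every point initializes some trajectory), and compactness.
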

\begin{proof}
See \cref{8_pf:shiftlyapstab}
\end{proof}

Similarly to how standard Lyapunov stability gives rise to invariant sets around stable equilibrium points of a system, we can also extend the notion of invariance for universal shifted Lyapunov stability: 
\begin{theorem}[Universal shifted invariance]\label{thm:shiftinvar}
	For a nonlinear system given by \cref{8_eq:nonlinsys}, for which there exists a function $\lyapfunShift:\reals^\stSize\!\times\gdSetEq\to\nnreals$ such that it is USS in terms of \cref{8_thm:shiftlyapstab},
	any level set:
	\begin{equation}\label{eq:shiftinvarset}
		\mb{X}_{\gdEq,\gamma}=\{x\in \reals^\stSize \mid \lyapfunShift(x,\gdEq) \leq \gamma \},
	\end{equation}
	with $\gamma\geq0$ is invariant, meaning that 
	\begin{equation}\label{eq:sttranshift}
		\sttran(t,0,\stIc,\gd\equiv\gdEq) \in \mb{X}_{\gdEq,\gamma},
	\end{equation}
	for all $t\in\nninteg$, $\stIc\in\mb{X}_{\gdEq,\gamma}$.
\end{theorem}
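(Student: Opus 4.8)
The plan is to argue by monotonicity of the storage-like function $\lyapfunShift(\cdot,\gdEq)$ along the frozen-input flow, followed by a one-line induction. First I would fix a pair $(\stEq,\gdEq)\in\proj_\mr{\stEq,\gdEq}\eqSet$ together with a level $\gamma\geq 0$, and pick an arbitrary initial condition $\stIc\in\mb{X}_{\gdEq,\gamma}$, i.e.\ with $\lyapfunShift(\stIc,\gdEq)\leq\gamma$. The object to track is the trajectory generated by the constant input $\gd\equiv\gdEq$, namely $\st(t)=\sttran(t,0,\stIc,\gd\equiv\gdEq)$; by the standing forward-completeness and uniqueness assumption this trajectory is well defined for all $t\in\nninteg$ and satisfies $\st\in\proj_\mr{\st}\Bw(\gd\equiv\gdEq)$, so that it lies exactly in the class of signals for which the hypothesis of \cref{8_thm:shiftlyapstab} was imposed.

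The key step is to invoke the decrease inequality \cref{8_eq:shiftedstability} supplied by the USS assumption: since $\st\in\proj_\mr{\st}\Bw(\gd\equiv\gdEq)$, we have $\lyapfunShift(\st(t+1),\gdEq)-\lyapfunShift(\st(t),\gdEq)\leq 0$ for every $t\in\nninteg$, which shows that the sequence $t\mapsto\lyapfunShift(\st(t),\gdEq)$ is non-increasing. A straightforward induction on $t$ then yields $\lyapfunShift(\st(t),\gdEq)\leq\lyapfunShift(\st(0),\gdEq)=\lyapfunShift(\stIc,\gdEq)\leq\gamma$ for all $t\in\nninteg$. By the definition of the level set in \cref{eq:shiftinvarset}, this is precisely the statement that $\st(t)=\sttran(t,0,\stIc,\gd\equiv\gdEq)\in\mb{X}_{\gdEq,\gamma}$ for all $t$, establishing \cref{eq:sttranshift} and hence the claimed invariance.

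There is essentially no hard analytic obstacle here: the argument is the discrete-time analogue of the classical fact that sublevel sets of a Lyapunov function are forward invariant, and the nontrivial content has already been absorbed into \cref{8_thm:shiftlyapstab}. The only point requiring a little care is bookkeeping, ensuring that the frozen-input trajectory under consideration genuinely belongs to $\proj_\mr{\st}\Bw(\gd\equiv\gdEq)$ so that \cref{8_eq:shiftedstability} is applicable to it, and observing that the conclusion holds for \emph{every} admissible pair $(\stEq,\gdEq)$ and every $\gamma\geq 0$ uniformly, since the decrease inequality is assumed to hold for each such pair. No strict-decrease or decrescence property of $\lyapfunShift$ is needed for invariance, so the weaker (non-strict) hypothesis of the USS case already suffices.
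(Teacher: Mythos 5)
Your proposal is correct and follows essentially the same route as the paper's own proof: both apply the decrease inequality \cref{8_eq:shiftedstability} along the frozen-input trajectory in $\proj_\mr{\st}\Bw(\gd\equiv\gdEq)$ and chain it inductively to get $\lyapfunShift(\st(t),\gdEq)\leq\lyapfunShift(\stIc,\gdEq)\leq\gamma$, which is exactly the level-set invariance claim. Your additional bookkeeping remarks (forward completeness, uniformity over $(\stEq,\gdEq)$ and $\gamma$, and that no strict decrease is needed) are consistent with, and slightly more explicit than, the paper's argument.
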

\begin{proof}
See \cref{pf:USinvariance}
\end{proof}
Note that this notion of universal shifted invariance can be interpreted as the existence of standard Lyapunov based invariant sets around each (forced) equilibrium point $(\stEq, \gdEq)$ of the system.

Incremental stability is a stronger notion than universal shifted stability and considers that all trajectories should be stable w.r.t. each other, meaning that all trajectories converge towards each other. Therefore, incremental stability also implies universal shifted stability. For incremental stability, various definitions exist in the literature, such as in \cite{Tran2016, Tran2018}. Here, we will consider the following slightly more general definition:
\begin{definition}[Incremental stability]\label{4_def:incrstab}
	The nonlinear system given by \cref{8_eq:nonlinsys} is \emph{Incrementally Stable} (IS), if for each $\epsilon>0$ there exists a $\delta(\epsilon)>0$ s.t. any $\st,\sto\in\Bw(\gd)$ with $\norm{\st(0)-\sto(0)} < \delta(\epsilon)$ satisfies $\norm{\st(t)-\sto(t)}< \epsilon$ for all $t\in\nninteg$. The system is \emph{Incrementally Asymptotically Stable} (IAS) if it is IS and there exists a $\delta > 0$ such that $\norm{\st(0)-\sto(0)} < \delta(\epsilon)$ implies that $\lim_{t\to\infty}\norm{\sttran(t,0,\st(0),\gd)-\sttran(t,0,\sto(0),\gd)}=0$.
\end{definition}
Similar as for US(A)S, a Lyapunov Theorem for I(A)S can also be formulated, which we we adopt from \cite{Tran2016,Tran2018}:
\begin{theorem}[Incremental Lyapunov stability \cite{Tran2016,Tran2018}:]\label{7_an_thm:incrlyap}
	The nonlinear system given by \cref{7_an_eq:nl} is IS, if there exists an incremental Lyapunov function $\lyapfunIncr:\reals^\stSize\!\times\reals^\stSize\!\to\nnreals$ with $\lyapfunIncr\in\C{1}$ and $\lyapfunIncr\in\posClassI$, such that
\begin{equation}\label{7_an_eq:incrlyap}
	\lyapfunIncr(\st(t+1),\sto(t+1))-\lyapfunIncr(\st(t),\sto(t))\leq 0,
\end{equation}
for all $t\in\nninteg$ and $\st,\sto\in\proj_{\mr{\st}}\Bw(\gd)$ under all measurable and bounded $\gd\in(\reals^\gpSize)^{\nninteg}$. Moreover, the nonlinear system is IAS, if \cref{7_an_eq:incrlyap} holds, but with strict inequality except when $\st(t)=\sto(t)$.
\end{theorem}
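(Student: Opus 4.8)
The plan is to run the classical comparison-function argument, but with the incremental Lyapunov function evaluated along a \emph{pair} of trajectories rather than a single one. Fix an input $\gd$ and two solutions $\st,\sto\in\proj_\mr{\st}\Bw(\gd)$, and define the scalar signal $W(t):=\lyapfunIncr(\st(t),\sto(t))$. Because $\lyapfunIncr\in\posClassI$, the map $\lyapfunIncr(\cdot,\sto)$ is positive definite and decrescent w.r.t.\ $\sto$ (and symmetrically in the first argument), so by \cite[Definition~3.3]{Scherer2015} there exist class-$\classK$ functions $\kfun_1,\kfun_2$ with
\begin{equation*}
	\kfun_1(\norm{\st-\sto})\leq\lyapfunIncr(\st,\sto)\leq\kfun_2(\norm{\st-\sto}),
\end{equation*}
where the two-sided structure of $\posClassI$ is what lets these bounds be taken uniformly in $\sto$ (resp.\ $\st$). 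Thus $W$ measures the inter-trajectory distance up to class-$\classK$ reparametrisation.

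For the IS claim, I would first note that \cref{7_an_eq:incrlyap} says exactly $W(t+1)-W(t)\leq0$, i.e.\ $W$ is non-increasing, so $W(t)\leq W(0)$ for all $t\in\nninteg$. Chaining this with the sandwich bounds gives
\begin{equation*}
	\kfun_1(\norm{\st(t)-\sto(t)})\leq W(t)\leq W(0)\leq\kfun_2(\norm{\st(0)-\sto(0)}),
\end{equation*}
hence $\norm{\st(t)-\sto(t)}\leq\kfun_1^{-1}\!\big(\kfun_2(\norm{\st(0)-\sto(0)})\big)$ for all $t$. Given $\epsilon>0$, choosing $\delta(\epsilon):=\kfun_2^{-1}(\kfun_1(\epsilon))$ then yields $\norm{\st(0)-\sto(0)}<\delta(\epsilon)\Rightarrow\norm{\st(t)-\sto(t)}<\epsilon$ for all $t$, which is precisely IS as in \cref{4_def:incrstab}.

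For the IAS claim, the strict version of \cref{7_an_eq:incrlyap} makes $W$ strictly decreasing whenever $\st(t)\neq\sto(t)$; being non-increasing and bounded below by $0$, it converges to some $c\geq0$. It remains to show $c=0$, since then $\kfun_1(\norm{\st(t)-\sto(t)})\leq W(t)\to c=0$ forces $\norm{\st(t)-\sto(t)}\to0$. I would argue by contradiction: if $c>0$, then $W(t)-W(t+1)\to0$, while the one-step decrease $g(a,b,\gd):=\lyapfunIncr(\stMap(a,\gd),\stMap(b,\gd))-\lyapfunIncr(a,b)$ is continuous (since $\stMap,\lyapfunIncr\in\C{1}$) and strictly negative off the diagonal. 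Evaluating along a subsequence on which $(\st(t_k),\sto(t_k),\gd(t_k))$ converges to a limit $(\hat a,\hat b,\hat\gd)$ with $\hat a\neq\hat b$ would then give $g(\st(t_k),\sto(t_k),\gd(t_k))\to g(\hat a,\hat b,\hat\gd)<0$, contradicting $g(\st(t),\sto(t),\gd(t))=W(t+1)-W(t)\to0$.

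The main obstacle is exactly this last step: extracting that convergent subsequence requires the trajectory pair $(\st(t),\sto(t))$ to remain in a compact set, which the strict inequality alone does not guarantee, since the individual states may drift to infinity even while their difference stays bounded (IS confines only $\norm{\st(t)-\sto(t)}$, not $\st(t)$ itself). I would close this gap either by restricting attention to trajectories evolving in a compact forward-invariant region — on which continuity of $g$ yields a uniform decrease $g\leq-\eta<0$ on the annular set $\{c\leq W\leq W(0)\}$ and hence the desired contradiction — or, more cleanly, by strengthening \cref{7_an_eq:incrlyap} to a uniform $\classK$-decrease $\lyapfunIncr(\st(t+1),\sto(t+1))-\lyapfunIncr(\st(t),\sto(t))\leq-\kfun_3(\norm{\st(t)-\sto(t)})$, which makes $\norm{\st(t)-\sto(t)}\to0$ immediate from summability of $\kfun_3(\norm{\st(t)-\sto(t)})$ against the telescoping bound $\sum_t \kfun_3(\norm{\st(t)-\sto(t)})\leq W(0)$. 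The IS/IAS distinction otherwise parallels the standard DT Lyapunov proof applied on the product state space, cf.\ \cite{Tran2016,Tran2018}.
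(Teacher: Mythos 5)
The paper does not actually prove this theorem: it is imported from \cite{Tran2016,Tran2018} and the text explicitly defers to those references, so your attempt can only be measured against the standard argument there, whose architecture your IS part follows. The difficulty is the step you dismiss as automatic: the claim that ``the two-sided structure of $\posClassI$'' lets the bounds $\kfun_1(\norm{\st-\sto})\leq\lyapfunIncr(\st,\sto)\leq\kfun_2(\norm{\st-\sto})$ be taken uniformly in the base point. As defined in this paper, $\posClassI$ is purely pointwise: for each fixed $\sto$ one has $\lyapfunIncr(\cdot,\sto)\in\posClass{\sto}$, which yields $\classK$ bounds $\kfun_{1,\sto},\kfun_{2,\sto}$ that may depend on $\sto$, and the symmetric condition in the other argument does not remove this dependence. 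For instance, $\lyapfunIncr(\st,\sto)=\norm{\st-\sto}^2/(1+\norm{\st}^2\norm{\sto}^2)$ satisfies both pointwise conditions, yet admits no uniform $\classK$ lower bound, since for fixed $\norm{\st-\sto}=r>0$ the value tends to $0$ as $\norm{\sto}\to\infty$. This matters because \cref{4_def:incrstab} demands a single $\delta(\epsilon)$ valid for all trajectory pairs, and because in your chain $\kfun_1(\norm{\st(t)-\sto(t)})\leq W(t)$ the lower bound must be invoked at $\sto(t)$ for every $t$, i.e., at points you do not control. With only the pointwise reading, your argument delivers a $\delta$ depending on the trajectory $\sto$, which is not IS. The cited works close exactly this hole by building uniform bounds $\kfun_1(\norm{\st-\sto})\leq\lyapfunIncr(\st,\sto)\leq\kfun_2(\norm{\st-\sto})$ into the definition of an incremental Lyapunov function; the theorem has to be read that way for your (and their) proof to go through.

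For IAS you have correctly diagnosed the second gap yourself: the contradiction argument needs precompactness of $(\st(t),\sto(t),\gd(t))$, and nothing in the hypotheses provides it, since IS confines only the difference $\st(t)-\sto(t)$ while the pair may drift to infinity along the diagonal, where the one-step decrement $g$ can vanish even off the diagonal. Your two repairs (a compact forward-invariant region, or strengthening \cref{7_an_eq:incrlyap} to a decrease $\leq-\kfun_3(\norm{\st(t)-\sto(t)})$) are genuine strengthenings of the stated hypotheses, so as written the IAS half remains unproven. This is the same uniformity issue in different clothing, and it is resolved the same way: under the uniform-bound reading of $\posClassI$ together with a uniform $\classK$-type decrease --- which is what \cite{Tran2016,Tran2018} effectively assume --- your telescoping argument $\sum_{t}\kfun_3(\norm{\st(t)-\sto(t)})\leq W(0)$, forcing $\norm{\st(t)-\sto(t)}\to0$, is the clean way to finish, and it needs no compactness at all.
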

See \cite{Tran2016,Tran2018} for the proof. Finally, we can also define a notion of invariance for incremental stability:
\begin{theorem}[Incremental invariance]\label{thm:incrinvar}
For a nonlinear system given by \cref{8_eq:nonlinsys}, for which there exists an incremental Lyapunov function $\lyapfunIncr:\reals^\stSize\!\times\reals^\stSize\!\to\nnreals$ such that it is IS, and for any given trajectory $(\sto,\gd)\in\proj_\mr{\st,\gd}\in\Bw(\gd)$ for which $\gd$ is bounded and measurable, a $\gamma\geq0$ defines a time-varying invariant set:
\begin{equation}\label{eq:incrinvarset}
\mb{X}_{\sto,\gamma}(t) = \left\lbrace x \in\reals^\stSize \mid \lyapfunIncr(\st,\sto(t))\leq\gamma\right\rbrace	,
\end{equation}
i.e., an invariant tube, such that 
\begin{equation}\label{eq:sttranincr}
		\sttran(t,0,\stIc,\gd) \in \mb{X}_{\sto,\gamma}(t),
	\end{equation}
for all $t\in\nninteg$ and $\stIc\in\mb{X}_{\sto,\gamma}(0)$.
\end{theorem}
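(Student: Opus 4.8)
The plan is to mirror the argument behind universal shifted invariance (\cref{thm:shiftinvar}), exploiting the fact that \cref{7_an_thm:incrlyap} renders the incremental Lyapunov function non-increasing along \emph{any} pair of trajectories that are driven by the same input $\gd$. The invariant tube $\mb{X}_{\sto,\gamma}(t)$ in \cref{eq:incrinvarset} is nothing but the sublevel set of $\lyapfunIncr(\centerdot,\sto(t))$, so it suffices to show that $\lyapfunIncr$ evaluated along the perturbed solution and the reference $\sto$ never exceeds its initial value $\gamma$.

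First I would fix an initial condition $\stIc\in\mb{X}_{\sto,\gamma}(0)$, which by definition means $\lyapfunIncr(\stIc,\sto(0))\leq\gamma$, and denote by $\st(t)=\sttran(t,0,\stIc,\gd)$ the unique forward-complete solution generated by the same bounded, measurable input $\gd$ that produces the reference trajectory $\sto$. The crucial observation is that both $\st$ and $\sto$ then lie in $\proj_\mr{\st}\Bw(\gd)$, i.e.\ they are a pair of system trajectories sharing the common input $\gd$; this is precisely the hypothesis under which the dissipation inequality \cref{7_an_eq:incrlyap} of \cref{7_an_thm:incrlyap} is available.

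Next I would apply \cref{7_an_eq:incrlyap} at each $t\in\nninteg$ to the pair $(\st,\sto)$, yielding $\lyapfunIncr(\st(t+1),\sto(t+1))\leq\lyapfunIncr(\st(t),\sto(t))$, so that the scalar sequence $t\mapsto\lyapfunIncr(\st(t),\sto(t))$ is non-increasing. A straightforward induction (or telescoping) then gives
\begin{equation*}
	\lyapfunIncr(\st(t),\sto(t))\leq \lyapfunIncr(\st(0),\sto(0)) = \lyapfunIncr(\stIc,\sto(0))\leq\gamma,
\end{equation*}
for every $t\in\nninteg$. Hence $\st(t)=\sttran(t,0,\stIc,\gd)\in\mb{X}_{\sto,\gamma}(t)$, which is exactly the invariance claim \cref{eq:sttranincr}.

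I do not expect a serious obstacle here; the argument is a direct monotonicity/telescoping step once the dissipation inequality is in hand. The only point that warrants care is that \cref{7_an_eq:incrlyap} is only valid for trajectory pairs under a \emph{common} input, which is guaranteed because the perturbed solution is evaluated along the very input $\gd$ that generates $\sto$. In contrast to the universal shifted case, where one compares against a fixed equilibrium and the invariant set is static, the ``center'' $\sto(t)$ of the tube here itself evolves, making $\mb{X}_{\sto,\gamma}(t)$ genuinely time-varying; but since $\lyapfunIncr$ measures an incremental distance between the two evolving trajectories rather than a distance to a fixed point, the non-increase of $\lyapfunIncr(\st(t),\sto(t))$ is unaffected and the conclusion follows verbatim.
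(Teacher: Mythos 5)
Your proposal is correct and matches the paper's own proof essentially verbatim: both fix $\stIc\in\mb{X}_{\sto,\gamma}(0)$, invoke the incremental Lyapunov inequality \cref{7_an_eq:incrlyap} for the pair $(\st,\sto)\in\proj_{\mr{\st}}\Bw(\gd)$ under the common input $\gd$, and telescope the non-increasing sequence $\lyapfunIncr(\st(t),\sto(t))$ down to the initial bound $\gamma$ to conclude \cref{eq:sttranincr}. Your additional remark about the common-input requirement and the time-varying center $\sto(t)$ is a correct reading of why the argument parallels \cref{thm:shiftinvar}, which is exactly the analogy the paper itself draws.
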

\begin{proof}
See \cref{pf:IncrInvariance}
\end{proof}
In the case of IAS, there exists a $\gamma:\nninteg\to\nnreals$, which is a monotonically decreasing function such that $\mb{X}_{\sto,\gamma}(t) = \left\lbrace x \in\reals^\stSize \mid \lyapfunIncr(\st,\sto(t))\leq\gamma(t)\right\rbrace$.

A visual illustration of the time-varying invariant set $\mb{X}_{\sto,\gamma}$ for incremental invariance is depicted in \cref{fig:incrinvar}. As it is visible in the figure, incremental invariance can be interpreted the existence of an invariant tube around a given trajectory $\sto\in\proj_{\mr{\st}}\Bw(\gd)$.

\begin{figure}
	\centering
	\includegraphics[scale=0.9]{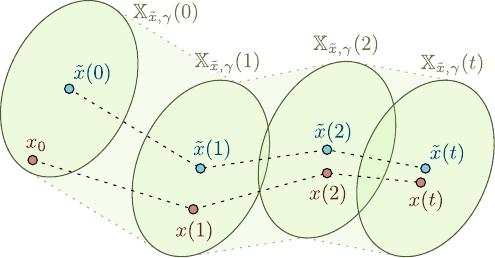}
	\caption{The invariant tube $\mb{X}_{\sto,\gamma}$ for incremental invariance.}
	\label{fig:incrinvar}
\end{figure}

\subsection{Equilibrium-free dissipativity}
The concept of dissipativity \cite{Willems1972} allows for simultaneous analysis of stability and performance of systems. The concept of ``classical'' dissipativity can be interpreted as analyzing the internal energy of the system over time. However, this analysis of internal energy of the system is only concerned with respect to a single ``minimum'' point, called the neutral storage, which is often taken as the origin of the state-space associated with the nonlinear representation. Nevertheless, it is often of interesest to analyze a set of equilibrium points/trajectories, e.g., in the case of reference tracking or disturbance rejection, which is cumbersome to be performed with the classical dissipativity results for nonlinear systems. Hence, there is a need for equilibrium-free dissipativity notions such as universal shifted dissipativity and incremental dissipativity, as they allow to handle these cases efficiently without the restriction of a single point of neutral storage. 

The concept of \emph{Universal Shifted Dissipativity} (USD) \cite{Koelewijn2023} allows for analyzing the energy flow between trajectories and equilibrium points of the system. More concretely, similar to the CT USD notion in \cite[Definition 2]{Koelewijn2023}, we formulate the following definition of DT USD:
\begin{definition}[Universal shifted dissipativity]\label{8_def:shifteddissip}
	The nonlinear system given by \cref{8_eq:nonlinsys} is \emph{Universally Shifted Dissipative} (USD) w.r.t. the supply function $\supfunShift:\reals^\gdSize\!\times\gdSetEq\times \reals^\gpSize\!\times \gpSetEq \to \reals$, if there exists a storage function $\storfunShift:\reals^\stSize\!\times\gdSetEq\to\nnreals$ with $\storfunShift(\cdot,\gdEq)\in\C{0}$  and $\storfunShift(\cdot,\gdEq)\in\posClass{\stEq}$ for every $(\stEq,\gdEq)\in\proj_\mr{\stEq,\gdEq}\eqSet$,  such that
	\begin{equation}\label{8_eq:shifteddissip}
	\storfunShift(\st(t_1+1),\gdEq)-\storfunShift(\st(t_0),\gdEq)\leq \sum_{t=t_0}^{t_1} \supfunShift(\gd(t),\gdEq,\gp(t),\gpEq),
\end{equation}
for all $t_0,t_1\in\nninteg$ with $t_0\leq t_1$ and $(\st,\gd,\gp)\in\B$.
\end{definition}

Incremental dissipativity, see \cite{Verhoek2020}, is an even stronger notion of dissipativity which takes into account multiple trajectories of a system and can be thought of as analyzing the energy flow between trajectories. Similar to the incremental dissipativity definition for CT systems in \cite{Verhoek2020}, we formulate the definition of incremental dissipativity of DT nonlinear systems as follows:
\begin{definition}[Incremental dissipativity]\label{7_an_def:incrdis}
The system given by \cref{8_eq:nonlinsys} is called \emph{Incrementally Dissipative} (ID) w.r.t. the supply function $\supfunIncr:\reals^\gdSize\!\times\reals^\gdSize\!\times\reals^\gpSize\!\times\reals^\gpSize\!\to\reals$, if there exists a storage function ${\storfunIncr}: \reals^\stSize\!\times\reals^\stSize\!\to\nnreals$ with $\storfunIncr\in\C{0}$ and $\storfunIncr\in\posClassI$, such that, for any two trajectories $(\st,\gd,\gp),(\sto,\gdo,\gpo)\in\B$,
\begin{multline}
{\storfunIncr}\big(\st(t_1+1), \sto(t_1+1)\big) - {\storfunIncr}\big(\st(t_0), \sto(t_0)\big) \le \\\sum_{t=t_0}^{t_1}\supfunIncr \big(\gd(t), \gdo(t),\gp(t), \gpo(t)\big),
\label{7_an__eq:IDIE}
\end{multline}
for all $t_0,t_1\in\nninteg$ with $t_0 \leq t_1$. 
\end{definition}

For classical dissipativity, supply functions of a quadratic form are often studied as they allow us to link dissipativity of a system to (quadratic) performance notions such as the \dltwo-gain and passivity. Similarly, for this reason, we will also focus on quadratic supply functions for USD and ID in this paper. More concretely, we will consider quadratic supply functions of the form
\begin{equation}\label{8_eq:shiftsupply}
	\supfunShift(\gd,\gdEq,\gp,\gpEq) = \begin{bmatrix}
		\gd-\gdEq\\ \gp-\gpEq
	\end{bmatrix}^\top \qsrMat\begin{bmatrix}
		\gd-\gdEq\\ \gp-\gpEq
	\end{bmatrix},
\end{equation}
where $\supQ\in\sym^\gdSize$, $\supR\in\sym^\gpSize$, and $\supS \in\reals^{\gdSize\times\gpSize}$. In the incremental case, we choose $\supfunIncr(\gd,\gdo,\gp,\gpo)$ to be defined similarly to \cref{8_eq:shiftsupply},
We will refer to USD and ID w.r.t. supply functions of the form \cref{8_eq:shiftsupply} as $\qsr$-USD and $\qsr$-ID, respectively.

 Like in CT, it can easily be shown that $\qsr$-USD or $\qsr$-ID of a DT nonlinear system given by \cref{8_eq:nonlinsys} with $\qsr = (\perf^2,0,-I)$ implies that the system has a universal shifted or incremental \dltwo-gain  bound of $\perf$, respectively (see \cite[Definition 3]{Koelewijn2023} and \cite{Koelewijn2021a}). Similarly, a DT nonlinear system is universally shifted or incrementally passive if it is $\qsr$-USD or $\qsr$-ID with $\qsr = (0,I,0)$, respectively.
 
The definitions of USD and ID give us conditions to analyze universal shifted and incremental stability and performance properties of DT nonlinear systems. However, using these conditions directly to analyze these notions is difficult, as they require finding a storage function that satisfies the corresponding conditions w.r.t. all equilibria or w.r.t. any solution pairs of the system. Therefore, in the next two sections, we will show how other dissipativity notions can be used to simplify the analysis of USD and ID of DT nonlinear systems.

\section{Velocity Analysis}\label{8_sec:veloanalysis}
\subsection{The DT velocity form and velocity dissipativity}

In this section, we will focus on analyzing US(A)S and \emph{Universal Shifted Performance} (USP) properties of DT nonlinear systems using so-called velocity based analysis. In \cite{Koelewijn2023}, it has been shown how these properties for CT nonlinear systems could be analyzed through the time-differentiated dynamics, i.e., velocity form of the system. In DT, the counterpart to time-differentiation is taking difference of the dynamics in time. Due to the different nature of the difference and derivative operators, the resulting velocity form in DT is different from the CT version. This also results in proofs that are of different nature, than their CT counterparts. As a contribution of this paper, we will show in this section how the time-difference dynamics in DT can be used to imply USS and USP of the original DT nonlinear system. 

Let us introduce the forward increment signals $\dtst(t):=\st(t+1)-\st(t)\in\reals^\stSize$, $\dtgd(t):=\gd(t+1)-\gd(t)\in\reals^\gdSize$, and $\dtgp(t):=\gp(t+1)-\gp(t)\in\reals^\gpSize$, which sometimes are also called as DT velocities.
 Analogously, we can introduce the more commonly used backward increment signals  $\bdtst(t):=\st(t)-\st(t-1)\in\reals^\stSize$, $\dots$, $\bdtgp(t):=\gp(t)-\gp(t-1)\in\reals^\gpSize$.

 Based on these variables, the \emph{forward time-difference dynamics} of \cref{8_eq:nonlinsys} can be expressed as
\begin{subequations}\label{8_eq:veloformFull}
	\begin{align}
			\dtst(t+1) &= \stMap(\st(t+1),\gd(t+1))-\stMap(\st(t),\gd(t));\\
	\dtgp(t) &= \opMap(\st(t+1),\gd(t+1))-\opMap(\st(t),\gd(t));
\end{align}
\end{subequations}
while the \emph{backward time-difference dynamics} of \cref{8_eq:nonlinsys} are
\begin{subequations}\label{9_eq:veloformFull}
	\begin{align}
\bdtst(t+1) &= \stMap(\st(t),\gd(t))-\stMap(\st(t-1),\gd(t-1));\\
	\bdtgp(t) &= \opMap(\st(t),\gd(t))-\opMap(\st(t-1),\gd(t-1)).
\end{align}
\end{subequations}
Let us define the operator $\deltB$ for the behavior $\B$ of \cref{8_eq:nonlinsys} %
such that \vspace{-2mm}
\begin{multline} \label{deltaset}
	\deltB \B = \big\{(\dtst,\dtgd,\dtgp)\in (\reals^\stSize\times \reals^\gdSize\times \reals^\gpSize)^{\nninteg} \mid \\ \dtst(t)=\st(t+1)-\st(t),\,\dtgd(t)=\gd(t+1)-\gd(t), \\\dtgp(t):=\gp(t+1)-\gp(t),\, \forall\, t\in\nninteg,\, (\st,\gd,\gp) \in \B\big\},
\end{multline}
which defines the solution set of the forward dynamics \eqref{8_eq:veloformFull}. If $q$ is the forward-time shift operator, meaning that $qx(t)=x(t+1)$, then $\deltB\B =q \nabla \B$, where $\nabla \B$ is the solution set of the backward dynamics \eqref{9_eq:veloformFull}. As \cref{8_eq:nonlinsys} is time-invariant, this concludes that \eqref{8_eq:veloformFull} are \eqref{9_eq:veloformFull} equivalent representations. For technical convenience, we will formulate our results  w.r.t. the forward dynamics \eqref{8_eq:veloformFull}, but all derivations can be equivalently formulated for \eqref{9_eq:veloformFull} as well. Please also note that both representations \eqref{9_eq:veloformFull} and \eqref{8_eq:veloformFull} are causal.

By the (second) fundamental theorem of calculus, we can equivalently write %
\cref{8_eq:veloformFull} in an alternative form, which we will refer to as the DT \emph{velocity} form:
\begin{definition}[Discrete-time velocity form]\label{8_def:dtveloform}
The velocity form of a nonlinear system, given by \cref{8_eq:nonlinsys} with $f,h\in \C{1}$, is
\begin{subequations}\label{8_eq:veloform}
	\begin{align}
		\dtst(t\!+\!1) \!&=\! \vintA\big(\stip(t\!+\!1),\stip(t)\!\big)\dtst(t) \!+\! \vintB\big(\stip(t\!+\!1),\stip(t)\!\big)\dtgd(t);\\
	\dtgp(t) \!&=\! \vintC\big(\stip(t\!+\!1),\stip(t)\!\big)\dtst(t) \!+\! \vintD\big(\stip(t\!+\!1),\stip(t)\!\big)\dtgd(t);
\end{align}
\end{subequations}
where $(\st,\gd,\gp)\in\B$, $\stip = \col(\st,\gd)$, and
\begin{subequations}\label{8_eq:velointmat}
\begin{align}
\vintA(\stpl,\gdpl,\st,\gd) &= \int_0^1 \Partial{\stMap}{\st}(\stb(\var),\gdb(\var))\,d\var, \\ 
\vintB(\stpl,\gdpl,\st,\gd) &= \int_0^1 \Partial{\stMap}{\gd}(\stb(\var),\gdb(\var))\,d\var, \\
\vintC(\stpl,\gdpl,\st,\gd) &= \int_0^1 \Partial{\opMap}{\st}(\stb(\var),\gdb(\var))\,d\var, \\
\vintD(\stpl,\gdpl,\st,\gd) &= \int_0^1 \Partial{\opMap}{\gd}(\stb(\var),\gdb(\var))\,d\var,
\end{align}
\end{subequations}
with $\stb(\var) = \st+\var(\st_+-\st)$, $\gdb(\var) =\gd+\var(\gd_+-\gd)$.
\end{definition}

Note that by the \emph{Fundamental Theorem of Calculus} \cite{Thomas2005}, \eqref{8_eq:veloformFull} and \eqref{8_eq:veloform} are equivalent. To distinguish the velocity form of \cref{8_eq:nonlinsys}  from the original nonlinear system, we will call \cref{8_eq:nonlinsys} to be the \emph{primal form}.

Based on \cref{deltaset}, the solution set of \cref{8_eq:veloform} is given by $\Bv:=\deltB \B$, and we can also define $\Bvw(\gd) := \deltB \Bw(\gd)$ for a $\gd \in \gdSet^{\nninteg}$. 
The resulting DT velocity form represents the dynamics of the change between consecutive time-instances of the original dynamics. This is analogous to the CT velocity form introduced in \cite{Koelewijn2023}, which %
represents the dynamics of the instantaneous change in time (i.e., time derivative) of the original dynamics. %
Next, we will show that the DT velocity form has a direct relation to USS and USP. Before presenting this connection, we will first show some analysis results %
on the DT velocity form.

\begin{definition}[Velocity stability]\label{def:velostab}
The nonlinear system given by \cref{8_eq:nonlinsys} with velocity form \cref{8_eq:veloform} is \emph{Velocity (Asymptotically) Stable} (V(A)S), if the velocity form is (asymptotically) stable in the Lyapunov sense w.r.t. the origin (see also \cref{4_def:shiftedstab}), i.e., the velocity state $\dtst$ is (asymptotically) stable w.r.t. 0.
\end{definition}

As V(A)S is nothing more than (asymptotic) stability of velocity form, we can easily formulate the following Lyapunov based theorem in order to verify it:
\begin{theorem}[Velocity Lyapunov stability]\label{thm:velostab}
	The nonlinear system given by \cref{8_eq:nonlinsys} is VS, if there exists a function $\lyapfunVelo:\reals^{\stSize}\to\nnreals$ with $\lyapfunVelo\in\C{0}$ and $\lyapfunVelo\in\posClass{0}$, such that
	\begin{equation}\label{8_eq:velostability}
		\lyapfunVelo(\dtst(t+1))-\lyapfunVelo(\dtst(t))\leq 0,
	\end{equation}
	for all $t\in\nninteg$ and $\dtst\in\proj_\mr{\dtst}\Bvset{\gdSetEq}$. If \cref{8_eq:velostability} holds, but with strict inequality except when $\dtst(t)=0$, then the system is VAS.
\end{theorem}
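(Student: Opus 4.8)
The plan is to read \cref{thm:velostab} as the discrete-time direct (Lyapunov) method applied to the velocity state $\dtst$. Indeed, by \cref{def:velostab}, V(A)S is nothing but Lyapunov (asymptotic) stability of $\dtst$ with respect to the origin along the velocity solutions generated by constant equilibrium inputs, i.e.\ $\dtst\in\proj_\mr{\dtst}\Bvset{\gdSetEq}$ (for which $\dtgd\equiv0$, so that the velocity form \cref{8_eq:veloform} evolves autonomously in $\dtst$, albeit with coefficients depending on the underlying primal trajectory). First I would convert $\lyapfunVelo\in\posClass{0}$ into comparison functions: since $\lyapfunVelo$ is positive definite and decrescent with respect to $0$ and continuous, there exist class-$\classK$ functions $\alpha_1,\alpha_2$ such that $\alpha_1(\norm{\dtst})\le \lyapfunVelo(\dtst)\le\alpha_2(\norm{\dtst})$ on a neighbourhood of the origin (cf.\ \cite[Definition 3.3]{Scherer2015}).

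For the VS claim, the difference inequality \cref{8_eq:velostability} makes the sequence $t\mapsto\lyapfunVelo(\dtst(t))$ non-increasing, hence $\lyapfunVelo(\dtst(t))\le\lyapfunVelo(\dtst(0))$ for all $t\in\nninteg$. Sandwiching with the comparison functions yields $\alpha_1(\norm{\dtst(t)})\le\lyapfunVelo(\dtst(t))\le\lyapfunVelo(\dtst(0))\le\alpha_2(\norm{\dtst(0)})$, so that $\norm{\dtst(t)}\le\alpha_1^{-1}\!\big(\alpha_2(\norm{\dtst(0)})\big)$. Given any $\epsilon>0$, choosing $\delta(\epsilon)=\alpha_2^{-1}(\alpha_1(\epsilon))$ then guarantees that $\norm{\dtst(0)}<\delta(\epsilon)$ implies $\norm{\dtst(t)}<\epsilon$ for all $t$, which is exactly VS.

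For VAS, the strict version of \cref{8_eq:velostability} makes $\lyapfunVelo(\dtst(t))$ strictly decreasing as long as $\dtst(t)\neq0$, and it is bounded below by $0$, so it converges to some limit $c\ge0$; it then remains to show $c=0$, since $\alpha_1(\norm{\dtst(t)})\le\lyapfunVelo(\dtst(t))\to c$ would give $\dtst(t)\to0$ precisely when $c=0$. I expect this convergence step to be the main obstacle: because the coefficient matrices in \cref{8_eq:veloform} depend on the full primal state $\stip=\col(\st,\gd)$, the pointwise strict decrease does not immediately translate into a uniform, state-independent decrease, so one cannot argue $c=0$ as bluntly as in the autonomous scalar case. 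I would resolve this by exploiting that, for fixed $\gd\equiv\gdEq$, the pair $(\st,\dtst)$ obeys an autonomous discrete-time recursion and applying a LaSalle/invariance argument on its $\omega$-limit set, where strict decrease forces any limiting velocity to be $0$, so that $c=0$; alternatively, if the strict decrease can be lower-bounded by a class-$\classK$ function of $\norm{\dtst}$, then $c>0$ would force $\lyapfunVelo(\dtst(t))$ to decrease without bound, a contradiction. Either route gives $c=0$ and hence $\dtst(t)\to0$, establishing VAS.
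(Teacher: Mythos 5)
Your proposal is correct in spirit and, for the VS half, is exactly the argument the paper has in mind --- note that the paper gives no written proof of \cref{thm:velostab} at all: it simply states that the result ``follows from standard Lyapunov stability theory'' with a pointer to \cite{Khalil2002,Bof2018}. Your comparison-function argument ($\alpha_1(\norm{\dtst})\le\lyapfunVelo(\dtst)\le\alpha_2(\norm{\dtst})$, monotonicity of $t\mapsto\lyapfunVelo(\dtst(t))$, and the choice $\delta(\epsilon)=\alpha_2^{-1}(\alpha_1(\epsilon))$) is precisely that standard argument, so for VS you and the paper coincide.

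Where you go beyond the paper is the VAS half, and your concern there is legitimate: the velocity dynamics \cref{8_eq:veloform} are \emph{not} autonomous in $\dtst$ (the coefficient matrices depend on the primal trajectory through $\stip$), so the classical compactness/LaSalle argument that upgrades ``strict decrease'' to ``convergence to zero'' does not apply verbatim --- a point the paper's one-line citation glosses over entirely. However, be aware that your first proposed fix has a hole: for $\gd\equiv\gdEq$ the pair $(\st,\dtst)$ is indeed governed by an autonomous recursion (in fact $\dtst(t)=\stMap(\st(t),\gdEq)-\st(t)$ is a function of $\st(t)$), but the invariance-principle argument on its $\omega$-limit set requires precompactness of the \emph{extended} trajectory, and velocity stability gives no bound on the primal state $\st$, which may escape to infinity while $\dtst$ stays in a compact annulus away from the origin. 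Your second route --- lower-bounding the decrease by a class-$\classK$ function of $\norm{\dtst}$ --- is the assumption under which the cited references actually prove discrete-time asymptotic stability, and is the clean way to make the VAS claim airtight; but it is a (mild) strengthening of the hypothesis as literally stated in \cref{thm:velostab}. So: same approach as the paper, VS part complete, and your VAS discussion is more careful than the paper's own treatment, provided you commit to the class-$\classK$ version (or add a boundedness assumption on $\st$) rather than the precompactness-dependent LaSalle sketch.
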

The proof of \cref{thm:velostab} simply follows from standard Lyapunov stability theory, see e.g., \cite{Khalil2002,Bof2018}. Next, we formulate a notion of dissipativity regarding the velocity form, which enables the analysis of stability and performance of nonlinear systems in the velocity sense:
\begin{definition}[Velocity dissipativity]\label{8_def:velodissip}
	The nonlinear system given by \cref{8_eq:nonlinsys} is \emph{Velocity Dissipative} (VD) w.r.t. the supply function $\supfunVelo:\reals^\gdSize\times\reals^\gpSize\to\reals$, if there exists a storage function $\storfunVelo:\reals^\stSize\to\nnreals$ with $\storfunVelo\in\C{0}$ and $\storfunVelo\in\posClass{0}$, such that, for all $t_0,t_1\in\nninteg$ with $t_0\leq t_1$,
\begin{equation}\label{8_eq:velodissip}
	\storfunVelo(\dtst(t_1+1))-\storfunVelo(\dtst(t_0))\leq \sum_{t=t_0}^{t_1} \supfunVelo(\dtgd(t),\dtgp(t)),
\end{equation}
for all $(\dtst,\dtgd,\dtgp)\in\Bv$.
\end{definition}
Note that VD can be seen as `classical' dissipativity of the velocity form of the system \cref{8_eq:veloform}. Next, let us consider quadratic $\qsr$ supply functions for VD of the form
\begin{equation}\label{8_eq:velosupply}
	\supfunVelo(\dtgd,\dtgp) = \begin{bmatrix}
		\dtgd\\ \dtgp
	\end{bmatrix}^\top\qsrMat\begin{bmatrix}
		\dtgd\\ \dtgp
	\end{bmatrix},
\end{equation}
where again $\supQ\in\sym^\gdSize$, $\supS \in\reals^{\gdSize\times\gpSize}$, and $\supR\in\sym^\gpSize$. Moreover, we also consider the storage function $\storfunVelo$ to be quadratic:
\begin{equation}\label{8_eq:velostorquad}
	\storfunVelo(\dtst) = \dtst^\top \storquad \dtst ,
\end{equation}
where $\storquad\in\sym^\stSize$ with $\storquad \succ 0$. Under these considerations, we can derive the following (infinite dimensional) \emph{Linear Matrix Inequality} (LMI) feasibility condition for VD:
\begin{theorem}[DT $\qsr$-VD condition]\label{8_thm:veloqsrMI}
	The system given by \cref{8_eq:nonlinsys} is $\qsr$-VD on the convex set $\stSet\times\gdSet \subseteq \reals^{\stSize\cdot\gdSize}$, where $\supR\preceq 0$, if there exists an $\storquad\in\sym^\stSize$ with $\storquad \succ 0$, such that for all $(\st,\gd)\in\stSet\times\gdSet$, it holds that
\begin{multline}\label{8_eq:veloMI}
(\star)^\top  \begin{bmatrix} -\storquad & 0 \\\star &\storquad \end{bmatrix}  
\begin{bmatrix} 
I & 0 \\ 
\velA(\st,\gd) & \velB(\st,\gd)
\end{bmatrix} 
-\\(\star)^\top \qsrMat \begin{bmatrix} 
0 & I \\ 
\velC(\st,\gd) & \velD(\st,\gd)
\end{bmatrix}  \preceq 0,
\end{multline}
where $\velA=\Partial{\stMap}{\st}$, $\velB=\Partial{\stMap}{\gd}$, $\velC=\Partial{\opMap}{\st}$, $\velD=\Partial{\opMap}{\gd}$.
\end{theorem}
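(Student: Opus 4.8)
The plan is to read the left-hand side of \cref{8_eq:veloMI} as a matrix-valued quadratic function $\Psi(A,B,C,D)$ of the Jacobian blocks, to establish its joint convexity, and then to transfer the pointwise inequality to the integral-averaged matrices $\vintA,\vintB,\vintC,\vintD$ that actually govern the velocity dynamics \cref{8_eq:veloform}, via Jensen's inequality.

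First I would substitute the quadratic storage \cref{8_eq:velostorquad} and supply \cref{8_eq:velosupply} into the one-step version of the dissipation inequality \cref{8_eq:velodissip}. Setting $\stip=\col(\dtst(t),\dtgd(t))$ and inserting $\dtst(t+1)=\vintA\dtst(t)+\vintB\dtgd(t)$ and $\dtgp(t)=\vintC\dtst(t)+\vintD\dtgd(t)$ from \cref{8_eq:veloform}, the single-step bound $\storfunVelo(\dtst(t+1))-\storfunVelo(\dtst(t))\le\supfunVelo(\dtgd(t),\dtgp(t))$ is seen to be exactly $\stip^\top\Psi(\vintA,\vintB,\vintC,\vintD)\stip\le 0$, where $\Psi(A,B,C,D)$ denotes the matrix on the left of \cref{8_eq:veloMI} with the Jacobians replaced by generic blocks $(A,B,C,D)$. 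Hence it suffices to prove $\Psi(\vintA,\vintB,\vintC,\vintD)\preceq 0$; summing the resulting one-step bound over $t=t_0,\dots,t_1$ telescopes to the multi-step inequality required by \cref{8_def:velodissip}.

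The central step is a convexity argument. Expanding $\Psi$, its dependence on $(A,B)$ is $\begin{bmatrix} A & B\end{bmatrix}^\top\storquad\begin{bmatrix} A & B\end{bmatrix}$ up to an additive constant, while its dependence on $(C,D)$ is $\begin{bmatrix} C & D\end{bmatrix}^\top(-\supR)\begin{bmatrix} C & D\end{bmatrix}$ up to affine terms (the storage part involves only $(A,B)$ and the supply part only $(C,D)$, so there are no coupling terms). Since $\storquad\succ 0$ and $-\supR\succeq 0$ — this is precisely where the hypothesis $\supR\preceq 0$ enters — both quadratic terms are matrix convex, because for any $P\succeq 0$ the convexity gap of $X\mapsto X^\top P X$ equals $\theta(1-\theta)(X_1-X_2)^\top P(X_1-X_2)\succeq 0$. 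Therefore $\Psi$ is jointly convex in $(A,B,C,D)$.

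Finally, recall from \cref{8_eq:velointmat} that each of $\vintA,\vintB,\vintC,\vintD$ is the integral over $\var\in[0,1]$ of the corresponding Jacobian evaluated along the segment $(\stb(\var),\gdb(\var))$ joining $(\st(t),\gd(t))$ and $(\st(t+1),\gd(t+1))$. Since $\stSet\times\gdSet$ is convex, this whole segment lies in $\stSet\times\gdSet$ whenever its endpoints do, so the pointwise LMI gives $\Psi\big(\velA(\stb(\var),\gdb(\var)),\dots\big)\preceq 0$ for every $\var$. Jensen's inequality for the jointly convex map $\Psi$ against the uniform measure on $[0,1]$ then yields $\Psi(\vintA,\vintB,\vintC,\vintD)\preceq\int_0^1\Psi\big(\velA(\stb(\var),\gdb(\var)),\dots\big)\,d\var\preceq 0$, which closes the argument. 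The main obstacle is exactly this mismatch between the pointwise Jacobians appearing in the LMI and the line-integrated matrices governing \cref{8_eq:veloform}; the convexity of $\Psi$, guaranteed by $\storquad\succ 0$ and $\supR\preceq 0$, is the device that bridges it.
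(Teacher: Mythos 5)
Your proposal is correct and is essentially the paper's own argument: both proofs use convexity of $\stSet\times\gdSet$ to evaluate the pointwise LMI \cref{8_eq:veloMI} along the segment joining $(\st(t),\gd(t))$ and $(\st(t+1),\gd(t+1))$, and both bridge the gap to the integrated matrices $\vintA,\dots,\vintD$ of \cref{8_eq:velointmat} through a Jensen-type inequality that consumes precisely the hypotheses $\storquad\succ 0$ and $\supR\preceq 0$, with the $\supQ$- and $\supS$-terms passing through the integral exactly. The only difference is packaging: the paper first scalarizes by pre-/post-multiplication with $\col(\dtst,\dtgd)$ and applies the quadratic-form Jensen inequality of \cite[Lemma 16]{Koelewijn2021a} term by term, whereas you apply Jensen once at the matrix level after establishing joint matrix convexity of the left-hand side.
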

\begin{proof}
See \cref{8_pf:veloqsrMI}.
\end{proof}

\begin{remark} \label{rem:1}
	When we talk about a system being stable or dissipative on (a set) $\stSet\times\gdSet$, we mean the system is stable or dissipative under all trajectories of the system for which holds that $(\st(t),\gd(t))\in\stSet\times\gdSet$ for all $t\in\nninteg$.
\end{remark}

Later, in \cref{sec:diffanalysis}, we will also show how, for performance in terms of the $\dltwo$-gain and passivity, the results of \cref{8_thm:veloqsrMI} can be turned into LMIs. With the result of \cref{8_thm:veloqsrMI}, we have a novel condition to analyze velocity dissipativity of DT nonlinear systems. This is enabled by the fact that, in the proof, it is shown that the condition for $\qsr$-VD in DT can be expressed in terms of the matrix functions $\velA,\dots,\velD$ instead of the matrix functions $\vintA,\dots,\vintD$ of the DT velocity form \cref{8_eq:veloform}. Expressing the condition for VD in terms of $\velA,\dots,\velD$ instead of $\vintA,\dots,\vintD$ simplifies it. Namely, $\velA,\dots,\velD$ only depend on two arguments, which results in the condition needing to be verified at all $(\st,\gd)\in\stSet\times\gdSet$. On the other hand, a condition using $\vintA,\dots,\vintD$ takes four arguments and would need to be verified at all $(\st,\gd)\in\stSet\times\gdSet$ \emph{and} all $(\stpl,\gdpl)\in\stSet\times\gdSet$. 

Moreover, note that the condition in \cref{8_thm:veloqsrMI} corresponds to a feasibility check of an infinite dimensional set of LMIs, as for a fixed  $(\st,\gd)\in\stSet\times\gdSet$, \cref{8_eq:veloMI} becomes an LMI. Later, in \cref{8_sec:lpvstuffs}, we will see how we can reduce this infinite dimensional set of LMIs to a finite dimensional set, which can computationally efficiently be verified. This will then give us %
efficient tools to analyze $\qsr$-VD of a system.

\subsection{Induced universal shifted stability}\label{8_sec:indshiftstab}
In the literature, see \cite{Koelewijn2023, Kosaraju2019,Kawano2021}, it has been shown how the velocity form in CT can be used to formulate a condition to imply US(A)S of CT systems. Likewise, we will show that also in DT, we can formulate a condition for US(A)S of a system using the DT velocity form that we have introduced in \cref{8_def:dtveloform}. Before doing so, let us first introduce the behavior $\Bvset{\gdSetEq}:=\bigcup_{\gdEq\in\gdSetEq} \Bvw( \gd \equiv \gdEq)$, i.e., the behavior of the velocity form for which the input is $\gd(t)=\gdEq\in\gdSetEq$, hence $\dtgd(t)=0$, for all $t\in\nninteg$.
\begin{theorem}[Implied universal shifted stability]\label{8_thm:velotoshiftstab}
	The nonlinear system given by \cref{8_eq:nonlinsys} is USS, if there exists a function $\lyapfunVelo:\reals^{\stSize}\to\nnreals$ with $\lyapfunVelo\in\C{0}$ and $\lyapfunVelo\in\posClass{0}$, such that \cref{8_eq:velostability} holds for all $t\in\nninteg$ and $\dtst\in\proj_\mr{\dtst}\Bvset{\gdSetEq}$, i.e., the system is velocity stable. If \cref{8_eq:velostability} holds, but with strict inequality except when $\dtst(t)=0$, meaning it is velocity asymptotically stable, then the system is USAS.
\end{theorem}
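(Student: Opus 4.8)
The plan is to convert the velocity Lyapunov function $\lyapfunVelo$ into a universal shifted Lyapunov function and conclude by the reasoning behind \cref{8_thm:shiftlyapstab}. Fix an arbitrary equilibrium pair $(\stEq,\gdEq)\in\proj_\mr{\stEq,\gdEq}\eqSet$ and drive \cref{8_eq:nonlinsys} with the constant input $\gd\equiv\gdEq$, so that $\dtgd(t)=0$ for all $t\in\nninteg$. The decisive structural fact, special to DT, is that along such trajectories the forward increment becomes a static map of the state: $\dtst(t)=\st(t+1)-\st(t)=\stMap(\st(t),\gdEq)-\st(t)=:g(\st(t))$, where $g(\st):=\stMap(\st,\gdEq)-\st$ (parameterised by $\gdEq$) satisfies $g(\stEq)=0$. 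Thus $\dtst\equiv0$ corresponds exactly to $\st\equiv\stEq$.

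Next I would take as candidate shifted Lyapunov function $\lyapfunShift(\st,\gdEq):=\lyapfunVelo(g(\st))=\lyapfunVelo(\stMap(\st,\gdEq)-\st)$ and check that $\lyapfunShift(\cdot,\gdEq)\in\posClass{\stEq}$. Continuity is clear, since $\lyapfunVelo\in\C{0}$ and $g\in\C{1}$ (as $\stMap\in\C{1}$). For positive definiteness w.r.t. $\stEq$, note $\lyapfunShift(\stEq,\gdEq)=\lyapfunVelo(0)=0$, while for $\st\neq\stEq$ we have $g(\st)\neq0$: the equality $g(\st)=0$ says that $(\st,\gdEq)$ satisfies the equilibrium relation \cref{8_eq:eqx}, and by \cref{4_assum:uniqueEq} the unique such state is $\st=\eqMap(\gdEq)=\stEq$. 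Since $\lyapfunVelo\in\posClass{0}$ is positive definite, $\lyapfunShift(\st,\gdEq)>0$ for $\st\neq\stEq$, so $\lyapfunShift(\cdot,\gdEq)$ is continuous and positive definite; on a neighbourhood of $\stEq$ it therefore admits class-$\classK$ lower and upper bounds (see, e.g., \cite{Khalil2002}), giving $\lyapfunShift(\cdot,\gdEq)\in\posClass{\stEq}$.

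Finally, the decrease condition transfers directly. For any primal trajectory $\st\in\proj_\mr{\st}\Bw(\gd\equiv\gdEq)$ its increment satisfies $\dtst\in\proj_\mr{\dtst}\Bvw(\gd\equiv\gdEq)\subseteq\proj_\mr{\dtst}\Bvset{\gdSetEq}$ together with $\dtst(t)=g(\st(t))$, so \cref{8_eq:velostability} yields $\lyapfunShift(\st(t+1),\gdEq)-\lyapfunShift(\st(t),\gdEq)=\lyapfunVelo(\dtst(t+1))-\lyapfunVelo(\dtst(t))\leq0$ for all $t\in\nninteg$. Hence $\lyapfunShift$ is a continuous universal shifted Lyapunov function and, the equilibrium pair being arbitrary, the system is USS by the standard DT Lyapunov argument underlying \cref{8_thm:shiftlyapstab} (continuity suffices in place of the $\C{1}$ smoothness stated there). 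For USAS, strictness of \cref{8_eq:velostability} except at $\dtst=0$ becomes strict decrease of $\lyapfunShift$ except where $g(\st)=0$, i.e. except at $\st=\stEq$, and a standard DT LaSalle/asymptotic-stability argument then forces $\st(t)\to\stEq$. I expect the positive-definiteness step to be the main obstacle: it is precisely there that \cref{4_assum:uniqueEq} is indispensable, since without a unique equilibrium per input $g$ could vanish away from $\stEq$ and $\lyapfunShift$ would fail to be positive definite.
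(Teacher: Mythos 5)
Your proposal is correct and follows essentially the same route as the paper's own proof: the identical construction $\lyapfunShift(\st,\gdEq)=\lyapfunVelo(\stMap(\st,\gdEq)-\st)=\lyapfunVelo(\dtst(t))$, the same appeal to \cref{4_assum:uniqueEq} to secure positive definiteness w.r.t.\ $\stEq$, the same transfer of the decrease condition \cref{8_eq:velostability}, and the same conclusion via \cref{8_thm:shiftlyapstab}. Your write-up is in fact slightly more careful than the paper's, since you spell out why $g(\st)=\stMap(\st,\gdEq)-\st$ vanishes only at $\stEq$ and you flag (correctly) that the composed function is only $\C{0}$ while \cref{8_thm:shiftlyapstab} nominally asks for $\C{1}$, which is harmless in discrete time.
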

\begin{proof}
See \cref{8_pf:velotoshiftstab}.
\end{proof}
The proof for \cref{8_thm:velotoshiftstab} relies on the construction of the universally shifted Lyapunov function based on $\lyapfunVelo$. In CT, a similar construction is often referred to as the Krasovskii method \cite{Khalil2002,Kawano2021,Koelewijn2023}. However, the novel result and construction that we present in \cref{8_thm:velotoshiftstab} for DT nonlinear systems are, to the authors' knowledge, not available in literature. Moreover, to the authors' knowledge, this is also the first time that properties of the time-difference dynamics have been connected to US(A)S of the system. 
Note that %
condition \cref{8_eq:velostability} means that (asymptotic) stability of the velocity form \eqref{8_def:dtveloform} implies US(A)S of system \cref{8_eq:nonlinsys}. Which implies that by analyzing (asymptotic) stability of the velocity form \cref{8_eq:veloform}, we can infer US(A)S of the primal form.

Using \cref{8_thm:velotoshiftstab}, we can also connect velocity dissipativity to US(A)S of the nonlinear system:
\begin{theorem}[USS from VD]\label{4_lem:velostab}
	Assume the nonlinear system given by \cref{8_eq:nonlinsys} is VD under a storage function $\storfunVelo\in\C{1}$ w.r.t. a supply function $\supfunVelo$ that satisfies
	\begin{equation}\label{8_eq:supplystability}
		\supfunVelo(0,\gpdot)\leq 0,
	\end{equation}	
	for all $\gpdot\in\reals^\gpSize$, then, the nonlinear system is USS. If the supply function satisfies \cref{8_eq:supplystability}, but with strict inequality when $\dtst\neq 0$, then the nonlinear system is USAS.
\end{theorem}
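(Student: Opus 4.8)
The plan is to reduce the claim to velocity (asymptotic) stability and then invoke the already-established implication \cref{8_thm:velotoshiftstab}. The key observation is that the supply condition \cref{8_eq:supplystability} is tailored precisely to the trajectories that matter for universal shifted stability, namely those in $\Bvset{\gdSetEq}$, where the input is held at a constant equilibrium value. First I would restrict attention to an arbitrary trajectory $(\dtst,\dtgd,\dtgp)\in\Bvset{\gdSetEq}\subseteq\Bv$, which corresponds to some input $\gd\equiv\gdEq$ with $\gdEq\in\gdSetEq$, so that $\dtgd(t)=\gd(t+1)-\gd(t)=0$ for all $t\in\nninteg$. Along such a trajectory the supply collapses to $\supfunVelo(0,\dtgp(t))$, which is non-positive by \cref{8_eq:supplystability}.

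Next I would specialize the velocity dissipation inequality \cref{8_eq:velodissip} to the single-step case $t_0=t_1=t$, which is permitted since the VD inequality holds for all $t_0\leq t_1$ and all $(\dtst,\dtgd,\dtgp)\in\Bv$. This yields
\begin{equation*}
\storfunVelo(\dtst(t+1))-\storfunVelo(\dtst(t))\leq \supfunVelo(0,\dtgp(t))\leq 0,
\end{equation*}
for all $t\in\nninteg$ and all $\dtst\in\proj_\mr{\dtst}\Bvset{\gdSetEq}$. Because the VD assumption provides $\storfunVelo\in\C{1}\subseteq\C{0}$ and $\storfunVelo\in\posClass{0}$, the storage function $\storfunVelo$ meets every requirement placed on the velocity Lyapunov function $\lyapfunVelo$ in \cref{thm:velostab}, and the displayed inequality is exactly the velocity stability condition \cref{8_eq:velostability} with $\lyapfunVelo=\storfunVelo$. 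Thus the system is velocity stable, and applying \cref{8_thm:velotoshiftstab} directly delivers USS. For the asymptotic statement, the strengthened hypothesis upgrades the single-step bound to $\storfunVelo(\dtst(t+1))-\storfunVelo(\dtst(t))<0$ whenever $\dtst(t)\neq 0$, i.e., velocity asymptotic stability, and the asymptotic branch of \cref{8_thm:velotoshiftstab} then yields USAS.

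Since the argument is essentially a specialization of results already proved earlier, I expect no serious computational obstacle; the point requiring the most care is the interpretation of the asymptotic hypothesis. The supply $\supfunVelo(0,\gpdot)$ is written as a function of the output increment alone, yet the strictness is phrased in terms of $\dtst$. I would make this precise by noting that, along a trajectory constrained to $\dtgd=0$, the output increment $\dtgp$ is determined by the velocity state $\dtst$ through the velocity dynamics \cref{8_eq:veloform}, so the requirement ``$\supfunVelo(0,\dtgp(t))<0$ when $\dtst(t)\neq 0$'' is well posed and is exactly what is needed to obtain the strict decrease of $\storfunVelo$ feeding into the asymptotic part of \cref{8_thm:velotoshiftstab}.
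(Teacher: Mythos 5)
Your proposal is correct and follows essentially the same route as the paper's own proof: specialize the velocity dissipation inequality to trajectories in $\Bvset{\gdSetEq}$ where $\dtgd\equiv 0$, use \cref{8_eq:supplystability} to obtain the one-step decrease of $\storfunVelo$, and then invoke \cref{8_thm:velotoshiftstab} with $\lyapfunVelo=\storfunVelo$ for both the USS and USAS branches. If anything, your restriction to $\Bvset{\gdSetEq}$ \emph{before} applying the supply bound is slightly more careful than the paper's phrasing, which states the intermediate inequality for all of $\Bv$ even though the bound $\supfunVelo(\dtgd(t),\dtgp(t))\leq 0$ only holds once $\dtgd(t)=0$.
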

\begin{proof}
	See \cref{4_pf:velostab}.
\end{proof}

\begin{corollary}[VD-condition induced universal shifted stability]\label{cor:shiftinvar}
	For the nonlinear system given by \cref{8_eq:nonlinsys}, let \cref{8_eq:veloMI} hold on the convex set $\stSet\times\gdSet \subseteq \reals^{\stSize\cdot\gdSize}$ w.r.t. a supply function $\supfunVelo$ that satisfies \cref{8_eq:supplystability}, i.e., the system is VD and V(A)S on $\stSet\times\gdSet$. Then, for any input $\gd\equiv\gdEq\in\gdSet$, the system is US(A)S and invariant on $\mb{X}_{\gdEq,\gamma}$ given by \cref{eq:shiftinvarset}  with $\lyapfunShift(\st(t),\gdEq)=\storfunVelo(\stMap(\st(t),\gdEq)-\st(t))$, if $\gamma\geq 0$ satisfies $\mb{X}_{\gdEq,\gamma}\subseteq\stSet$.  	

\end{corollary}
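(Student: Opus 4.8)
The plan is to assemble the corollary from three ingredients established earlier: the LMI characterisation of velocity dissipativity in \cref{8_thm:veloqsrMI}, the passage from velocity dissipativity to universal shifted stability in \cref{4_lem:velostab}, and the shifted invariance principle of \cref{thm:shiftinvar}. First I would observe that for the quadratic supply \cref{8_eq:velosupply} the stability-type condition \cref{8_eq:supplystability} reads $\gpdot^\top\supR\gpdot\le 0$ for all $\gpdot$, i.e.\ $\supR\preceq 0$, which is exactly the standing hypothesis of \cref{8_thm:veloqsrMI}. Hence, since \cref{8_eq:veloMI} is assumed to hold on $\stSet\times\gdSet$, \cref{8_thm:veloqsrMI} yields that the system is $\qsr$-VD on $\stSet\times\gdSet$ with the quadratic storage function $\storfunVelo(\dtst)=\dtst^\top\storquad\dtst$, which is $\C{1}$. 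Because this supply also satisfies \cref{8_eq:supplystability}, \cref{4_lem:velostab} then gives US(A)S of the primal system, where the asymptotic case follows from the strict version of \cref{8_eq:supplystability}.

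Second, I would make the identification of the shifted certificate explicit. Under a constant input $\gd\equiv\gdEq$ one has $\st(t+1)=\stMap(\st(t),\gdEq)$, so the forward increment equals $\dtst(t)=\stMap(\st(t),\gdEq)-\st(t)$ and therefore $\lyapfunShift(\st(t),\gdEq)=\storfunVelo(\stMap(\st(t),\gdEq)-\st(t))=\storfunVelo(\dtst(t))$; this is exactly the universally shifted Lyapunov function constructed in the proof of \cref{4_lem:velostab} (cf.\ \cref{8_thm:velotoshiftstab}). I would then verify that it qualifies as a shifted Lyapunov function in the sense of \cref{8_thm:shiftlyapstab}: it is $\C{1}$ as a composition of $\C{1}$ maps; it vanishes at $\st=\stEq$ because $\stMap(\stEq,\gdEq)=\stEq$ and $\storfunVelo(0)=0$; and it is positive and decrescent away from $\stEq$ since $\storfunVelo\in\posClass{0}$ while, by the uniqueness of equilibria in \cref{4_assum:uniqueEq}, $\stMap(\st,\gdEq)-\st$ vanishes only at $\st=\stEq$. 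The decrease \cref{8_eq:shiftedstability} is inherited from velocity dissipativity: evaluating \cref{8_eq:velodissip} over a single step with $\dtgd\equiv 0$ and using \cref{8_eq:supplystability} gives $\storfunVelo(\dtst(t+1))-\storfunVelo(\dtst(t))\le\supfunVelo(0,\dtgp(t))\le 0$, which is precisely $\lyapfunShift(\st(t+1),\gdEq)-\lyapfunShift(\st(t),\gdEq)\le 0$.

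Finally, for the invariance claim I would apply \cref{thm:shiftinvar} to this $\lyapfunShift$. The one subtlety — and the main obstacle — is that the decrease above holds only while $(\st(t),\gdEq)\in\stSet\times\gdSet$, so \cref{thm:shiftinvar} cannot be invoked verbatim; invariance and ``remaining in $\stSet$'' must be proved together. I would therefore run a single induction over $t$: assuming $\stIc\in\mb{X}_{\gdEq,\gamma}\subseteq\stSet$, the pair $(\st(0),\gdEq)$ lies in $\stSet\times\gdSet$, so \cref{8_eq:veloMI}, and hence the one-step decrease, holds and gives $\lyapfunShift(\st(1),\gdEq)\le\lyapfunShift(\st(0),\gdEq)\le\gamma$, i.e.\ $\st(1)\in\mb{X}_{\gdEq,\gamma}\subseteq\stSet$; iterating shows $\st(t)\in\mb{X}_{\gdEq,\gamma}\subseteq\stSet$ for all $t\in\nninteg$, which is simultaneously the invariance statement \cref{eq:sttranshift} and the guarantee that the VD inequality remains valid along the entire trajectory. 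Combining the three steps then yields US(A)S together with invariance of $\mb{X}_{\gdEq,\gamma}$, completing the argument.
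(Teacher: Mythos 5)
Your route is the same as the paper's: chain \cref{8_thm:veloqsrMI} (the LMI yields $\qsr$-VD with the quadratic storage $\storfunVelo(\dtst)=\dtst^\top\storquad\dtst$), then \cref{4_lem:velostab} for US(A)S, then the identification $\lyapfunShift(\st,\gdEq)=\storfunVelo(\stMap(\st,\gdEq)-\st)$ from \cref{8_thm:velotoshiftstab}, and finally \cref{thm:shiftinvar} for invariance of the level sets. Your first two steps are correct and match the paper, including the observation that \cref{8_eq:supplystability} for the quadratic supply is exactly $\supR\preceq 0$ and the use of \cref{4_assum:uniqueEq} to get $\lyapfunShift(\cdot,\gdEq)\in\posClass{\stEq}$. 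Your diagnosis that \cref{thm:shiftinvar} cannot be invoked verbatim, because the Lyapunov decrease is only available while the trajectory stays in $\stSet\times\gdSet$, is also correct; the paper itself handles this point only through the convention of \cref{rem:1} and the informal discussion following the corollary.

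The gap is in the induction you offer as the resolution of exactly that subtlety. Your inductive step argues: since $(\st(t),\gdEq)\in\stSet\times\gdSet$, \cref{8_eq:veloMI} holds there, ``and hence the one-step decrease holds.'' That inference is not valid. The decrease $\storfunVelo(\dtst(t+1))\le\storfunVelo(\dtst(t))$ is obtained (see the proof of \cref{8_thm:veloqsrMI} and the velocity form \cref{8_eq:veloform}) from $\dtst(t+1)=\vintA\,\dtst(t)$ under $\dtgd\equiv 0$, where $\vintA$ is the \emph{integral of the Jacobian $\velA$ along the line segment connecting $\st(t)$ and $\st(t+1)$}; the contraction property $\vintA^\top\storquad\vintA\preceq\storquad$ therefore requires \cref{8_eq:veloMI} on that entire segment, which by convexity of $\stSet$ amounts to requiring $\st(t+1)\in\stSet$ as well --- precisely what the induction is trying to prove. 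So the circularity you identified is not removed; it reappears inside your inductive step. Knowing $\st(t)\in\mb{X}_{\gdEq,\gamma}$ only bounds the increment, $\storfunVelo(\st(t+1)-\st(t))\le\gamma$; if $\st(t)$ lies near the boundary of $\stSet$, the cited results give no information about $\stMap(\st(t),\gdEq)$ once it leaves $\stSet$, and the decrease is simply not established there. A genuine repair needs something extra, e.g.\ the stronger hypothesis that the one-step image of the level set stays in $\stSet$ (equivalently, that the Minkowski sum of $\mb{X}_{\gdEq,\gamma}$ with $\{v\in\reals^\stSize:\storfunVelo(v)\le\gamma\}$ is contained in $\stSet$), after which your induction does close; otherwise one must retreat to the conditional reading of \cref{rem:1} --- among trajectories that remain in $\stSet\times\gdSet$, those starting in $\mb{X}_{\gdEq,\gamma}$ never leave it --- which is the level at which the paper's own short proof operates.
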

The proof simply follows the fact that the system is US(A)S by \cref{4_lem:velostab}, which by \cref{8_thm:velotoshiftstab} implies the system is US(A)S w.r.t. the (universal shifted) Lyapunov function $\lyapfunShift(\st(t),\gdEq)=\storfunVelo(\stMap(\st(t),\gdEq)-\st(t))$. By \cref{thm:shiftinvar}, this then implies universal shifted invariance for any input $\gd\equiv\gdEq\in\gdSet$. Note that \cref{cor:shiftinvar} means that verification of \cref{8_eq:veloMI} on a convex set $\stSet\times\gdSet \subseteq \reals^{\stSize\cdot\gdSize}$ only implies universal shifted  stability on the maximum invariant set  $\mb{X}_{\gdEq,\gamma}$, constructed based on the function $\lyapfunShift$ assembled from $\storfunVelo$, which is still contained in $\stSet$. This is due to the fact that we can only give guarantees for $(\st(t),\gd(t))\in\stSet\times\gdSet$ for all $t\in\nninteg$, as also stated in \cref{rem:1}. For initial conditions in $\stSet \setminus \mb{X}_{\gdEq,\gamma}$, there is no guarantee that the state trajectory will not leave $\stSet$ momentarily and take values where $\cref{8_eq:veloMI}$ has not been verified. Increasing the sets $\stSet\times\gdSet \subseteq \reals^{\stSize\cdot\gdSize}$ allows one to conclude US(A)S on larger regions of the state space.

With these results, we have shown so far that velocity stability and dissipativity imply universal shifted stability and invariance of nonlinear systems. 

\subsection{Induced universal shifted dissipativity}\label{8_sec:veloshiftdissip}
Next, we are interested if $\qsr$-VD also implies $\qsr$-USD. In \cite{Koelewijn2023}, this is also investigated for CT nonlinear systems. However, a full proof for the implication that $\qsr$-VD implies $\qsr$-USD  is not presented and to the authors' knowledge does not exist in the literature for either CT or DT. In this section, we will present novel dual DT conditions that link $\qsr$ velocity dissipativity and $\qsr$ universal shifted dissipativity.

Instead of considering nonlinear systems that can be represented in the form of \cref{8_eq:nonlinsys}, in this section, for technical reasons, we will restrict ourselves to nonlinear systems that can be represented as
\begin{subequations}\label{8_eq:nonlinsysState}
\begin{align}
	\st(t+1) &= \stMap(\st(t))+\ltiB \gd(t);\\
	\op(t) &= \ltiC \st(t).
\end{align}
\end{subequations}
For a system represented by \cref{8_eq:nonlinsys}, we can transform \cref{8_eq:nonlinsys} to the form \cref{8_eq:nonlinsysState} at the cost of increasing the state dimension, e.g., by using appropriate input and output filters (see e.g. \cite[Appendix II]{Koelewijn2023}). For \cref{8_eq:nonlinsysState}, we will also assume in this section that $\st(t)\in\stSet$, with $\stSet$ being convex and compact.

For a nonlinear system given by \cref{8_eq:nonlinsysState}, the equilibrium points $(\stEq,\gdEq,\gpEq)\in\eqSet$ satisfy
\begin{subequations}\label{8_eq:nonlinsysStateEqui}
\begin{align}
	\stEq &= \stMap(\stEq)+\ltiB \gdEq;\\
	\opEq &= \ltiC \stEq;
\end{align}
\end{subequations}
and the velocity form of \cref{8_eq:nonlinsysState} is given by
\begin{subequations}\label{8_eq:nonlinsysStateVelo}
	\begin{align}
		\dtst(t) &= \vintA(\st(t+1),\st(t))\dtst(t)+\ltiB\dtgd(t);\\
	\dtgp(t) &= \ltiC\dtst(t);\label{8_eq:nlsmalloutput}
\end{align}
\end{subequations}
for which $\vintA(\stpl,\st) = \int_0^1 \Partial{\stMap}{\st}(\st+\var(\stpl-\st))\, d\var$.

We will next connect $\qsr$-VD for $\qsr$ tuples %
with $\supS=0$, $\supQ\succeq 0$, and $\supR\preceq 0$ to USP notions that can be characterized by a similar $\qsr$ universal shifted supply function. We take the following assumptions:
\begin{assumption}\label{4_as:CB}
	For the nonlinear system given by \cref{8_eq:nonlinsysState}, assume that $\ltiC \ltiB=0$.
\end{assumption}
While \cref{4_as:CB} may seem restrictive, it can relatively easily be satisfied by interconnecting low pass filters to the inputs and outputs of the system, e.g., see \cite[Appendix II]{Koelewijn2023}. Furthermore, we take the following  commonly used assumption in literature \cite{Koroglu2007, Wieland2009}, namely that  the (generalized) disturbances are generated by a stable exosystem:

\begin{assumption}\label{8_as:wExoSys}
	For a given $(\stEq,\gdEq,\gpEq)\in\eqSet$ and $\beta \in \nnreals$, assume that $\gd$ is generated by the exosystem
	\begin{equation}\label{8_eq:exoSys}
		\gd(t+1) = \exoA(\gd(t)-\gdEq)+\gdEq,
	\end{equation}
	where $\exoA\in\reals^{\gdSize\times\gdSize}$ is Schur and $\norm{\exoA-I}\leq \beta$. The corresponding signal behavior is
	\begin{equation}
		\exoBvr_{(\gdEq,\beta)} :=\left\{ \gd\in \gdSet^{\nninteg} \mid \gd\text{ satisfies } \cref{8_eq:exoSys}\right\}.
	\end{equation}
\end{assumption}

Before presenting our results, we first give the following technical proposition: 
\begin{proposition}\label{8_as:veloShiftBound2}
	Given a matrix $R\in\sym^\gpSize$ with $R\preceq 0$, then there exists an $\alpha\in\preals$, such that for all $\stEq\in\stSetEq$ and $\st\in\stSet$
	\begin{equation}
	\!\!(\star)\!^\top \!R \ltiC\!\left(\vintA(\st,\stEq)\!-\!I\right)\!(\st-\stEq)\leq\alpha^{-1}(\star)\!^\top \!R \ltiC(\st-\stEq).
	\end{equation}
\end{proposition}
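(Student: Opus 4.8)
The plan is to collapse the left-hand side to a quantity that no longer depends on $\stEq$, and then to read the claim as a uniform comparison of two continuous quadratic forms over a compact set, to be settled by an extreme-value argument. First I would apply the fundamental theorem of calculus to $\vintA$, giving $\vintA(\st,\stEq)(\st-\stEq)=\stMap(\st)-\stMap(\stEq)$, so that
\[
\ltiC\big(\vintA(\st,\stEq)-I\big)(\st-\stEq)=\ltiC\big(\stMap(\st)-\st\big)-\ltiC\big(\stMap(\stEq)-\stEq\big).
\]
Using the equilibrium relation $\stMap(\stEq)-\stEq=-\ltiB\gdEq$ from \cref{8_eq:nonlinsysStateEqui} together with $\ltiC\ltiB=0$ of \cref{4_as:CB}, the last term vanishes and the left-hand vector reduces to the velocity output $v:=\ltiC(\stMap(\st)-\st)$, which is exactly $\dtgp$ for the constant input $\gd\equiv\gdEq$. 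Writing $w:=\ltiC(\st-\stEq)=\gp-\gpEq$ and using $R\preceq0$, multiplying the claimed inequality by $\alpha>0$ turns it into the equivalent uniform domination $w^\top(-R)w\le\alpha\,v^\top(-R)v$ on $\stSet\times\stSetEq$.

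I would then exploit compactness. Since $\stMap\in\C{1}$, the maps $N(\st,\stEq):=w^\top(-R)w$ and $D(\st,\stEq):=v^\top(-R)v$ are continuous and nonnegative on the compact set $\stSet\times\stSetEq$, and a finite $\alpha$ with $N\le\alpha D$ exists as soon as the ratio $N/D$ stays bounded; one then takes $\alpha$ equal to its maximum, attained by the extreme-value theorem after a continuous extension of $N/D$ across the zero set of $D$.

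The main obstacle is precisely this boundedness, equivalently the pointwise nondegeneracy $D(\st,\stEq)=0\Rightarrow N(\st,\stEq)=0$: the velocity output, measured in the $(-R)$-seminorm, must not vanish while the shifted output persists. This inclusion of zero sets carries the real content of the statement and is where the structural hypotheses must be brought to bear --- the restriction of $\st$ to the region $\stSet$ on which the velocity form is nondegenerate, the uniqueness of equilibria from \cref{4_assum:uniqueEq}, and \cref{4_as:CB}. Once it is in place, continuity and compactness bound $N/D$ and the extreme-value step finishes the proof.
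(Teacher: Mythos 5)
Your reduction of the left-hand side is correct, and it is in fact the same computation the paper performs inside the proof of \cref{8_thm:veloshiftperf} (run there in the opposite direction): by the fundamental theorem of calculus, the equilibrium relation \cref{8_eq:nonlinsysStateEqui}, and $\ltiC\ltiB=0$ from \cref{4_as:CB}, the left-hand vector collapses to $v=\ltiC(\stMap(\st)-\st)$, independent of $\stEq$. The genuine gap is exactly the step you flag and then defer: the zero-set inclusion ($v^\top(-R)\,v=0\Rightarrow w^\top(-R)\,w=0$) is not delivered by the hypotheses you list --- and in general it is \emph{false}. Precisely because of the collapse you performed, $v$ vanishes at \emph{every} point of $\stSetEq\cap\stSet$: if $\st=\stEq'$ is an equilibrium state for some constant input $\gdEq'$, then $\stMap(\stEq')-\stEq'=-\ltiB\gdEq'$, so $v=-\ltiC\ltiB\gdEq'=0$ under \cref{4_as:CB}. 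Choosing any other equilibrium $\stEq\in\stSetEq$ whose output differs, the right-hand side is strictly negative (e.g.\ for $R=-I$, the \dlstwo case of \cref{8_cor:veloshiftl2}, whenever $\ltiC\stEq'\neq\ltiC\stEq$), while the left-hand side is zero, so no finite $\alpha$ exists. \cref{4_assum:uniqueEq} cannot rescue this: it asserts bijectivity of the map $\gdEq\mapsto\stEq$, not that $\stSet$ contains a single equilibrium; indeed the whole universal shifted framework concerns regions containing a continuum of equilibria with distinct outputs.

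A secondary flaw is that, even where the zero-set inclusion does hold, continuity plus compactness do not bound the ratio $N/D$: the two forms may vanish to different orders (compare $N=s^2$, $D=s^4$ on $[-1,1]$), so $N/D$ need not admit a continuous extension across the zero set of $D$, and the extreme-value step does not go through without a quantitative nondegeneracy bound --- which is again exactly the content of the statement being proved. For comparison, the paper itself offers no real proof of \cref{8_as:veloShiftBound2}, only the one-sentence remark that boundedness of $\vintA$ on the compact set $\stSet$ suffices; your analysis actually exposes why that remark is inadequate (boundedness of $\vintA$ gives an \emph{upper} bound on $v^\top(-R)\,v$, whereas the inequality needs a \emph{lower} bound relative to $w^\top(-R)\,w$). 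So your instinct about where the real content lies is right, but the proposal does not close that gap, and from the stated hypotheses alone it cannot; the statement is better understood as a nondegeneracy \emph{assumption} on the system (restricting which sets $\stSet$, $\stSetEq$ and matrices $R$ are admissible) than as a provable proposition.
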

In case that $\vintA$ is bounded, there always exists an $\alpha$ for a given $R$ such that the condition in \cref{8_as:veloShiftBound2} holds, as $\stSet$ is compact.

Under \cref{4_as:CB,8_as:wExoSys}, we can show the following result:
\begin{theorem}[USP from VD]\label{8_thm:veloshiftperf}
	If a nonlinear system given by \cref{8_eq:nonlinsysState} is $\qsr$-VD with $\supS=0$, $\supQ\succeq 0$, $\supR\preceq 0$, and $\supR$ satisfies the condition in \cref{8_as:veloShiftBound2}, then under \cref{4_as:CB,8_as:wExoSys},
	for every $(\stEq,\gdEq,\gpEq)\in\eqSet$, it holds that
	\begin{equation}\label{8_eq:pf:qrvspdt}
	\sum_{t=0}^{T} \beta^2(\star)\!^\top \! \supQ (\gd(t)-\gdEq)+\alpha^{-1}(\star)\!^\top \!  \supR (\gp(t)-\gpEq)\geq 0,
\end{equation}
for all $T\geq0$ and $(\gd,\gp)\in\proj_\mr{\gd,\gp}\B$ with $\gd\in\exoBvr_{(\gdEq,\beta)}$ and\footnote{\label{footnote}The results can also be extended to $\dtst(0)\neq 0$, which will introduce an additional constant positive term on the left-hand side of \cref{8_eq:pf:qrvspdt}.} $\dtst(0)=0$.
\end{theorem}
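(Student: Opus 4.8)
The plan is to start from the summed velocity dissipation inequality and convert both quadratic running terms from velocity increments into the shifted signals $\gd-\gdEq$ and $\gp-\gpEq$, term by term. First I would instantiate the velocity dissipation inequality \cref{8_eq:velodissip} with the quadratic supply \cref{8_eq:velosupply} (here $\supS=0$) on the horizon $t_0=0$, $t_1=T$. The storage telescopes, leaving $\storfunVelo(\dtst(T+1))-\storfunVelo(\dtst(0))$ on the left-hand side; since $\storfunVelo\in\posClass{0}$ gives $\storfunVelo(0)=0$ and $\storfunVelo\ge0$, the hypothesis $\dtst(0)=0$ makes this difference nonnegative, so it can be discarded. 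This yields the pivotal bound $\sum_{t=0}^{T}\big[\dtgd(t)^\top\supQ\dtgd(t)+\dtgp(t)^\top\supR\dtgp(t)\big]\ge0$, and the remaining work is to dominate each running term by its shifted counterpart.

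For the input channel I would use the exosystem of \cref{8_as:wExoSys}: subtracting consecutive instances of \cref{8_eq:exoSys} gives $\dtgd(t)=(\exoA-I)(\gd(t)-\gdEq)$, so that $\dtgd(t)^\top\supQ\dtgd(t)=(\gd(t)-\gdEq)^\top(\exoA-I)^\top\supQ(\exoA-I)(\gd(t)-\gdEq)$. With $\supQ\succeq0$ and $\norm{\exoA-I}\le\beta$, the operator-norm bound delivers $(\exoA-I)^\top\supQ(\exoA-I)\preceq\beta^2\supQ$, and hence $\dtgd(t)^\top\supQ\dtgd(t)\le\beta^2(\gd(t)-\gdEq)^\top\supQ(\gd(t)-\gdEq)$.

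For the output channel I would rewrite the velocity output through the equilibrium. Writing $\dtst(t)=\st(t+1)-\st(t)$, substituting \cref{8_eq:nonlinsysState}, subtracting \cref{8_eq:nonlinsysStateEqui}, and applying the fundamental theorem of calculus to $\stMap$ gives $\dtst(t)=(\vintA(\st(t),\stEq)-I)(\st(t)-\stEq)+\ltiB(\gd(t)-\gdEq)$. Premultiplying by $\ltiC$ and invoking $\ltiC\ltiB=0$ from \cref{4_as:CB} annihilates the input term, so $\dtgp(t)=\ltiC(\vintA(\st(t),\stEq)-I)(\st(t)-\stEq)$, while $\gp(t)-\gpEq=\ltiC(\st(t)-\stEq)$. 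Since $\st(t)\in\stSet$ and $\stEq\in\stSetEq$, \cref{8_as:veloShiftBound2} applies and delivers exactly $\dtgp(t)^\top\supR\dtgp(t)\le\alpha^{-1}(\gp(t)-\gpEq)^\top\supR(\gp(t)-\gpEq)$.

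Finally, since each velocity running term is no larger than its shifted counterpart, the sum of the shifted terms dominates the nonnegative sum of velocity terms from the pivotal inequality; summing over $t\in\{0,\dots,T\}$ therefore reproduces \cref{8_eq:pf:qrvspdt}. I expect the output step to be the crux: the key realisation is that $\ltiC\ltiB=0$ is precisely what purges the input contribution from $\dtgp(t)$, leaving a pure function of $\st(t)-\stEq$ in exactly the form consumed by \cref{8_as:veloShiftBound2}. The input step is comparatively routine, but it is where $\beta$ enters and where one implicitly relies on $\supQ$ being compatible with the scalar operator-norm bound, e.g.\ $\supQ=\perf^2 I$ in the \dltwo-gain case or $\supQ=0$ for passivity.
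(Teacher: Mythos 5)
Your proposal is correct and follows essentially the same route as the paper's own proof: discard the storage terms using $\dtst(0)=0$ and $\storfunVelo\in\posClass{0}$, bound the input term via the exosystem relation $\dtgd(t)=(\exoA-I)(\gd(t)-\gdEq)$, and eliminate the input contribution from $\dtgp$ via the fundamental-theorem-of-calculus identity together with $\ltiC\ltiB=0$ so that \cref{8_as:veloShiftBound2} applies directly. The only cosmetic difference is that the paper first rescales to $\tilde\supQ=\supQ/\norms{\supQ}$, $\tilde\supR=\supR/\norms{\supQ}$ before invoking the same input bound, and the compatibility caveat you flag for $\supQ$ (the matrix inequality $(\exoA-I)^\top\supQ(\exoA-I)\preceq\beta^2\supQ$ not following from the operator-norm bound alone unless, e.g., $\supQ\propto I$) is equally present, though unremarked, in the paper's step \cref{8_eq:pf:qrvsp3}.
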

\begin{proof}
See \cref{8_pf:veloshiftperf}.
\end{proof}
Applying the result of \cref{8_thm:veloshiftperf} to the $\qsr$ tuple $\qsr = (\perf^2I,0,-I)$, corresponding to (universal shifted) \dltwo-gain, we obtain the following \lcnamecref{8_cor:veloshiftl2}:
\begin{corollary}[Bounded \dlstwo-gain from velocity dissipativity]\label{8_cor:veloshiftl2}
		If a nonlinear system given by \cref{8_eq:nonlinsysState} is velocity $\qsr$ dissipative for $\qsr=(\perf^2I,0,-I)$,  where $\supR=-I$ satisfies \cref{8_as:veloShiftBound2}, then under \cref{4_as:CB,8_as:wExoSys},
		 the system has an \dlstwo-gain bound of $\tilde\perf = \sqrt{\alpha\beta^2\perf^2}$.
\end{corollary}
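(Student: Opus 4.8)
This corollary is a direct specialization of \cref{8_thm:veloshiftperf}, so the plan is simply to instantiate that theorem at the tuple $\qsr=(\perf^2I,0,-I)$ and then pass the resulting finite-horizon estimate to the infinite horizon. First I would verify that this tuple meets the hypotheses of \cref{8_thm:veloshiftperf}: we have $\supS=0$, $\supQ=\perf^2I\succeq 0$, and $\supR=-I\preceq 0$, while $\supR=-I$ is assumed to satisfy the bound in \cref{8_as:veloShiftBound2}. Hence, under \cref{4_as:CB,8_as:wExoSys}, the dissipation estimate \cref{8_eq:pf:qrvspdt} applies for every equilibrium $(\stEq,\gdEq,\gpEq)\in\eqSet$.

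Substituting $\supQ=\perf^2I$ and $\supR=-I$ into \cref{8_eq:pf:qrvspdt} and expanding the quadratic terms, I would obtain, after multiplying through by $\alpha>0$ and rearranging,
\begin{equation*}
\sum_{t=0}^{T}\norm{\gp(t)-\gpEq}^2 \;\leq\; \alpha\beta^2\perf^2\sum_{t=0}^{T}\norm{\gd(t)-\gdEq}^2,
\end{equation*}
valid for all $T\geq 0$ and every $(\gd,\gp)\in\proj_\mr{\gd,\gp}\B$ with $\gd\in\exoBvr_{(\gdEq,\beta)}$ and $\dtst(0)=0$.

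The next step is the limit $T\to\infty$. Since $\exoA$ is Schur, the shifted input decays geometrically, $\gd(t)-\gdEq=\exoA^t(\gd(0)-\gdEq)$, so the right-hand series converges and $\gd-\gdEq$ lies in \dltwo. As both partial sums are monotone nondecreasing in $T$, the inequality is preserved in the limit, yielding $\sum_{t=0}^{\infty}\norm{\gp(t)-\gpEq}^2\leq \alpha\beta^2\perf^2\sum_{t=0}^{\infty}\norm{\gd(t)-\gdEq}^2$. Taking square roots then identifies this with an \dlstwo-gain bound $\tilde\perf=\sqrt{\alpha\beta^2\perf^2}$, and since the estimate holds uniformly over all $(\stEq,\gdEq,\gpEq)\in\eqSet$, the resulting gain is universal shifted as claimed.

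The only genuine subtlety — and thus the step I would treat most carefully — is the passage to the limit: one must confirm that the exosystem-generated disturbance class lies in \dltwo\ (guaranteed here by the Schur property of $\exoA$) so that the shifted gain is well defined, and that the initial-condition requirement $\dtst(0)=0$ inherited from \cref{8_thm:veloshiftperf} is consistent with the \dlstwo-gain definition, with the footnote of \cref{8_thm:veloshiftperf} accounting for the $\dtst(0)\neq 0$ case through an additive nonnegative constant that does not affect the gain.
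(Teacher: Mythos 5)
Your proposal is correct and takes essentially the same route as the paper: instantiate \cref{8_thm:veloshiftperf} at $\qsr=(\perf^2 I,0,-I)$, multiply \cref{8_eq:pf:qrvspdt} by $\alpha$, and read off the resulting inequality as a universal shifted supply inequality for $\qsr=(\alpha\beta^2\perf^2 I,0,-I)$, i.e., an \dlstwo-gain bound of $\tilde\perf=\sqrt{\alpha\beta^2\perf^2}$. Your extra care about the passage $T\to\infty$ (using the Schur property of $\exoA$ to ensure the shifted input is in \dltwo) is a correct elaboration that the paper omits, since it states the gain directly from the finite-horizon inequality holding for all $T\geq 0$.
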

Note that this result follows from \cref{8_thm:veloshiftperf} by multiplying \cref{8_eq:pf:qrvspdt} by $\alpha$. The resulting inequality then corresponds to a $\qsr$ US supply function with $\qsr = (\alpha\beta^2\perf^2 I, 0, -I)$, which corresponds to an \dlstwo-gain of $\tilde\perf = \sqrt{\alpha\beta^2\perf^2}$. 

Combining these results with the result of \cref{8_thm:veloqsrMI} gives us a condition to analyze universal shifted performance of DT nonlinear systems in terms of an infinite dimensional set of LMIs given by \cref{8_eq:veloMI} on a chosen convex set $\stSet\times\gdSet \subseteq \reals^{\stSize\cdot\gdSize}$. Through \cref{cor:shiftinvar}, we can clearly characterize for any generalized disturbance  $\gd(t) \in \gdSet$ the region of the state space $\stSet$ where universal shifted stability and performance are guaranteed. As aforementioned, in \cref{7_an_sec:lpv}, we will discuss how can we turn the infinite dimensional set of LMIs into a finite dimensional set in order to cast the analysis problem as convex optimization problem.

The results that we have presented in this section on the connection between universal shifted stability and performance and velocity analysis for DT systems can be seen as the dual of the CT results that have been presented in \cite{Koelewijn2023}. While the results in DT that we have presented in this paper are analogous to the CT results in \cite{Koelewijn2023}, the proofs of the underlying results are very much different due to different nature of the time operators and the velocity forms in CT and DT.

Next, we will show how a different, but similar, dissipativity notion can be used to analyze incremental stability and performance of DT nonlinear systems.

\section{Differential Analysis}\label{sec:diffanalysis}
\subsection{The differential form}

For CT systems, it has been show in \cite{Verhoek2020} how dissipativity of the differential form implies incremental dissipativity. Similarly, in \cite{Koelewijn2021a}, preliminary results have also shown this for DT systems, however, under a restricted form of the storage function. In this section, we provide a novel generalization of these results to show how differential dissipativity implies incremental dissipativity under a state-dependent storage function.

Let us first introduce the following notation: $\pathSet(\varphi,\otherTraj\varphi)$denotes the set of (smooth) paths between points $\varphi,\otherTraj\varphi\in\reals^n$, i.e.,
\begin{equation}\label{5_eq:pathsetdef}
	\pathSet(\varphi,\otherTraj \varphi) := \lbrace \parTraj{\varphi} \in (\reals^n)^{[0,\,1]}\mid \bar \varphi\in\C{1},\, \parTraj{\varphi}(0) = \otherTraj \varphi,\, \parTraj{\varphi}(1)=\varphi\rbrace.
\end{equation}

Next, consider two arbitrary trajectories of the system \cref{8_eq:nonlinsys}: $(\st,\gd,\gp)$, $(\sto,\gdo,\gpo)\in\B$. We parameterize any two trajectories between these in terms of a path connecting their initial conditions: $\stbIc\in\pathSet(\stIc,\stoIc)$ and a path connecting their input trajectories: $\gdb(t)\in\pathSet(\gd(t),\gdo(t))$, resulting in the state transition map $\stb(t,\var)=\sttran(t,t_0,\stbIc(\var), \gdb(\var))\in\reals^\stSize$.
This gives that for any $\var\in[0,\,1]$ and all $(\st,\gd,\gp),(\sto,\gdo,\gpo)\in\B$, it holds that 
\begin{subequations}\label{7_an_eq:nlparam}
\begin{align}
	\stb(t+1,\var) &= \stMap(\stb(t,\var),\gdb(t,\var));\\
	\gpb(t,\var) &= \opMap(\stb(t,\var),\gdb(t,\var));
\end{align}
\end{subequations}
where $(\stb(\var),\gdb(\var),\gpb(\var))\in\B$. Note that for $\var=0$, we obtain $(\stb(0),\gdb(0),\gpb(0)) = (\sto, \gdo,\gpo)\in\B$, while for $\var=1$, we get $(\stb(1),\gdb(1),\gpb(1)) = (\st, \gd,\gp)\in\B$. Differentiating the parameterized dynamics w.r.t. $\var$, results in the so-called differential form of \cref{8_eq:nonlinsys}, given by
\begin{subequations}\label{7_an_eq:sys_diff}
\begin{align}
	\dst (t+1)\!&=\!\difA(\stb(t),\gdb(t))\dst(t) \!+\! \difB(\stb(t),\gdb(t))\dgd(t);\\ 
	\dgp(t)\! &=\! \difC(\stb(t),\gdb(t))\dst(t)  \!+\! \difD(\stb(t),\gdb(t))\dgd(t);
\end{align}
\end{subequations}
where we omitted dependency on $\var$ for the sake of readability. In \cref{7_an_eq:sys_diff}, $\dst(t,\var)= \Partial{\stb}{\var}(t,\var) \in\reals^\stSize$, $\dgd(t,\var)= \Partial{\gdb}{\var}(t,\var) \in\reals^\gdSize$,\linebreak $\dgp(t,\var)= \Partial{\gpb}{\var}(t,\var) \in\reals^\gpSize$, and 
\begin{equation}\label{7_an_eq:diffabcd}
\difA = \Partial{\stMap}{\st},  \quad \difB = \Partial{\stMap}{\gd}, \quad \difC = \Partial{\opMap}{\st},  \quad \difD = \Partial{\opMap}{\gd},
\end{equation}
where $(\stb(\var),\gdb(\var))\in\proj_\mr{\st,\gd}\B$ for all $\var\in[0,\,1]$. The differential form represents the dynamics of the variations along the trajectories of the system represented by \cref{8_eq:nonlinsys}. 

The differential form allows us to define differential stability:
\begin{definition}[Differential stability]\label{def:difstab}
	The nonlinear system given by \cref{8_eq:nonlinsys} with differential form \cref{7_an_eq:sys_diff} is \emph{Differentially (Asymptotically) Stable} (D(A)S), if the differential form is (asymptotically) stable in the Lyapunov sense w.r.t. the origin (see also \cref{4_def:shiftedstab}), i.e., the differential state $\dst$ is (asymptotically) stable w.r.t. 0.
\end{definition}
Similar to the definition of velocity stability in \cref{def:velostab}, differential stability considers standard stability of the differential form. Results for this have been discussed in \cite{Tran2016,Tran2018}, which we will briefly recap: 
\begin{theorem}[Differential Lyapunov stability \cite{Tran2016,Tran2018}]\label{thm:difstab}
	The nonlinear system given by \cref{8_eq:nonlinsys} is DS, if there exists a function $\lyapfunDif:\reals^{\stSize}\times\reals^{\stSize}\to\nnreals$ with $\lyapfunVelo\in\C{0}$ and $\lyapfunDif(\stb,\cdot)\in\posClass{0},\forall\stb\in \reals^\stSize$, such that
	\begin{equation}\label{8_eq:difstability}
		\lyapfunDif(\stb(t+1),\dst(t+1))-\lyapfunDif(\stb(t),\dst(t))\leq 0,
	\end{equation}
	for all $t\in\nninteg$ and for all $\stb\in\proj_{\st}\Bw(\gd)$ under all measurable and bounded $\gd\in(\reals^\gdSize)^\nninteg$. If \cref{8_eq:difstability} holds, but with strict inequality except when $\dst(t)=0$, then the system is DAS.
\end{theorem}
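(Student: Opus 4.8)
The plan is to read \cref{def:difstab} literally: differential stability is nothing but Lyapunov stability, with respect to the origin $\dst=0$, of the differential form \cref{7_an_eq:sys_diff} regarded as a time-varying linear system in the differential state $\dst$, whose state matrix $\difA(\stb(t),\gdb(t))$ is evaluated along an admissible base trajectory $\stb\in\proj_\st\Bw(\gd)$ (and with $\dgd\equiv 0$ for the internal-stability notion). Thus $\lyapfunDif(\stb(\cdot),\cdot)$ plays the role of a parameter-dependent Lyapunov function, and the whole proof reduces to the standard discrete-time Lyapunov argument once suitable comparison bounds are in place.

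First I would extract comparison bounds from the hypothesis $\lyapfunDif(\stb,\cdot)\in\posClass{0}$ for every $\stb\in\reals^\stSize$. By the definition of $\posClass{0}$ (positive definiteness and decrescence about the origin, \cite[Definition~3.3]{Scherer2015}) there exist class-$\classK$ functions $\alpha_1,\alpha_2$ with $\alpha_1(\norm{\dst})\le\lyapfunDif(\stb,\dst)\le\alpha_2(\norm{\dst})$; the lower bound encodes positive definiteness and the upper (decrescence) bound is precisely what furnishes the $\delta(\epsilon)$ in the stability estimate. For stability, given $\epsilon>0$ I would pick $\delta(\epsilon)>0$ so that $\alpha_2(\delta)<\alpha_1(\epsilon)$. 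Then along any differential trajectory with $\norm{\dst(0)}<\delta$, condition \cref{8_eq:difstability} makes $t\mapsto\lyapfunDif(\stb(t),\dst(t))$ non-increasing, so for all $t\in\nninteg$ one has $\alpha_1(\norm{\dst(t)})\le\lyapfunDif(\stb(t),\dst(t))\le\lyapfunDif(\stb(0),\dst(0))\le\alpha_2(\norm{\dst(0)})<\alpha_2(\delta)<\alpha_1(\epsilon)$, and invertibility of $\alpha_1\in\classK$ gives $\norm{\dst(t)}<\epsilon$, i.e.\ DS.

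For the asymptotic case, the strict-decrease hypothesis makes the sequence $t\mapsto\lyapfunDif(\stb(t),\dst(t))$ strictly decreasing whenever $\dst(t)\neq 0$; being non-increasing and bounded below by $0$ it converges, and the strict decrease together with the lower bound $\alpha_1$ forces $\dst(t)\to 0$, yielding DAS. The step requiring the most care — and the real obstacle — is the parameter dependence of $\lyapfunDif$ through $\stb(t)$: the comparison functions $\alpha_1,\alpha_2$ must be taken uniform in $\stb$ so the $\epsilon$–$\delta$ estimate is trajectory-independent, and the asymptotic conclusion cannot invoke a single autonomous decrease but needs a uniform/LaSalle-type argument for the time-varying difference inequality. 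This is exactly where the discrete-time time-varying Lyapunov machinery of \cite{Tran2016,Tran2018} (see also \cite{Bof2018}) is used; everything else is the routine monotone-sequence argument above.
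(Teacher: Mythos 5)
The paper contains no proof of \cref{thm:difstab} to compare against: the theorem is imported from \cite{Tran2016,Tran2018}, and the text immediately following it simply defers the proof to those references. Judged on its own merits, your sketch follows the standard route and its stability half is sound: the sandwich bounds $\alpha_1(\norm{\dst})\le\lyapfunDif(\stb,\dst)\le\alpha_2(\norm{\dst})$ together with monotonicity of $t\mapsto\lyapfunDif(\stb(t),\dst(t))$ give the $\epsilon$--$\delta$ estimate. You are also right about where the crux lies: the hypothesis $\lyapfunDif(\stb,\cdot)\in\posClass{0}$ for each \emph{fixed} $\stb$ only yields comparison functions that may depend on $\stb$, and without bounds uniform in $\stb$ the resulting $\delta(\epsilon)$ is trajectory-dependent, which is weaker than what \cref{def:difstab}, read in the uniform sense of \cref{4_def:incrstab}, requires. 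A self-contained proof would have to either strengthen the hypothesis to uniform decrescence/positive definiteness or argue that the references operate under that strengthening.

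One step deserves a sharper warning. As first written, your DAS argument (``strict decrease together with the lower bound $\alpha_1$ forces $\dst(t)\to 0$'') is false as stated: a strictly decreasing sequence bounded below by $0$ can converge to a positive limit (take decrements $2^{-t}$), so strict inequality in \cref{8_eq:difstability} alone does not yield attractivity. One needs either a quantitative decrease of the form $-\alpha_3(\norm{\dst(t)})$ with $\alpha_3\in\classK$, or a compactness/LaSalle-type argument, and the latter is genuinely delicate here because the differential dynamics are time-varying through $(\stb(t),\gdb(t))$. You do flag exactly this and defer the machinery to \cite{Tran2016,Tran2018,Bof2018} --- which is precisely what the paper does for the entire theorem --- so your proposal is no less complete than the paper's treatment; but if you intend the proof to stand alone, the uniform comparison bounds and the quantitative (or LaSalle-type) decrease are the two items you must actually supply rather than cite.
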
 
See also \cite{Tran2016,Tran2018} for the proof. Similarly, using the differential form, we formulate the definition of differential dissipativity, which so far has received little attention in literature in the DT setting:
\begin{definition}[Differential dissipativity]
Consider the system given by \cref{8_eq:nonlinsys} and its differential form  \cref{7_an_eq:sys_diff}. The system is \emph{Differentially Dissipative} (DD) w.r.t. a supply function $\supfunDif:\reals^\gdSize\times\reals^\gpSize\to\reals$, if there exists a storage function $\storfunDif: \reals^\stSize\!\times\reals^\stSize\to\nnreals$ with $\storfunDif\in\C{0}$ and $\storfunDif(\stb,\cdot)\in\posClass{0},\,\forall\,\stb\in\reals^\stSize$, such that
\begin{multline}\label{7_an_eq:DIE_diff}
{\storfunDif}\big(\stb(t_1+1),\dst(t_1+1)\big) - \storfunDif\big(\stb(t_0),\dst(t_0)\big) \leq\\ \sum_{t=t_0}^{t_1}\supfunDif\big(\dgd(t),\dgp(t)\big), 
\end{multline}
for all $(\stb,\gdb)\in\proj_\mr{x,u}\B$ and for all $t_0,t_1\in\nninteg$, with $t_0 \le t_1$.
\label{7_an_eq:diffdiss}
\end{definition}

As in \cref{8_sec:veloanalysis}, we consider quadratic supply functions of the form 
\begin{equation}\label{7_an_eq:supply_diff}
	\supfunDif(\dgd,\, \dgp) = \begin{bmatrix}\dgd \\ \dgp \end{bmatrix}^\top \qsrMat \begin{bmatrix}\dgd\\ \dgp\end{bmatrix}.
\end{equation}
Differential dissipativity w.r.t. supply functions of this form will be referred to as $\qsr$-DD.

Note that checking $\qsr$-DD of the (primal form of the) system \cref{8_eq:nonlinsys} can be seen as checking ``classical $\qsr$ dissipativity'' of the differential form of the system.

For CT systems in \cite{Verhoek2020}, it has been show how $\qsr$-DD can be analyzed through feasibility check of a(n) (infinite dimensional) set of LMIs. In DT, this has also been shown in \cite{Koelewijn2021a}. However, the work in \cite{Koelewijn2021a} only considers a quadratic (differential) storage function with a constant matrix. As a contribution of this paper, we will show that how to formulate a similar result using a quadratic storage function with a state-dependent matrix, and importantly how this condition will imply incremental dissipativity of the nonlinear system. Due to the state-dependent nature of the (differential) storage function, showing this implication is more involved. As mentioned, to obtain the results, we consider storage functions of a quadratic form:  %
	\begin{equation}\label{7_an_eq:diffStorage}
		\storfunDif(\stb,\dst) = \dst^\top \storquad(\stb) \dst ,
	\end{equation}
	with $\storquad$ satisfying the following condition:
\begin{condition}\label{5_as:mbar}
	The matrix function $\storquad:\reals^\stSize\to\sym^\stSize$ with $\storquad\in\C{1}$ is real, symmetric, bounded and positive definite, i.e., $\exists \, \slackA_1, \slackA_2 \in\preals$, such that for all $\stb \in\reals^\stSize$, $\slackA_1 I\preceq \storquad(\stb ) \preceq \slackA_2 I$.
\end{condition}
	 Moreover, for the system  \cref{8_eq:nonlinsys}, let us also consider the set $\stdotSet$, which is the smallest convex set such that $(\st(t+1)-\st(t))\in\stdotSet$ for all $t \in\nninteg$ corresponding to a given set $\stSet\times\gdSet\subseteq\reals^{\stSize\cdot\gdSize}$. This allows us to obtain the following result to analyze differential dissipativity in DT:
\begin{theorem}[$\qsr$-DD condition]\label{7_an_thm:diffdissip}
	A nonlinear system given by \cref{8_eq:nonlinsys} is $\qsr$-DD on $\stSet\times\gdSet \subseteq \reals^{\stSize \cdot \gdSize}$, if there exists a storage function 
	\cref{7_an_eq:diffStorage} with $\storquad$ satisfying \cref{5_as:mbar}, such that 	\begin{multline}\label{7_an_eq:diffDissipFull}
	(\star)^\top \begin{bmatrix}
		-\storquad(\stb) & {0}\\{0} & \storquad(\stb+\stbdot)
	\end{bmatrix}\begin{bmatrix}
		I & {0}\\\difA(\stb,\gdb) & \difB(\stb,\gdb)
	\end{bmatrix}-\\(\star)^\top \qsrMat \begin{bmatrix}
		{0} & I\\\difC(\stb,\gdb) & \difD(\stb,\gdb)
	\end{bmatrix}\preceq 0,
	\end{multline}
	for all $(\stb,\gdb)\in{\stSet}\times{\gdSet}$ and $\stbdot\in\stdotSet$.
\end{theorem}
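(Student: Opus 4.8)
The plan is to show that the matrix inequality \cref{7_an_eq:diffDissipFull}, after left- and right-multiplication by the vector $\col(\dst(t),\dgd(t))$, is precisely the one-step differential dissipation inequality, and then to telescope this over $t\in\{t_0,\dots,t_1\}$ to recover \cref{7_an_eq:DIE_diff}. The one-step reduction is the crux; the summation is standard.

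First I would fix a time $t$ along a pair of parameterized trajectories and abbreviate $\stb=\stb(t)$, $\gdb=\gdb(t)$, $\dst=\dst(t)$, $\dgd=\dgd(t)$. Since $\stb(t+1)=\stMap(\stb,\gdb)$, the increment $\stbdot:=\stb(t+1)-\stb$ satisfies $\storquad(\stb+\stbdot)=\storquad(\stb(t+1))$. Multiplying \cref{7_an_eq:diffDissipFull} on the right by $\col(\dst,\dgd)$ and on the left by its transpose, the first block becomes $\col(\dst,\dst(t+1))^\top\diag(-\storquad(\stb),\storquad(\stb+\stbdot))\col(\dst,\dst(t+1))$, where I used the differential dynamics $\dst(t+1)=\difA(\stb,\gdb)\dst+\difB(\stb,\gdb)\dgd$ from \cref{7_an_eq:sys_diff}; this collapses to $\storfunDif(\stb(t+1),\dst(t+1))-\storfunDif(\stb,\dst)$. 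The second block becomes $\col(\dgd,\dgp)^\top\qsrMat\col(\dgd,\dgp)$, using $\dgp=\difC(\stb,\gdb)\dst+\difD(\stb,\gdb)\dgd$, which is the supply $\supfunDif(\dgd,\dgp)$ of \cref{7_an_eq:supply_diff}. Negative semi-definiteness of \cref{7_an_eq:diffDissipFull} thus yields $\storfunDif(\stb(t+1),\dst(t+1))-\storfunDif(\stb(t),\dst(t))\le\supfunDif(\dgd(t),\dgp(t))$.

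Next I would argue that this one-step inequality is available at every $t$: along any trajectory confined to $\stSet\times\gdSet$ (in the sense of \cref{rem:1}) one has $(\stb(t),\gdb(t))\in\stSet\times\gdSet$ and, by the defining property of $\stdotSet$, the realized increment $\stbdot=\stb(t+1)-\stb(t)\in\stdotSet$; since \cref{7_an_eq:diffDissipFull} is assumed for all $(\stb,\gdb)\in\stSet\times\gdSet$ and all $\stbdot\in\stdotSet$, it holds in particular at these realized values. Summing over $t=t_0,\dots,t_1$ makes the storage terms telescope, leaving exactly \cref{7_an_eq:DIE_diff}. Finally, \cref{5_as:mbar} gives $\slackA_1\norm{\dst}^2\le\storfunDif(\stb,\dst)\le\slackA_2\norm{\dst}^2$, so the candidate \cref{7_an_eq:diffStorage} is an admissible storage function in the sense of \cref{7_an_eq:diffdiss} (positive definite and decrescent in its second argument), which completes the identification of the system as $\qsr$-DD.

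I expect the main obstacle to be the careful bookkeeping of the two distinct arguments of $\storquad$—$\storquad(\stb)$ at the current state versus $\storquad(\stb+\stbdot)$ at the successor state—and in justifying that quantifying $\stbdot$ freely over $\stdotSet$, rather than enforcing the algebraic relation $\stbdot=\stMap(\stb,\gdb)-\stb$, is sound. This decoupling is what keeps the condition sufficient (though possibly conservative): the realized increment is merely one admissible element of $\stdotSet$, so the LMI evaluated there already delivers the needed inequality. The quadratic expansion itself is routine matrix algebra and needs no separate argument.
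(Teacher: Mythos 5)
Your proof is correct and takes essentially the same approach as the paper's: pre- and post-multiplication of \cref{7_an_eq:diffDissipFull} by $\col(\dst,\dgd)$ to obtain the one-step differential dissipation inequality, the observation that the realized increment $\stb(t+1)-\stb(t)$ lies in $\stdotSet$ so that the free quantification over $\stdotSet$ covers the successor-state argument $\storquad(\stb(t+1))$, and telescoping over time. The paper merely presents the identical chain in the reverse order (from the dissipativity definition back to the matrix inequality), so there is nothing substantive to distinguish the two arguments.
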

\begin{proof}
See \cref{7_an_pf:diffdissip}.
\end{proof}

Note that \cref{7_an_eq:diffDissipFull} is similar to the check for $\qsr$-VD in \cref{8_eq:veloMI} in \cref{8_thm:veloqsrMI}. We will discuss the connections and similarities between differential and velocity dissipativity in more detail in \cref{sec:veloanddiff}.

Later, in \cref{7_an_sec:lpv}, we will discuss how we can formulate computable tests for checking feasibility of this infinite dimensional set of LMIs. 

\subsection{Induced incremental stability}

Before showing how $\qsr$ differential dissipativity implies $\qsr$ incremental dissipativity, we will first briefly discuss how it connects to I(A)S.

In literature, the connections between differential dynamics and incremental stability have extensively been discussed, also for DT nonlinear systems \cite{Tran2016,Tran2018}. For completeness, we will briefly recap these results:
\begin{lemma}[Implied incremental stability]\label{7_an_lem:diffstabcondmi}
	The nonlinear system given by \cref{8_eq:nonlinsys} is incrementally stable, if there exists a quadratic storage Lyapunov function $\lyapfunDif$ of the form \cref{7_an_eq:diffStorage} with $M$ satisfying \cref{5_as:mbar}, such that
	\begin{equation}\label{7_an_eq:diffstabmi}
		\lyapfunDif(\stb(t+1), \dst(t+1)) -  \lyapfunDif(\stb(t), \dst(t)) \leq 0
	\end{equation}
	for all $\stb\in\proj_{\st}\Bw(\gd)$ under all measurable and bounded $\gd\in(\reals^\gdSize)^\nninteg$. If \cref{7_an_eq:diffstabmi} holds, but with strict inequality except when $\st(t)=\sto(t)$, corresponding to $\dst(t)=0$, then the system is incrementally asymptotically stable.
\end{lemma}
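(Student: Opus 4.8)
The plan is to construct an incremental Lyapunov function $\lyapfunIncr$ satisfying the hypotheses of \cref{7_an_thm:incrlyap} directly out of the differential storage function $\lyapfunDif$, and then invoke that theorem. The natural candidate is the squared Riemannian (geodesic) distance induced by the state-dependent metric $\storquad$. Concretely, I would set, for $\st,\sto\in\reals^\stSize$,
\begin{equation}
	d(\st,\sto) := \inf_{\stb\in\pathSet(\st,\sto)} \int_0^1 \sqrt{\lyapfunDif\big(\stb(\var),\dif_\var\stb(\var)\big)}\, d\var, \qquad \lyapfunIncr(\st,\sto) := d(\st,\sto)^2,
\end{equation}
where $\pathSet(\st,\sto)$ is the path set from \cref{5_eq:pathsetdef}. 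Since $\storquad$ is uniformly bounded above and below by \cref{5_as:mbar}, the induced metric is uniformly equivalent to the Euclidean one, so the metric is complete and minimizing paths exist; alternatively one works with $\eps$-minimizers and lets $\eps\to0$.

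First I would verify $\lyapfunIncr\in\posClassI$. Taking the straight-line path $\stb(\var)=\sto+\var(\st-\sto)$ and using $\storquad(\cdot)\preceq\slackA_2 I$ yields $d(\st,\sto)\le\sqrt{\slackA_2}\,\norm{\st-\sto}$, while $\storquad(\cdot)\succeq\slackA_1 I$ together with the triangle inequality applied to $\int_0^1 \dif_\var\stb\,d\var=\st-\sto$ gives $d(\st,\sto)\ge\sqrt{\slackA_1}\,\norm{\st-\sto}$. Hence $\slackA_1\norm{\st-\sto}^2\le\lyapfunIncr(\st,\sto)\le\slackA_2\norm{\st-\sto}^2$, so $\lyapfunIncr$ is positive definite and decrescent w.r.t.\ the diagonal $\st=\sto$, i.e.\ $\lyapfunIncr\in\posClassI$.

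The main step is the monotonicity of $\lyapfunIncr$ along pairs of trajectories driven by a common input $\gd$, as required in \cref{4_def:incrstab}. Fix $t$ and two states $\st(t),\sto(t)$, and let $\stb(t,\cdot)\in\pathSet(\st(t),\sto(t))$ be a minimizing path. I would push every point of this path one step forward through the dynamics \cref{8_eq:nonlinsys} under the same input $\gd(t)$, producing $\stb(t+1,\var)=\stMap(\stb(t,\var),\gd(t))$, which is a path in $\pathSet(\st(t+1),\sto(t+1))$. Since the input is shared across the family, $\dgd=0$, so the tangent vector $\dst(t,\var)=\dif_\var\stb(t,\var)$ evolves exactly by the differential dynamics \cref{7_an_eq:sys_diff}, i.e.\ $\dst(t+1,\var)=\difA(\stb(t,\var),\gd(t))\dst(t,\var)$. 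Applying the hypothesis \cref{7_an_eq:diffstabmi} pointwise in $\var$ gives $\lyapfunDif(\stb(t+1,\var),\dst(t+1,\var))\le\lyapfunDif(\stb(t,\var),\dst(t,\var))$; taking square roots and integrating over $\var\in[0,1]$ shows the pushed-forward path is no longer than the original, whence $d(\st(t+1),\sto(t+1))\le d(\st(t),\sto(t))$. Squaring yields $\lyapfunIncr(\st(t+1),\sto(t+1))-\lyapfunIncr(\st(t),\sto(t))\le0$, which is exactly the premise of \cref{7_an_thm:incrlyap}, so the system is IS. For the strict case, strictness of \cref{7_an_eq:diffstabmi} for $\dst\neq0$ forces a strict length decrease for every non-degenerate path, hence strict decrease of $\lyapfunIncr$ whenever $\st(t)\neq\sto(t)$, delivering IAS through the asymptotic part of \cref{7_an_thm:incrlyap}.

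I expect the main obstacle to be the rigorous treatment of the geodesic distance: establishing existence (or $\eps$-approximation) of minimizing paths, the regularity needed so that $\lyapfunIncr$ qualifies as an admissible $\C{1}$ incremental Lyapunov function in the sense of \cref{7_an_thm:incrlyap}, and ensuring the pushed-forward path remains in the region where \cref{7_an_eq:diffstabmi} is assumed to hold. The remaining algebra — the two-sided bounds and the pointwise application of the differential inequality — is routine once the distance construction is in place.
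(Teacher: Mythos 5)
Your proof is essentially correct, but note a structural point first: the paper never proves this lemma itself --- it is presented explicitly as a recap of known discrete-time contraction results, with the proof deferred to \cite{Tran2016,Tran2018}. Your construction is, however, precisely the machinery the paper deploys later in its own proof of \cref{7_an_thm:ind-incr-dissip}: there the incremental storage \cref{5_eq:jibstorage} is the infimum of the energy functional $\int_0^1\lyapfunDif\big(\stb(\var),\tPartial{\stb(\var)}{\var}\big)\,d\var$ over paths (the geodesic problem \cref{5_eq:geodesic}), the path at the initial time is taken to be the minimizer, it is propagated through the dynamics, and the differential decrease is applied pointwise in $\var$ and integrated --- exactly your push-forward argument, specialized to zero supply. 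The only real difference is cosmetic: you minimize the squared length $\big(\int_0^1\sqrt{\lyapfunDif}\,d\var\big)^2$ rather than the energy; by Cauchy--Schwarz (with equality for constant-speed paths) the two infima coincide, so your $\lyapfunIncr$ is the same object as $\storfunIncr$ in \cref{5_eq:jibstorage}. Three smaller observations. (i) Your two-sided bound $\slackA_1\norm{\st-\sto}^2\le\lyapfunIncr(\st,\sto)\le\slackA_2\norm{\st-\sto}^2$ makes the detour through \cref{7_an_thm:incrlyap} unnecessary: monotonicity plus this sandwich gives $\norm{\st(t)-\sto(t)}\le\sqrt{\slackA_2/\slackA_1}\,\norm{\st(0)-\sto(0)}$, which is \cref{4_def:incrstab} directly with $\delta(\epsilon)=\sqrt{\slackA_1/\slackA_2}\,\epsilon$; this also sidesteps the $\C{1}$ requirement of \cref{7_an_thm:incrlyap}, which is the genuinely delicate point you flag, since squared geodesic distance is only guaranteed to be smooth when minimizing geodesics are unique. (ii) For the IAS part your $\eps$-minimizer fallback does not work: strictness of the decrease is lost in the limit $\eps\to0$, so there you truly need existence of minimizing paths --- the same Hopf--Rinow-type appeal (enabled by \cref{5_as:mbar}) that the paper itself makes in the proof of \cref{7_an_thm:ind-incr-dissip}, so your treatment is at the same level of rigor as the paper's. (iii) Your worry about the pushed-forward path leaving the region where \cref{7_an_eq:diffstabmi} holds is moot for this lemma, since the hypothesis is assumed globally here; set-restricted versions only enter via \cref{cor:incrinvar}.
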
	
Similar to the implication of US(A)S from velocity dissipativity, see \cref{4_lem:velostab}, we also have that I(A)S is implied form differential dissipativity under restrictions of the supply function, see also \cite[Remark 11]{Verhoek2020} for these results in CT.
\begin{theorem}[IS from DD]\label{4_lem:difstab}
	Assume the nonlinear system given by \cref{8_eq:nonlinsys} is DD under a storage function $\storfunDif$ of the form \cref{7_an_eq:diffStorage} with $M$ satisfying \cref{5_as:mbar} w.r.t. a supply function $\supfunDif$ that satisfies
	\begin{equation}\label{8_eq:supplystabilitydiff}
		\supfunDif(0,\dgp)\leq 0,
	\end{equation}	
	for all $\dgp\in\reals^\gpSize$, then, the nonlinear system is IS. If the supply function satisfies \cref{8_eq:supplystabilitydiff}, but with strict inequality when $\st(t)\neq\sto(t)$, then the nonlinear system is IAS.
\end{theorem}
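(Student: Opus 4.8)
The plan is to reduce the differential dissipation inequality to the differential Lyapunov stability condition \cref{7_an_eq:diffstabmi} and then invoke \cref{7_an_lem:diffstabcondmi}. Following \cref{4_def:incrstab}, incremental stability concerns two solutions $(\st,\gd,\gp),(\sto,\gd,\gpo)\in\B$ driven by the \emph{same} input $\gd=\gdo$. First I would parameterize these two trajectories as in \cref{7_an_eq:nlparam}, choosing a path $\stbIc\in\pathSet(\stIc,\stoIc)$ for the initial conditions but taking the input path to be constant in the homotopy variable, i.e.\ $\gdb(t,\var)=\gd(t)$ for all $\var\in[0,1]$. The crucial consequence is that the differential input vanishes, $\dgd(t,\var)=\Partial{\gdb}{\var}(t,\var)=0$ for all $t\in\nninteg$ and $\var\in[0,1]$.

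With $\dgd\equiv0$, the supply-function hypothesis \cref{8_eq:supplystabilitydiff} yields $\supfunDif(\dgd(t),\dgp(t))=\supfunDif(0,\dgp(t))\leq0$ along the parameterized differential trajectory. Specializing the differential dissipation inequality \cref{7_an_eq:DIE_diff} to a single step, $t_0=t_1=t$, then gives
\begin{equation*}
\storfunDif(\stb(t+1),\dst(t+1))-\storfunDif(\stb(t),\dst(t))\leq\supfunDif(0,\dgp(t))\leq0,
\end{equation*}
for every $t\in\nninteg$, which is precisely the differential Lyapunov decrease condition \cref{7_an_eq:diffstabmi}. Since the storage function is of the quadratic form \cref{7_an_eq:diffStorage} with $\storquad$ satisfying \cref{5_as:mbar}, it is bounded, positive definite and decrescent in $\dst$, so $\storfunDif$ qualifies as a valid differential Lyapunov function, i.e.\ $\storfunDif\in\C{0}$ and $\storfunDif(\stb,\cdot)\in\posClass{0}$ for all $\stb\in\reals^\stSize$. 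Invoking \cref{7_an_lem:diffstabcondmi} then establishes that the system is incrementally stable.

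For the asymptotic claim, the strict version of \cref{8_eq:supplystabilitydiff}---strict inequality whenever $\st(t)\neq\sto(t)$, equivalently $\dst(t)\neq0$---turns the displayed inequality into a strict decrease of $\storfunDif$ except at $\dst(t)=0$, which is exactly the strict hypothesis of \cref{7_an_lem:diffstabcondmi}; hence the system is incrementally asymptotically stable.

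The main obstacle I anticipate is not the algebra, which is short, but justifying that eliminating $\dgd$ by the constant-input homotopy is legitimate and that the resulting per-step decrease genuinely governs the distance $\norm{\st(t)-\sto(t)}$ between the two full trajectories. The first point rests on the freedom in choosing the path $\gdb$ in the parameterization \cref{7_an_eq:nlparam}: since both trajectories share the same input, a $\var$-independent input path is admissible and still yields $(\stb(\var),\gdb(\var))\in\proj_\mr{x,u}\B$ for each $\var\in[0,1]$. The second point is exactly the content already secured by \cref{7_an_lem:diffstabcondmi}, which integrates the differential Lyapunov bound along the connecting path, so the proof reduces entirely to verifying the hypotheses of that lemma.
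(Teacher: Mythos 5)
Your proof is correct and follows essentially the same route as the paper: specialize the differential dissipation inequality to a single step, use the supply hypothesis to obtain the differential Lyapunov decrease \cref{7_an_eq:diffstabmi}, check that the quadratic storage with $\storquad$ satisfying \cref{5_as:mbar} qualifies as the Lyapunov function, and invoke \cref{7_an_lem:diffstabcondmi}, with the strict-inequality case handled identically for IAS. You are in fact slightly more careful than the paper's own proof, which bounds $\sum_{t}\supfunDif(\dgd(t),\dgp(t))\leq 0$ without making explicit that this requires $\dgd\equiv 0$ --- a fact that, as you correctly argue, follows from taking the input path constant in the homotopy variable, which is admissible because both trajectories in \cref{4_def:incrstab} share the same input.
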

\begin{proof}
	See \cref{4_pf:difstab}.
\end{proof}

Furthermore, we can also connect this result directly to the DD condition:
\begin{corollary}[Induced incremental stability and invariance]\label{cor:incrinvar}
	For the nonlinear system given by \cref{8_eq:nonlinsys}, let \cref{7_an_eq:diffDissipFull} holds on the convex set $\stSet\times\gdSet \subseteq \reals^{\stSize\cdot\gdSize}$ w.r.t. a supply function $\supfunDif$ that satisfies \cref{8_eq:supplystabilitydiff}, i.e., the system is DD and D(A)S on $\stSet\times\gdSet$.
	 Then, for any input $\gd\in(\gdSet)^\nninteg$, there exists a $\gamma\geq 0$, such that the system is I(A)S and invariant in the tube $\mb{X}_{{\st,\gamma}}(t) \subseteq\stSet,\forall t\in\nninteg$ given by \cref{eq:incrinvarset} around the trajectory $(\st,\gd)$ with $\st\in\proj_\mr{\st}\Bw(\gd)$ and $\st(t)\in\stSet,\forall t\in\nninteg$.
\end{corollary}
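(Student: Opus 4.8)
The plan is to mirror the argument used for \cref{cor:shiftinvar}, replacing the velocity/universal-shifted ingredients by their differential/incremental counterparts and adding the extra bookkeeping forced by the state-dependent storage. First I would observe that \cref{7_an_eq:diffDissipFull} holding on $\stSet\times\gdSet$ is exactly the hypothesis of \cref{7_an_thm:diffdissip}, so the system is $\qsr$-DD on $\stSet\times\gdSet$ with a storage of the form \cref{7_an_eq:diffStorage}; since the supply in addition satisfies \cref{8_eq:supplystabilitydiff}, \cref{4_lem:difstab} yields that the system is I(A)S for all trajectory pairs that remain in $\stSet$, in the sense of \cref{rem:1}.

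Next I would extract an incremental Lyapunov function from the machinery already in place. From the construction underlying \cref{4_lem:difstab,7_an_lem:diffstabcondmi}, one obtains a function $\lyapfunIncr\in\posClassI$ satisfying \cref{7_an_eq:incrlyap}, built by integrating the differential storage $\dst^\top\storquad(\stb)\dst$ along a minimizing path $\stb\in\pathSet(\st,\sto)$ connecting the two states. With this $\lyapfunIncr$ in hand and I(A)S established, \cref{thm:incrinvar} directly supplies the time-varying invariant tube $\mb{X}_{\st,\gamma}(t)$ of \cref{eq:incrinvarset} around the reference trajectory $\st\in\proj_\mr{\st}\Bw(\gd)$; in the asymptotic case the remark following \cref{thm:incrinvar} provides a monotonically decreasing $\gamma(t)$.

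The one step requiring genuine care --- and the main obstacle --- is that the differential inequality \cref{7_an_eq:diffstabmi} is only certified by \cref{7_an_eq:diffDissipFull} for states lying in $\stSet$, whereas invariance of $\mb{X}_{\st,\gamma}(t)$ presupposes exactly this: the comparison trajectory, and in fact every intermediate trajectory $\stb(t,\var)=\sttran(t,0,\stbIc(\var),\gdb(\var))$ used to evaluate the differential form, must remain in $\stSet$ for all $t$. Left as is, this reasoning is circular. I would break the circularity by choosing $\gamma$ so that the entire tube lies inside $\stSet$: by \cref{5_as:mbar} the uniform bounds $\slackA_1 I\preceq\storquad(\stb)\preceq\slackA_2 I$ make $\lyapfunIncr$ comparable to the squared Euclidean distance to the reference, so each sublevel set $\{x\mid\lyapfunIncr(x,\sto(t))\le\gamma\}$ is trapped between two Euclidean balls about $\sto(t)$ whose radii vanish as $\gamma\to 0$. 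Since $\st(t)\in\stSet$ for all $t$ by hypothesis, a (possibly small, possibly zero) $\gamma\ge 0$ exists for which $\mb{X}_{\st,\gamma}(t)\subseteq\stSet$ for all $t$, with larger $\gamma$ admissible the farther the reference stays from $\partial\stSet$. A standard continuity/maximal-interval argument then closes the loop: as long as the trajectories remain in $\stSet$, \cref{7_an_eq:incrlyap} holds and keeps them inside the tube, hence inside $\stSet$, ruling out escape. This is precisely the local behavior already flagged after \cref{cor:shiftinvar}.
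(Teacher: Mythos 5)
Your proposal follows essentially the same route as the paper's proof: \cref{4_lem:difstab} gives I(A)S, the incremental Lyapunov function is taken to be the geodesic-integrated differential storage \cref{5_eq:jibstorage} constructed in the proof of \cref{7_an_thm:ind-incr-dissip}, and \cref{thm:incrinvar} then yields the invariant tube around the reference trajectory. Your third paragraph --- pinning down a $\gamma\geq 0$ via the uniform bounds of \cref{5_as:mbar} (which trap each sublevel set $\mb{X}_{{\st,\gamma}}(t)$ between Euclidean balls about $\st(t)$) so that the tube stays inside $\stSet$, and breaking the resulting circularity with a maximal-interval argument --- is in fact more careful than the paper, which only remarks that the conclusion is restricted to initial conditions in $\mb{X}_{{\st,\gamma}}(0)\subseteq\stSet$ without spelling out why such a $\gamma$ exists or why trajectories cannot momentarily exit $\stSet$.
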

The proof simply follows the fact that the system is I(A)S by \cref{4_lem:difstab}, which by \cref{7_an_lem:diffstabcondmi} implies I(A)S under the the (incremental) Lyapunov function $\lyapfunIncr(\st,\sto)=\storfunIncr(\st,\sto)$ with $\storfunIncr$ given by\footnote{This will be derived in the proof of \cref{7_an_thm:ind-incr-dissip}.} \cref{5_eq:jibstorage}. By \cref{thm:incrinvar}, this then implies incremental invariance for any $\gd\in(\gdSet)^\nninteg$ around $\st\in\proj_\mr{\st}\Bw(\gd)$ with $\st(t)\in\stSet$. The implications are very similar to the velocity case, meaning that verifying \cref{7_an_eq:diffDissipFull}  on a bounded set $\stSet\times\gdSet \subseteq \reals^{\stSize\cdot\gdSize}$ only allows one to conclude I(A)S w.r.t an appropriate set of initial conditions $\mb{X}_{{\st,\gamma}}(0)$ in $\stSet$, ensuring invariance (convergence) of all solutions along the signal $(\st,\gd)$. Increasing the sets $\stSet\times\gdSet \subseteq \reals^{\stSize\cdot\gdSize}$, allows one to conclude I(A)S on larger regions of the state space. 

\subsection{Induced incremental dissipativity}
Next, we will present how we can use $\qsr$ differential dissipativity in order to analyze $\qsr$ incremental dissipativity of DT nonlinear systems under mild restrictions of the supply function. 
\begin{theorem}[Induced incremental dissipativity]\label{7_an_thm:ind-incr-dissip}
	If the nonlinear system given by \cref{8_eq:nonlinsys} is $\qsr$-DD %
	with $\supR\preceq0$ under a storage function $\storfunDif$ of the form \cref{7_an_eq:diffStorage}, then the system is $\qsr$-ID for the same tuple $\qsr$.
\end{theorem}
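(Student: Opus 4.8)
The plan is to build the incremental storage function as the squared Riemannian distance induced by the metric $\storquad$,
\[
\storfunIncr(\st,\sto) = \inf_{\stb\in\pathSet(\st,\sto)} \int_0^1 \left(\Partial{\stb}{\var}\right)^{\top}\storquad\big(\stb(\var)\big)\,\Partial{\stb}{\var}\,d\var,
\]
and to derive the incremental inequality \cref{7_an__eq:IDIE} by integrating the differential inequality \cref{7_an_eq:DIE_diff} over the path parameter $\var\in[0,1]$. Given two trajectories $(\st,\gd,\gp),(\sto,\gdo,\gpo)\in\B$, I would parameterize them as in \cref{7_an_eq:nlparam}, choosing the \emph{straight-line} input path $\gdb(t,\var)=\gdo(t)+\var\big(\gd(t)-\gdo(t)\big)$, so that $\dgd(t,\var)=\gd(t)-\gdo(t)$ is constant in $\var$, and selecting the initial state path $\stb(t_0,\cdot)$ to be a (near-)minimizer of the geodesic energy between $\sto(t_0)$ and $\st(t_0)$.

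First I would exploit that, for every fixed $\var$, the parameterized signals $(\stb(\var),\gdb(\var),\gpb(\var))$ lie in $\B$, so $\qsr$-DD gives \cref{7_an_eq:DIE_diff} with $\storfunDif(\stb,\dst)=\dst^\top\storquad(\stb)\dst$; integrating over $\var$ produces on the left $A(t_1+1)-A(t_0)$, where $A(t):=\int_0^1 \dst(t,\var)^\top\storquad(\stb(t,\var))\,\dst(t,\var)\,d\var$. The crux is that $\stb(t,\cdot)$ is always an admissible path in $\pathSet(\st(t),\sto(t))$, so that $A(t)\ge\storfunIncr(\st(t),\sto(t))$ for all $t$, whereas the initial choice gives $A(t_0)\le\storfunIncr(\st(t_0),\sto(t_0))+\epsilon$. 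Hence $\storfunIncr(\st(t_1+1),\sto(t_1+1))-\storfunIncr(\st(t_0),\sto(t_0))-\epsilon\le A(t_1+1)-A(t_0)$, and letting $\epsilon\downarrow 0$ bounds the increment of $\storfunIncr$ by the integrated supply.

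Next I would bound the integrated supply by the incremental supply. Writing $\Delta\gd(t):=\gd(t)-\gdo(t)$ and $\Delta\gp(t):=\gp(t)-\gpo(t)$, the fundamental theorem of calculus gives $\int_0^1\dgp(t,\var)\,d\var=\gpb(t,1)-\gpb(t,0)=\Delta\gp(t)$, while $\dgd(t,\var)\equiv\Delta\gd(t)$. Since only $\dgp$ depends on $\var$, the input and cross terms integrate exactly to their incremental counterparts, and the only remaining contribution is $\int_0^1\dgp^\top\supR\,\dgp\,d\var$. Here $\supR\preceq 0$ makes $\dgp\mapsto\dgp^\top\supR\dgp$ concave, so Jensen's inequality yields $\int_0^1\dgp^\top\supR\dgp\,d\var\le\big(\int_0^1\dgp\,d\var\big)^\top\supR\big(\int_0^1\dgp\,d\var\big)=\Delta\gp^\top\supR\,\Delta\gp$. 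Summing over $t$ then reproduces exactly $\sum_{t=t_0}^{t_1}\supfunIncr(\gd(t),\gdo(t),\gp(t),\gpo(t))$, closing the chain and establishing \cref{7_an__eq:IDIE}.

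It remains to verify that $\storfunIncr$ is an admissible storage function, i.e.\ $\storfunIncr\in\C{0}$ and $\storfunIncr\in\posClassI$. This follows from \cref{5_as:mbar}: the bound $\slackA_1 I\preceq\storquad\preceq\slackA_2 I$ gives $\slackA_1\norm{\st-\sto}^2\le\storfunIncr(\st,\sto)\le\slackA_2\norm{\st-\sto}^2$ (the lower bound via the metric bound together with the fact that any path is at least as long as its chord, the upper bound via the straight-line path), and continuity of the distance yields $\C{0}$. I expect the main obstacle to be precisely the path-optimality argument: one must ensure the geodesic energy is well defined and (nearly) attained, and that the forward flow of the chosen initial path remains an \emph{admissible} connecting path at the terminal time so that $A(t_1+1)\ge\storfunIncr(\st(t_1+1),\sto(t_1+1))$ holds. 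This coupling between the minimization over $\pathSet$ and the forward dynamics, rather than any individual inequality, is the delicate point, with $\supR\preceq 0$ entering only to license the Jensen step on the supply side.
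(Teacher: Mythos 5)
Your proposal is correct and follows essentially the same route as the paper's proof: an incremental storage defined as the minimum path energy (geodesic energy) under the metric $\storquad$, the differential dissipation inequality integrated over $\var$ with the straight-line input path so that the $\supQ$ and $\supS$ terms integrate exactly, and $\supR\preceq 0$ licensing the Jensen-type step on the output term (the paper's \cite[Lemma 16]{Koelewijn2021a}). The only notable difference is that the paper invokes the Hopf--Rinow theorem to obtain an exact smooth minimizing geodesic for the initial-time path, whereas your $\epsilon$-near-minimizer argument sidesteps attainment of the infimum altogether, which is a harmless (arguably more robust) variation of the same idea.
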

\begin{proof}
	See \cref{7_an_pf:ind-incr-dissip}.
\end{proof}

We can then combine the results of \cref{7_an_thm:ind-incr-dissip,7_an_thm:diffdissip,cor:incrinvar} to arrive at the following \lcnamecref{7_an_thm:incrdissip}:
\begin{corollary}[Incremental dissipativity condition]\label{7_an_thm:incrdissip} 
If for a given $\qsr$ with $\supR\preceq0$ \cref{7_an_eq:diffDissipFull} holds for all $(\stb,\gdb)\in{\stSet}\times{\gdSet}$ and $\stdot\in\stdotSet$ with $\storquad$ satisfying \cref{5_as:mbar}, then 
the nonlinear system given by \cref{8_eq:nonlinsys} is $\qsr$-ID on $\stSet\times\gdSet \subseteq \reals^{\stSize \cdot \gdSize}$ in the following sense. For any inputs $\gd,\gdo\in(\gdSet)^\nninteg$ and the corresponding invariant tubes $\mb{X}_{{\st,\gamma}}(t)$, $\mb{X}_{{\sto,\gamma}}(t)$ in terms of \cref{cor:incrinvar}, all trajectories  $(\st,\gd,\gp),(\sto,\gdo,\gpo)\in\B$ with $\st(0)\in \mb{X}_{{\st,\gamma}}(0)$ and $\sto(0)\in \mb{X}_{{\sto,\gamma}}(0)$ satisfy \eqref{7_an__eq:IDIE} with the  $\qsr$ quadratic supply function.
\end{corollary}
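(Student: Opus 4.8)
The plan is to chain together the three cited results, so that the substantive work reduces to carefully tracking the restriction to $\stSet\times\gdSet$. First I would invoke \cref{7_an_thm:diffdissip}: since the LMI \cref{7_an_eq:diffDissipFull} is assumed to hold for all $(\stb,\gdb)\in\stSet\times\gdSet$ and all $\stbdot\in\stdotSet$, with $\storquad$ satisfying \cref{5_as:mbar}, the system is $\qsr$-DD on $\stSet\times\gdSet$ under the quadratic storage function \cref{7_an_eq:diffStorage}. Concretely, this certifies that the differential dissipation inequality \cref{7_an_eq:DIE_diff} holds along any trajectory of the differential form whose underlying state--input pair $(\stb,\gdb)$ stays inside $\stSet\times\gdSet$.

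Next I would apply \cref{7_an_thm:ind-incr-dissip}, which converts $\qsr$-DD with $\supR\preceq0$ under a storage function of the form \cref{7_an_eq:diffStorage} into $\qsr$-ID. The incremental storage function $\storfunIncr$ is the one built in that proof (equation \cref{5_eq:jibstorage}), obtained by integrating the differential storage over a path $\stb(\cdot,\var)$, $\var\in[0,1]$, connecting the two trajectories $(\sto,\gdo,\gpo)$ and $(\st,\gd,\gp)$. Integrating the differential dissipation inequality over $\var$ and using $\supR\preceq0$ to bound the output cross-terms yields the incremental dissipation inequality \cref{7_an__eq:IDIE} for the \emph{same} tuple $\qsr$. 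Thus the bulk of the argument is already done in the two theorems; the corollary only needs to reconcile them with the finite set $\stSet\times\gdSet$.

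This reconciliation is where \cref{cor:incrinvar} enters, and it is the main obstacle. The integration argument of \cref{7_an_thm:ind-incr-dissip} requires every parameterized state $\stb(t,\var)$, for $\var\in[0,1]$ and all $t\in\nninteg$, to lie in $\stSet$ so that the (only locally certified) $\qsr$-DD inequality may be applied pointwise along the path. I would secure this as follows: by \cref{cor:incrinvar}, the same hypotheses already guarantee invariant tubes $\mb{X}_{\st,\gamma}(t),\mb{X}_{\sto,\gamma}(t)\subseteq\stSet$, so the two endpoint trajectories never leave $\stSet$. Choosing the connecting path $\stbIc(\var)\in\pathSet(\stIc,\stoIc)$ inside the convex set $\stSet$ and the input path $\gdb(\cdot,\var)$ inside $\gdSet$, the invariance of the tube propagates the inclusion forward in time, giving $\stb(t,\var)\in\stSet$ and $\gdb(t,\var)\in\gdSet$ for all $t$ and $\var$. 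With the whole path confined to $\stSet\times\gdSet$, the conclusion of \cref{7_an_thm:ind-incr-dissip} holds verbatim but restricted to trajectories initialized in the tubes --- which is exactly the sense of $\qsr$-ID on $\stSet\times\gdSet$ asserted in the statement.

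The delicate point I expect to dwell on is verifying that $\stb(t,\var)$ really stays in $\stSet$: because this path solves the parameterized dynamics \cref{7_an_eq:nlparam} rather than being a straight-line interpolation, convexity of $\stSet$ by itself is insufficient, and one must genuinely lean on the forward invariance of the tubes from \cref{cor:incrinvar} to carry the inclusion from $t=0$ to all $t$. Everything else is the routine combination of the preceding theorems.
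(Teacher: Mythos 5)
Your overall architecture is exactly the paper's: the corollary is stated there with no proof beyond the remark that it follows by combining \cref{7_an_thm:diffdissip}, \cref{7_an_thm:ind-incr-dissip}, and \cref{cor:incrinvar}, which is precisely your chain, and you correctly identify the one point this chaining leaves implicit --- namely that the proof of \cref{7_an_thm:ind-incr-dissip} applies the differential dissipation inequality along the whole family of interpolated trajectories $\stb(\cdot,\var)$, $\var\in[0,1]$, so these must remain in $\stSet$ (and the interpolated inputs in $\gdSet$) when the certificate \cref{7_an_eq:diffDissipFull} is only available on $\stSet\times\gdSet$.

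However, the mechanism you propose to secure this confinement does not work as stated. The invariance in \cref{thm:incrinvar}, and hence in \cref{cor:incrinvar}, is input-specific: the tube $\mb{X}_{\sto,\gamma}(t)$ is invariant only for trajectories driven by the \emph{same} input as the nominal trajectory, cf.\ \cref{eq:sttranincr}, where $\sttran(t,0,\stIc,\gd)$ carries the nominal input $\gd$; this in turn traces back to \cref{7_an_thm:incrlyap}, whose decrease condition \cref{7_an_eq:incrlyap} is only required for pairs $\st,\sto\in\proj_{\mr{\st}}\Bw(\gd)$ sharing one input. The intermediate trajectories in the proof of \cref{7_an_thm:ind-incr-dissip} are driven by the interpolated inputs $\gdb(\cdot,\var)=\gdo+\var(\gd-\gdo)$, which for $\var\in(0,1)$ coincide with neither $\gd$ nor $\gdo$; consequently neither $\mb{X}_{{\st,\gamma}}$ nor $\mb{X}_{{\sto,\gamma}}$ constrains $\stb(\cdot,\var)$, and your assertion that ``the invariance of the tube propagates the inclusion forward in time'' is unjustified --- this is exactly the failure mode you yourself flagged for the convexity argument. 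A repair would have to invoke \cref{cor:incrinvar} separately for every interpolated input $\gdb(\cdot,\var)\in(\gdSet)^\nninteg$ (admissible, since convexity of $\gdSet$ keeps these inputs in $\gdSet$) and then argue that the initial geodesic path $\stbIc(\var)$ lies in the tube, contained in $\stSet$, around a nominal trajectory for that input, with a level $\gamma$ uniform in $\var$ --- none of which is automatic. In fairness, the paper glosses over this step entirely; but since your proposal explicitly rests the restriction to $\stSet\times\gdSet$ on this tube argument, the flaw is a genuine gap in your proof rather than a cosmetic one.
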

With these results, we have a powerful tool, through the matrix inequality condition in \cref{7_an_thm:diffdissip}, to check $\qsr$-ID of DT systems. In \cite{Koelewijn2021a}, specifically Theorems 10 and 12, it has been shown how in DT, an infinite dimensional set of LMIs can be formulated in order to analyze the incremental \dltwo-gain and incremental passivity of a nonlinear system for a quadratic storage function of the form \cref{7_an_eq:diffStorage} where $\storquad$ is \emph{constant}. Using the results of \cref{7_an_thm:ind-incr-dissip,7_an_thm:incrdissip}, we can now also extend those results to the case with a \emph{state-dependent} matrix $\storquad$:
\begin{corollary}[Incremental \dltwo-gain analysis]\label{7_an_thm:li2gain}
	A nonlinear system given by \cref{8_eq:nonlinsys} has a finite incremental \dltwo-gain of $\perf$ on $\stSet\times\gdSet\subseteq \reals^{\stSize \cdot \gdSize}$, if there exists a matrix function $\storquad$ satisfying \cref{5_as:mbar} such that for all $(\stb,\gdb)\in{\stSet}\times{\gdSet}$ and $\stbdot\in\stdotSet$
	\begin{equation}\label{7_an_eq:matli2}
		\begin{bmatrix}
			\storquad(\stb+\stbdot)\!	& \!\!\difA(\stb,\gdb)\storquad(\stb)\!\!	& \!\!\difB(\stb,\gdb)\!\! 	& 0\\
			\star 		& \storquad(\stb)					& 0					& \!\storquad(\stb)\difC(\stb,\gdb)\!^\top\!\\
			\star		& \star						& \perf I 			& \difD(\stb,\gdb)^\top\\
			\star		& \star						& \star				& \perf I
		\end{bmatrix}\!\!\succeq \!0.
	\end{equation} 
\end{corollary}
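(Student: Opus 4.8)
The plan is to reduce the claim to the differential dissipativity condition already established in \cref{7_an_thm:diffdissip} together with the implication chain \cref{7_an_thm:ind-incr-dissip} and \cref{7_an_thm:incrdissip}. Recall that a finite incremental \dltwo-gain bound of $\perf$ is precisely $\qsr$-ID with the tuple $\qsr=(\perf^2 I,0,-I)$, for which $\supR=-I\preceq 0$. Hence, by \cref{7_an_thm:incrdissip}, it suffices to exhibit a storage matrix satisfying \cref{5_as:mbar} for which the differential dissipativity inequality \cref{7_an_eq:diffDissipFull} holds with this tuple on $\stSet\times\gdSet$. I would therefore show that \cref{7_an_eq:matli2} is nothing but a Schur-complement and congruence reformulation of \cref{7_an_eq:diffDissipFull} for $\qsr=(\perf^2 I,0,-I)$ and the induced storage matrix $\perf\storquad(\stb)^{-1}$, noting that since all manipulations are pointwise in $(\stb,\gdb,\stbdot)$ the quantifier over $(\stb,\gdb)\in\stSet\times\gdSet$ and $\stbdot\in\stdotSet$ carries over verbatim.

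Concretely, starting from \cref{7_an_eq:matli2} (which requires $\perf>0$ and, by \cref{5_as:mbar}, $\storquad(\stb+\stbdot)\succ 0$), I would first take the Schur complement with respect to the $(1,1)$-block $\storquad(\stb+\stbdot)$, eliminating the first block row and column and introducing the factor $\storquad(\stb+\stbdot)^{-1}$ in the quadratic terms built from $\difA,\difB$. Next I would take the Schur complement with respect to the output block $\perf I$ in position $(3,3)$, which is admissible since $\perf>0$ and produces a $\perf^{-1}$-weighting of the terms in $\difC,\difD$. A congruence transformation by $\diag(\storquad(\stb)^{-1},I)$ then clears the factors $\storquad(\stb)$ multiplying $\difA$ and $\difC^\top$, turns the remaining $(2,2)$-block $\storquad(\stb)$ into $\storquad(\stb)^{-1}$, and leaves the input channel untouched. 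Finally, multiplying the resulting inequality by $\perf>0$ and relabelling $\hat\storquad(\stb):=\perf\storquad(\stb)^{-1}$ reproduces exactly \cref{7_an_eq:diffDissipFull} (up to an overall sign, since \cref{7_an_eq:matli2} is $\succeq 0$ while \cref{7_an_eq:diffDissipFull} is $\preceq 0$) with $\supQ=\perf^2 I$, $\supS=0$, $\supR=-I$, where $\hat\storquad(\stb)$ and $\hat\storquad(\stb+\stbdot)$ occupy the roles of $\storquad(\stb)$ and $\storquad(\stb+\stbdot)$, respectively.

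To close the argument I would check that the induced matrix $\hat\storquad=\perf\storquad^{-1}$ inherits \cref{5_as:mbar}: from $\slackA_1 I\preceq\storquad(\stb)\preceq\slackA_2 I$ one obtains $\perf\slackA_2^{-1}I\preceq\hat\storquad(\stb)\preceq\perf\slackA_1^{-1}I$, and $\C{1}$ smoothness is preserved since matrix inversion is $\C{1}$ on the positive-definite cone. With \cref{7_an_eq:diffDissipFull} thus established for $\qsr=(\perf^2 I,0,-I)$ and $\supR=-I\preceq 0$, \cref{7_an_thm:diffdissip} yields $\qsr$-DD, \cref{7_an_thm:ind-incr-dissip} upgrades this to $\qsr$-ID, and \cref{7_an_thm:incrdissip} delivers the statement on $\stSet\times\gdSet$; the chosen tuple then gives the incremental \dltwo-gain bound $\perf$. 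I expect the bookkeeping of the two Schur complements together with the congruence to be the main obstacle: one must track that $\storquad(\stb)$ and $\storquad(\stb+\stbdot)$ end up in their correct positions and that every elimination step is an exact equivalence, which hinges on $\perf>0$ and on the positive definiteness of $\storquad(\stb+\stbdot)$ guaranteed by \cref{5_as:mbar}.
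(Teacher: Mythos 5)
Your proposal is correct and matches the paper's intended derivation: the paper states this corollary without a separate proof, precisely as a consequence of combining \cref{7_an_thm:diffdissip,7_an_thm:ind-incr-dissip,7_an_thm:incrdissip}, which is exactly your chain. Your Schur-complement and congruence bookkeeping checks out — \cref{7_an_eq:matli2} is indeed equivalent (given $\perf>0$ and \cref{5_as:mbar}) to \cref{7_an_eq:diffDissipFull} with $\qsr=(\perf^2 I,0,-I)$ and the storage matrix $\perf\storquad^{-1}$, which inherits \cref{5_as:mbar}, so the reduction goes through.
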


\begin{corollary}[DT incremental passivity analysis]\label{7_an_thm:incrpass}
	A nonlinear system given by \cref{8_eq:nonlinsys} is incrementally passive on $\stSet\times\gdSet \subseteq \reals^{\stSize \cdot \gdSize}$, if there exists a matrix function $\storquad$ satisfying \cref{5_as:mbar}, such that for all $(\stb,\gdb)\in{\stSet}\times{\gdSet}$ and $\stbdot\in\stdotSet$
		\begin{equation}\label{7_an_eq:incrpassmatfull}
		\begin{bmatrix}
			\storquad(\stb+\stbdot) 	& \difA(\stb,\gdb)\storquad(\stb) & \difB(\stb,\gdb) \\
			\star 		& \storquad(\stb) 				& \storquad(\stb) \difC(\stb,\gdb)^\top \\
			\star		& \star						& \difD(\stb,\gdb)+(\star)
		\end{bmatrix}\succeq 0.
	\end{equation}
\end{corollary}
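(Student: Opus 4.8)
The plan is to obtain incremental passivity by specializing the differential-to-incremental machinery already in place. Incremental passivity is precisely $\qsr$-ID for the tuple $\qsr=(0,I,0)$, for which $\supR=0\preceq 0$. Consequently, once I show that feasibility of \cref{7_an_eq:incrpassmatfull} implies that the system is $(0,I,0)$-DD under an admissible quadratic storage function of the form \cref{7_an_eq:diffStorage}, \cref{7_an_thm:diffdissip} yields differential dissipativity and \cref{7_an_thm:ind-incr-dissip} (applicable since $\supR=0\preceq 0$) upgrades it to $(0,I,0)$-ID, i.e., incremental passivity. The whole content of the proof therefore reduces to recognizing \cref{7_an_eq:incrpassmatfull} as a ``dualized'' rewriting of the DD matrix inequality \cref{7_an_eq:diffDissipFull} for $\qsr=(0,I,0)$, exactly paralleling how \cref{7_an_thm:li2gain} relates to \cref{7_an_eq:diffDissipFull}.

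To make this precise I would start from the feasible matrix function $\storquad$ of \cref{7_an_eq:incrpassmatfull} and transform it into \cref{7_an_eq:diffDissipFull}. Writing $\storquad=\storquad(\stb)$ and $\storquadPl=\storquad(\stb+\stbdot)$, I first apply the congruence $\diag(I,\storquad^{-1},I)$, which replaces the center block $\storquad$ by $\storquad^{-1}$ and strips the $\storquad$ factors off the $\difA,\difB,\difC$ blocks; after reordering the blocks and applying the sign congruence $\diag(I,I,-I)$ to clear the induced minus signs, and negating, this produces
\begin{equation*}
\begin{bmatrix} -\storquad^{-1} & -\difC^\top & \difA^\top \\ -\difC & -\difD-\difD^\top & \difB^\top \\ \difA & \difB & -\storquadPl \end{bmatrix}\preceq 0 .
\end{equation*}
Since $\storquadPl\succ 0$ by \cref{5_as:mbar}, a Schur complement on the trailing block $-\storquadPl$ collapses this to the $2\times 2$ inequality
\begin{equation*}
\begin{bmatrix} \difA^\top\storquadPl^{-1}\difA-\storquad^{-1} & \difA^\top\storquadPl^{-1}\difB-\difC^\top \\ \difB^\top\storquadPl^{-1}\difA-\difC & \difB^\top\storquadPl^{-1}\difB-\difD-\difD^\top \end{bmatrix}\preceq 0 ,
\end{equation*}
which is exactly \cref{7_an_eq:diffDissipFull} instantiated at $\qsr=(0,I,0)$ (the middle factor $\supS=I$ generates the symmetric term $\difD+\difD^\top$, which is well defined since passivity forces $\gdSize=\gpSize$ and hence square $\difD$), with differential storage matrix function $\storquad^{-1}$, i.e. $\storfunDif(\stb,\dst)=\dst^\top\storquad(\stb)^{-1}\dst$.

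It then remains to check that $\storquad^{-1}$ is itself an admissible weight. Since $\storquad\in\C{1}$ is symmetric with $\slackA_1 I\preceq\storquad(\stb)\preceq\slackA_2 I$ for all $\stb$ and some $\slackA_1,\slackA_2\in\preals$, its inverse is $\C{1}$, symmetric, and satisfies $\slackA_2^{-1}I\preceq\storquad(\stb)^{-1}\preceq\slackA_1^{-1}I$, so $\storquad^{-1}$ again meets \cref{5_as:mbar}. Thus the hypotheses of \cref{7_an_thm:diffdissip} hold, the system is $(0,I,0)$-DD on $\stSet\times\gdSet$, and \cref{7_an_thm:ind-incr-dissip} gives $(0,I,0)$-ID, i.e. incremental passivity, completing the argument. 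The only delicate part is the congruence/Schur bookkeeping of the second paragraph: one must track the block reordering and sign congruence carefully so that the off-diagonal $\difA,\difB$ blocks land with the correct signs, and one must confirm that the single substitution $\storquad\mapsto\storquad^{-1}$ is applied consistently at both arguments $\stb$ and $\stb+\stbdot$ (so that it genuinely corresponds to one storage function evaluated at two points), rather than producing a spurious mismatch between $\storquad$ and $\storquadPl$.
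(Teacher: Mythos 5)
Your proof is correct and takes essentially the same route the paper intends: the corollary is stated as a direct consequence of \cref{7_an_thm:diffdissip} and \cref{7_an_thm:ind-incr-dissip} applied with $\qsr=(0,I,0)$ (so $\supR=0\preceq 0$), and your congruence-plus-Schur-complement computation showing that \cref{7_an_eq:incrpassmatfull} is exactly \cref{7_an_eq:diffDissipFull} written for the inverse storage matrix $\storquad^{-1}$ (which, as you verify, again satisfies \cref{5_as:mbar}) supplies the algebra the paper leaves implicit, mirroring how \cref{7_an_eq:matli2} relates to \cref{7_an_eq:diffDissipFull}. The sign bookkeeping and the consistent substitution $\storquad\mapsto\storquad^{-1}$ at both arguments $\stb$ and $\stb+\stbdot$ check out, so nothing further is needed.
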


The results of \cref{7_an_thm:li2gain,7_an_thm:incrpass} give us conditions for a bounded incremental \dltwo-gain and incremental passivity of nonlinear systems represented by \cref{8_eq:nonlinsys}. These conditions are both given in terms of feasibility checks of an infinite dimensional set of LMIs. As mentioned, in \cref{7_an_sec:lpv}, we will discuss how we can formulate computable tests to check these conditions.

\section{The relation between velocity and differential dissipativity}\label{sec:veloanddiff}
In \cref{8_sec:veloanalysis}, we have shown how (dissipativity) properties of the velocity form \cref{8_eq:veloform} imply universal shifted properties of the primal form \cref{8_eq:nonlinsys} of the nonlinear system. Similarly, in \cref{sec:diffanalysis}, we have shown how dissipativity properties of the differential form \cref{7_an_eq:sys_diff} imply incremental properties of the primal form. In both these cases, we use an alternative form of the system, the velocity and differential form, to imply equilibrium-free properties of the corresponding primal form of the nonlinear system. 

Based on their definitions, it is clear that incremental properties imply universal shifted properties of the system, as incremental properties are properties between all the trajectories, while universal shifted properties are only between trajectories and equilibrium points. Similarly, we also have that $\qsr$-DD implies $\qsr$-VD. Namely, when we compare \cref{7_an_eq:diffDissipFull} in \cref{7_an_thm:diffdissip} to \cref{8_eq:veloMI} in \cref{8_thm:veloqsrMI}, it is evident that these conditions become equivalent in case $\storquad$ in \cref{7_an_eq:diffDissipFull} is a constant matrix. This means that with condition \cref{8_eq:veloMI} for $\qsr$-VD in \cref{8_thm:veloqsrMI}, we actually imply the stronger notion of $\qsr$-DD. This is because in \cref{8_thm:veloqsrMI}, we do not explicitly exploit the fact that $\dtst$, $\dtgd$, and $\dtgp$ represent actual time differences in the state, input, and output, respectively. This leads to conservativeness of the actual result, making it in this case equivalent to checking $\qsr$-DD of the system.

This does not mean that $\qsr$-VD in this form does not have any use. Namely, similarly in the CT case in \cite{Koelewijn2023}, it has been shown how properties of velocity form can still be exploited for controller synthesis in order to achieve closed-loop US stability and performance. Therefore, the velocity based analysis results presented in this paper can serve as a similar stepping stone for developing  controller synthesis algorithms for DT nonlinear systems with such stability and performance guarantees.

\section{Convex Equilibrium-Free Analysis}\label{8_sec:lpvstuffs}\label{7_an_sec:lpv}
In the previous sections, we have shown how the DT velocity and differential forms can be used to imply universal shifted and incremental stability and performance properties of the nonlinear system, respectively. In these sections, we have also discussed how through the (infinite dimensional set of) LMIs in \cref{8_eq:veloMI,7_an_eq:diffDissipFull}, $\qsr$-VD and $\qsr$-DD can be analyzed. In the analysis of LPV systems, similar problems are encountered due to the variation of the scheduling-variable for which various tools have been developed  to make these problems computationally tractable. Hence, as for the CT case in \cite{Verhoek2020, Koelewijn2023, Koelewijn2021a}, we can use the analysis results of the LPV framework to turn the proposed checks for universal shifted and incremental stability and performance analysis to convex finite dimensional optimization problems which can be efficiently solved as SDPs.

As discussed in \cref{sec:veloanddiff}, for a constant matrix $M$, the matrix inequalities in \cref{8_eq:veloMI,7_an_eq:diffDissipFull} are equivalent. Therefore, we will only discuss the $\qsr$-DD case \cref{7_an_eq:diffDissipFull}, as the $\qsr$-VD case trivially follows from it.
Furthermore, we discussed in \cref{sec:diffanalysis} that the $\qsr$-DD condition can be interpreted as analyzing classical $\qsr$ dissipativity of the differential form \cref{7_an_eq:sys_diff}. Therefore, to cast the $\qsr$-DD analysis problem to an LPV analysis problem, we embed the differential form of the nonlinear system in an LPV representation, which we call a \emph{Differential Parameter-Varying} (DPV) embedding of the nonlinear system \cref{8_eq:nonlinsys}:
\begin{definition}[DPV embedding]\label{7_an_def:lpvemb}
	Given a nonlinear system in the form of \cref{8_eq:nonlinsys} with differential form given by \cref{7_an_eq:sys_diff}. Then, the LPV representation given by
	\begin{subequations}\label{7_an_eq:lpvdiff}
		\begin{align}
		\dst(t+1)&=\lpvA(\sch(t))\dst(t) + \lpvB(\sch(t))\dgd(t),\\ 
		\dgp(t) &= \lpvC(\sch(t))\dst(t) + \lpvD(\sch(t))\dgd(t),
	\end{align}
\end{subequations}
	where $\sch(t)\in\schSet\subset\reals^\schSize$ is the scheduling-variable, is a DPV embedding of \cref{8_eq:nonlinsys} on the compact convex region $\stSetLPV\times\gdSetLPV\,\subseteq\reals^\stSize\!\times\reals^\gdSize\!$, if there exists a function $\schMap:\stSetLPV\times\gdSetLPV\to\schSet$, called the scheduling-map,  such that under a given choice of function class for $\lpvA,\,\dots,\,\lpvD$, e.g. affine, polynomial, etc., $\lpvA(\schMap(\stb,\gdb))=\difA(\stb,\gdb)$, $\dots$, $\lpvD(\schMap(\stb,\gdb))=\difD(\stb,\gdb)$ for all $(\stb,\gdb)\in\stSetLPV\times\gdSetLPV$ and $\schMap(\stSetLPV,\gdSetLPV)\subseteq \schSet$.%
\end{definition}
	
Let us denote by $\schdot(t)=\sch(t+1)-\sch(t)\in\schdotSet$. We assume that the set $\schdotSet$ is considered such that it includes $(\st(t+1)-\st(t))\in\stdotSet$.
Through the DPV embedding, we can then cast the $\qsr$-DD (and $\qsr$-VD) check as an LPV analysis problem:
\begin{theorem}[LPV based $\qsr$-DD condition]\label{7_an_thm:diffdissipLPV}
	A nonlinear system given by \cref{8_eq:nonlinsys} with a corresponding DPV embedding given by \cref{7_an_eq:lpvdiff} on $\stSetLPV\times\gdSetLPV\,\subseteq\reals^\stSize\!\times\reals^\gdSize\!$ is $\qsr$-DD on $\stSetLPV\times\gdSetLPV$, if there exists a matrix function $\storquad:\schSet\to\sym^{\stSize}$ satisfying \cref{5_as:mbar}, such that	\begin{multline}\label{7_an_eq:diffDissipFullLPV}
	(\star)^\top \begin{bmatrix}
		-\storquad(\sch) & {0}\\{0} & \storquad(\sch+\schdot)
	\end{bmatrix}\begin{bmatrix}
		I & {0}\\\lpvA(\sch) & \lpvB(\sch)
	\end{bmatrix}-\\(\star)^\top \qsrMat \begin{bmatrix}
		{0} & I\\\lpvC(\sch) & \lpvD(\sch)
	\end{bmatrix}\preceq 0,
	\end{multline}
	 for all $\sch\in\schSet$ and $\schdot\in\schdotSet$.
\end{theorem}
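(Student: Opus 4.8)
The plan is to reduce the statement to the already-proven differential dissipativity condition of \cref{7_an_thm:diffdissip}: I would show that feasibility of the LPV inequality \cref{7_an_eq:diffDissipFullLPV} over the full scheduling domain implies feasibility of the differential inequality \cref{7_an_eq:diffDissipFull} over $\stSetLPV\times\gdSetLPV$, after which \cref{7_an_thm:diffdissip} converts the latter into $\qsr$-DD and closes the argument. The bridge between the two inequalities is the scheduling map furnished by the DPV embedding.

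First I would invoke \cref{7_an_def:lpvemb} to obtain the scheduling map $\schMap:\stSetLPV\times\gdSetLPV\to\schSet$ with $\schMap(\stSetLPV,\gdSetLPV)\subseteq\schSet$ and the matching equalities $\lpvA(\schMap(\stb,\gdb))=\difA(\stb,\gdb),\dots,\lpvD(\schMap(\stb,\gdb))=\difD(\stb,\gdb)$ for all $(\stb,\gdb)\in\stSetLPV\times\gdSetLPV$. I would then take the (state-, and in general input-, dependent) pullback $\storquad\circ\schMap$ as the candidate differential storage matrix in \cref{7_an_eq:diffStorage}, noting that it inherits \cref{5_as:mbar}: since $\schMap$ lands in $\schSet$ and $\storquad$ is uniformly bounded and positive definite there, the same constants $\slackA_1,\slackA_2$ bound the composition. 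Fixing an arbitrary $(\stb,\gdb)\in\stSetLPV\times\gdSetLPV$ and $\stbdot\in\stdotSet$ and setting $\sch=\schMap(\stb,\gdb)\in\schSet$, substituting into \cref{7_an_eq:diffDissipFullLPV} and replacing $\lpvA(\sch),\dots,\lpvD(\sch)$ by their differential counterparts turns the outer factors of the LPV inequality verbatim into those of \cref{7_an_eq:diffDissipFull}.

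What remains, and is the main obstacle, is matching the shifted storage block $\storquad(\sch+\schdot)$ with the next-step block $\storquad(\stb+\stbdot)$ of \cref{7_an_eq:diffDissipFull}. The scheduling variable neither coincides with the state nor evolves linearly, so $\sch(t+1)-\sch(t)$ is in general not equal to $\stbdot$ and may even depend on $\gd(t+1)$; one cannot simply identify the two increments. The resolution is to use the assumption stated just before the theorem, namely that $\schdotSet$ is chosen to contain every realizable scheduling increment compatible with a state increment in $\stdotSet$. Because \cref{7_an_eq:diffDissipFullLPV} is imposed for \emph{all} $\sch\in\schSet$ and \emph{all} $\schdot\in\schdotSet$ treated as independent variables, the particular pair $(\sch,\schdot)$ realizing the genuine transition $\sch+\schdot=\schMap(\st(t+1),\gd(t+1))$ is covered, so \cref{7_an_eq:diffDissipFull} holds at the chosen $(\stb,\gdb,\stbdot)$. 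As this point was arbitrary, \cref{7_an_eq:diffDissipFull} holds throughout $\stSetLPV\times\gdSetLPV$ and \cref{7_an_thm:diffdissip} delivers $\qsr$-DD. This decoupling of $\sch$ and $\schdot$ over the product $\schSet\times\schdotSet$ over-approximates the realizable transitions, which is precisely what makes the final LPV test finite-dimensional and convex at the cost of some conservativeness, and is the step requiring care rather than routine calculation.
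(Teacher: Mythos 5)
Your proposal is correct and takes essentially the same route as the paper's own proof: use the DPV embedding to identify $\lpvA(\schMap(\stb,\gdb))=\difA(\stb,\gdb),\dots,\lpvD(\schMap(\stb,\gdb))=\difD(\stb,\gdb)$ on $\stSetLPV\times\gdSetLPV$, observe that \cref{7_an_eq:diffDissipFullLPV} quantified over all independent pairs $(\sch,\schdot)\in\schSet\times\schdotSet$ covers in particular every realizable scheduling transition induced by $\stbdot\in\stdotSet$, and thereby recover \cref{7_an_eq:diffDissipFull} pointwise so that \cref{7_an_thm:diffdissip} closes the argument. You are in fact more explicit than the paper, whose proof compresses the storage-block and increment-matching step you single out as the main obstacle into the phrase ``it straightforwardly follows.''
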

\begin{proof}
See \cref{7_an_pf:diffdissipLPV}.
\end{proof}

Note that the proposed analysis condition \cref{7_an_eq:diffDissipFullLPV} can be seen as a classical $\qsr$ dissipativity analysis condition of an LPV representation. Therefore, all the techniques from the LPV framework can be used to reduce the infinite dimensional set of LMIs to a finite set. The most common techniques for this are polytopic techniques \cite{DeCaigny2012,Cox2018a}, grid-based techniques \cite{Wu1995,Apkarian1998}, and multiplier-based techniques \cite{Scorletti1998,Scherer2001}, see also \cite{Hoffmann2015} for an overview. For the polytopic and multiplier-based techniques, $\lpvA,\,\dots,\,\lpvD$ are needed to be restricted to an affine or rational function in the embedding \cref{7_an_eq:lpvdiff}, respectively. With the results of \cref{7_an_thm:diffdissipLPV}, combined with \cref{7_an_thm:ind-incr-dissip}, we have a convex analysis condition in order to analyze $\qsr$-ID of DT nonlinear systems. Similarly, connecting to \cref{8_thm:veloqsrMI}, we then also have convex analysis tools for universal shifted stability, through \cref{8_thm:velotoshiftstab}, and performance, through \cref{8_thm:veloshiftperf}. Note that in these cases, it is important that we also induce an invariant set in $\stSetLPV\times\gdSetLPV$ through the result of \cref{cor:shiftinvar,cor:incrinvar}. As this invariant set describes the region where the implied stability and performance conditions hold, we can also maximize its volume. %
For example, for a constant matrix $M$, the invariant set will correspond to an ellipsoidal region. There exist various result on casting the maximization of the volume of an ellipsoid as a convex problem, e.g., see \cite{Boyd2004}. However, this is outside the scope of this paper.

\begin{figure}
\centering
\includegraphics[scale=0.9]{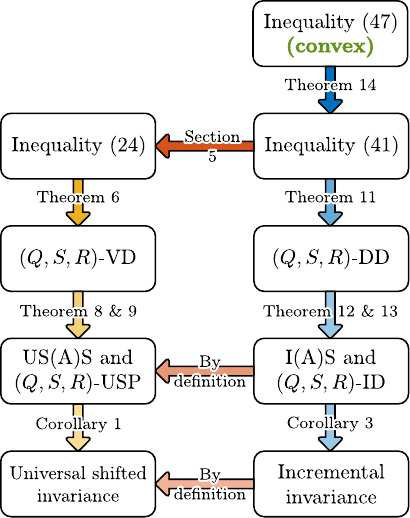}
\caption{Overview of the results and their connections.}
\vspace{-.5em}
\label{fig:overview}
\end{figure}

Also note that although the same tools from the LPV framework can be used for checking $\qsr$-DD, $\qsr$-VD and classical $\qsr$ dissipativity of nonlinear systems, we would like to emphasize that the underlying dissipativity and stability concepts and the matrix functions on which these sets are applied are very different. Namely, the $\qsr$-DD and $\qsr$-VD concepts connect to the equilibrium-free concepts of incremental and universal shifted stability and performance. This means that these analysis results are not dependent on a particular trajectory or equilibrium point, respectively. On the other hand, the standard LPV analysis results applied on a direct LPV embedding of a nonlinear system use classical dissipativity and can only provide performance and stability analysis with respect to single equilibrium point, often the origin of the state-space representation of the nonlinear system, which make them \emph{equilibrium-dependent}.

An overview of all the results and their connections presented in this paper is given in \cref{fig:overview}.
 
\section{Example}\label{7_an_sec:example}
In this section, we apply the results of the previous sections in order to analyze equilibrium-free stability and performance of a discrete-time nonlinear system.

We consider the following CT state-space representation of an actuated Duffing oscillator:
\begin{subequations}\label{eq:ctsysexample}
	\begin{align}
		\begin{bmatrix}
			\dot{x}_1 \\\dot{x}_2
		\end{bmatrix} &= \begin{bmatrix}
			x_2\\
			-8 x_1 - 10 x_1^3 - 4 x_2 + \gd
		\end{bmatrix};\\
		\gp &= x_1,
	\end{align}
\end{subequations}
where $x_1$ represents the position of the spring, $x_2$ its velocity, and $\gd$ is an input force. We discretize this model using a fourth order Runge-Kutta (RK4) method with a sampling time of 0.01s, where we assume the input $\gd$ to be constant in between samples. The resulting model is of the form \cref{7_an_eq:nl} and is not given due to its complexity. For our analysis, we consider the operating region $x_1(t)\in[-1\;1]$, $x_2(t)\in[-1\;1]$, $\gd(t)\in[-1\;1]$ for all $t\in\nninteg$, i.e., we perform dissipativity analysis on the set $\stSet\times\gdSet$ with $\stSet = [-1\;1]\times[-1\;1]$ and $\gdSet = [-1\;1]$.  Based on these sets, we compute the corresponding set $\stdotSet$, which is given by: $\stdotSet = [-0.011\;0.011]\times[-0.23\;0.23]$ such that $\st(t+1)-\st(t)\in\stdotSet$.  Consequently, we construct a DPV embedding of the nonlinear model on $\stSetLPV\times\gdSetLPV = [-1\;1]\times[-1\;1]\times[-1\;1]$. If we do a direct DPV embedding, we obtain as scheduling-map $\schMap(x,w) = \col(x_1, x_2, w)$, therefore $\sch(t) =\col(x_1(t), x_2(t), w(t))\in\schSet$ where $\schSet = [-1\;1]\times[-1\;1]\times[-1\;1]$. Moreover, we consider $\schdotSet = \stdotSet\times\reals$, such that $\schdot(t) = {\sch(t+1)}-\sch(t)\in\schdotSet$.

Under these considerations, we minimize $\perf$ in \cref{7_an_thm:diffdissipLPV} with $\qsr = (\perf^2,0,-I)$, corresponding to an incremental and universal shifted \dltwo-gain bound of $\perf$. We use a grid-based LPV approach and consider our quadratic storage matrix $M$ to be of the form:\footnote{Note that $\storquad$ only depends on the scheduling-variable $p_1$ which corresponds to the state $x_1$, consistent with \cref{7_an_eq:diffDissipFull}.}
\begin{equation}
	M(\sch) = M_0 + M_1 \sch_1^2
\end{equation}
where $M_i\in\reals^{2\times 2}$ for $i=1,\dots, 2$. Our problem then corresponds to grid-based \dltwo-gain analysis of an LPV representation, which has been implemented in the LPVcore Toolbox \cite{DenBoef2021}. Using the LPVcore Toolbox, the resulting $\perf$ that we obtain is 0.13, which is then our upperbound for the incremental and universal shifted \dltwo-gain bound of the discretized version of \cref{eq:ctsysexample} on the region $[-1\;1]\times[-1\;1]\times[-1\;1]$.

Computing $\perf$ for a constant quadratic storage matrix $M$ only results in a upperbound for the incremental and universal shifted \dltwo-gain of 0.42. Therefore, this shows that for this example, the approach using a constant matrix $\storquad$ in the storage function presented in \cite{Koelewijn2021a} is much more conservative than the approach using a state-dependent storage function presented in this paper for analyzing the incremental and universal shifted \dltwo-gain of DT systems. 

We also simulate the system for two different inputs to additionally verify the equilibrium-free properties:
\begin{align}
	\gd(t) = 0.7 e^{-t} \sin(2t) + 0.3 \sin(0.2 t),\\
	\tilde\gd(t) = 0.3 e^{-t} \cos(t) + 0.3 \sin(0.2 t),
\end{align}
and initial conditions $\st_{0} = \col(-0.08, 0.22)$ and \linebreak$\tilde\st_{0} =\col(-0.50, -0.20)$, respectively. 
Note that these two input trajectories converge as $t\rightarrow \infty$. Therefore, as the system is incrementally stable, the state and output trajectories will also converge, as is visible in \cref{fig:statetraj}. The cumulative incremental supply plus the initial storage $ \sum_{\tau=0}^{t}\supfunIncr \big(\gd(\tau), \gdo(\tau),\gp(\tau), \gpo(\tau)\big) + {\storfunIncr}\big(\st(0), \sto(0)\big)$ and the incremental storage ${\storfunIncr}\big(\st(t+1),\sto(t+1)\big)$ are also plotted in \cref{fig:stor}. The incremental storage ${\storfunIncr}$ is obtained by solving \cref{5_eq:geodesic} and using it in \cref{5_eq:jibstorage}. To simplify this computation, the integrals are approximated by summations and the smooth path $\stbmin_{(\st,\sto)}$ is approximated as a piecewise linear function. We then solve problem \cref{5_eq:geodesic} using \texttt{fmincon} in MATLAB. From \cref{fig:stor}, it can be seen that the incremental storage is always smaller than or equal to the cumulative incremental supply plus the initial storage. Therefore, this also verifies that the system satisfies the ID inequality, see \cref{7_an__eq:IDIE}, for these trajectories.

\begin{figure}
	\centering
	\includegraphics[width=\columnwidth]{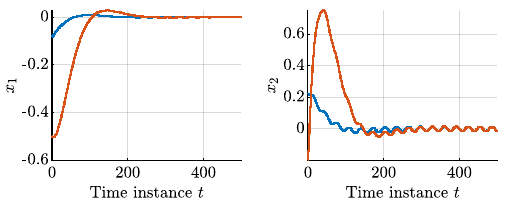}
	\caption{State trajectories for input $\gd$ with initial condition $\st_{0}$ \mbox{(\legendline{mblue})} and for input $\gdo$ with initial condition $\tilde\st_{0}$ (\legendline{morange}).}
	\label{fig:statetraj}
\end{figure} 
\begin{figure}
	\centering
	\includegraphics[width=\columnwidth]{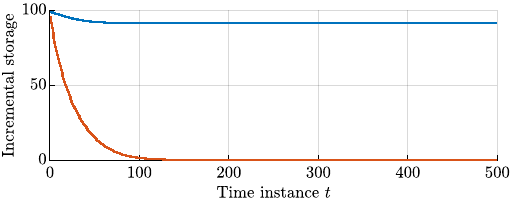}
	\caption{The cumulative incremental supply plus the initial storage $ \sum_{\tau=0}^{t}\supfunIncr \big(\gd(\tau), \gdo(\tau),\gp(\tau), \gpo(\tau)\big) + {\storfunIncr}\big(\st(0), \sto(0)\big)$ (\legendline{mblue}) and the incremental storage ${\storfunIncr}\big(\st(t+1),\sto(t+1)\big)$ (\legendline{morange}) for the trajectories generated by the inputs $\gd$ and $\gdo$.}
	\label{fig:stor}
\end{figure} 

\section{Conclusions}\label{8_sec:concl}

In this paper, we have developed convex conditions for equilibrium-free analysis of discrete-time nonlinear systems. We have shown how time-difference dynamics can be used in order to analyze universal shifted stability and performance of discrete-time nonlinear systems. Similarly, we have shown how dissipativity of the differential form can be used in order to analyze incremental dissipativity using a state-dependent storage function. Finally, we have shown how both these analysis results can be cast as an analysis problem of an LPV representation. These results give us convex tools for equilibrium-free stability and performance analysis of discrete-time nonlinear systems. For future research, we aim to use these results in order to develop equilibrium-free controller synthesis methods for discrete-time nonlinear systems.

\section*{Appendices}
\appendices
\section{Proofs}

\proofsection{8_thm:shiftlyapstab}\label{8_pf:shiftlyapstab}
Consider the function $\lyapfun:\st\mapsto\lyapfunShift(\st,\gdEq)$, which for every $(\stEq,\gdEq)\in\proj_\mr{\stEq,\gdEq}\eqSet$ satisfies the conditions for a Lyapunov function for the equilibrium point $\stEq$. Namely, given a $(\stEq,\gdEq)\in\proj_\mr{\stEq,\gdEq}\eqSet$, we have that $\lyapfun = (\st\mapsto\lyapfunShift(\st,\gdEq))\in\posClass{\stEq}$. Therefore, for a given $(\stEq,\gdEq)\in\proj_\mr{\stEq,\gdEq}\eqSet$, by \cref{8_eq:shiftedstability}, it holds that
	\begin{equation}\label{4_eq:lyaptoshift}
		\lyapfun(\st(t+1))-\lyapfun(\st(t))\leq 0,
	\end{equation}
	for all $t\in\nninteg$ and $\st\in\proj_\mr{\st}\Bw(\gd\equiv\gdEq)$. Due to the properties of $\lyapfunShift$ and construction of $\lyapfun$, \cref{4_eq:lyaptoshift} then also holds for each $(\stEq,\gdEq)\in\proj_\mr{\stEq,\gdEq}\eqSet$. Consequently, each equilibrium point \linebreak${(\stEq,\gdEq,\gpEq)\in\eqSet}$ is stable, under $\gd\equiv\gdEq$, for the whole state-space by the Lyapunov theory, see e.g. \cite{Khalil2002, Bof2018}. Therefore, by \cref{4_def:shiftedstab}, it is USS. In a similar manner, if \cref{8_eq:shiftedstability} holds, but with a strict inequality except when $\st(t)=\stEq$, this implies that \cref{4_eq:lyaptoshift} holds. %
	This then implies asymptotic stability of each equilibrium point, meaning that the system is USAS.

\proofsection{thm:shiftinvar}\label{pf:USinvariance}
Given $\gdEq\in\gdSetEq$ and a $\gamma >0$, define the set \cref{eq:shiftinvarset}. 
	We have that \cref{8_eq:shiftedstability} holds for every $(\stEq,\gdEq)\in\proj_\mr{\stEq,\gdEq}\eqSet$ and for all $t\in\nninteg$ and $\st\in\proj_\mr{\st}\Bw(\gd\equiv \gdEq)$.
	Therefore, for $x(0)\in\mb{X}_{\gdEq,\gamma}$, it holds that
	\begin{equation}\label{eq:shiftinvarlyap}
		\lyapfunShift(\st(t),\gdEq)\leq\dots\leq \lyapfunShift(\st(1),\gdEq)\leq \lyapfunShift(\st(0),\gdEq)\leq \gamma,
	\end{equation}
	for $t\geq 1$.
	Consequently, for the nonlinear system given by \cref{8_eq:nonlinsys} with initial condition $\st(0)=\stIc$ and input $\gd\equiv\gdEq$, we have by \cref{eq:shiftinvarlyap} that $\st(t)=\sttran(t,0,\stIc,\gd\equiv\gdEq)\in\mb{X}_{\gdEq,\gamma}$ for all $t\in\nninteg$.
	
\proofsection{thm:incrinvar}\label{pf:IncrInvariance}
	The proof follows similarly as the proof for \cref{thm:shiftinvar}. Given a trajectory $(\sto,\gd)\in\proj_\mr{\st,\gd}\in\Bw(\gd)$ and a $\gamma>0$ define the time-varying set \cref{eq:incrinvarset}. We have that \cref{7_an_eq:incrlyap} holds for all $t\in\nninteg$ and $\st,\sto\in\proj_{\mr{\st}}\Bw(\gd)$ under all measurable and bounded  $\gd\in(\reals^\gpSize)^{\nninteg}$.	Therefore, for $\st(0)\in\mb{X}_{\sto,\gamma}(0)$, it holds that
	\begin{equation}\label{eq:incrinvarlyap}
		\lyapfunIncr(\st(t),\sto(t))\leq\dots\leq \lyapfunIncr(\st(1),\sto(1))\leq \lyapfunIncr(\st(0),\sto(0)) \leq \gamma,
	\end{equation}
	for $t\geq 1$. 
	Consequently, for the nonlinear system given by \cref{8_eq:nonlinsys} with initial condition $\st(0)=\stIc$ and input $\gd$, we have by \eqref{eq:incrinvarlyap} that $\st(t)=\sttran(t,0,\stIc,\gd)\in\mb{X}_{\sto,\gamma}(t)$ for all $t\in\nninteg$.

\proofsection{8_thm:veloqsrMI}\label{8_pf:veloqsrMI}
If \cref{8_eq:veloMI} holds for all $(\st,\gd)\in\stSet\times\gdSet$, we have by pre- and post multiplication of \cref{8_eq:veloMI} with $\col(\dtst,\dtgd)^\top$ and $\col(\dtst,\dtgd)$, respectively, that
\begin{multline}\label{eq:8_pf_veloqsrMI_1}
	(\star)^\top \storquad (\velA(\st,\gd)\dtst+\velB(\st,\gd)\dtgd)-\dtst^\top \storquad\dtst - \\ \dtgd^\top \supQ \dtgd - 2 \dtgd^\top \supS \big(\velC(\st,\gd)\dtst + \velD(\st,\gd)\dtgd\big)-\\
(\star)^\top  \supR \big(\velC(\st,\gd)\dtst + \velD(\st,\gd)\dtgd\big) \leq 0,
\end{multline}
for all $\dtst\in\reals^\stSize$, $\dtgd\in\reals^\gdSize$ and $(\st,\gd)\in\stSet\times\gdSet$. As $\stSet$ and $\gdSet$ are assumed to be convex, we can represent any $\stb\in\stSet$ and $\gdb\in\gdSet$ by a $\var\in[0,1]$, $\stpl,\st\in\stSet$, and $\gdpl,\gd\in\gdSet$, such that $\stb(\var) = \st+\var(\stpl-\st)$ and $\gdb(\lambda)=\gd+\var(\gdpl-\gd)$. Consequently, if \cref{eq:8_pf_veloqsrMI_1} holds, it also holds that
\begin{multline}
	(\star)^\top \storquad (\velA(\stb(\var),\gdb(\var))\dtst+\velB(\stb(\var),\gdb(\var))\dtgd)-\\\dtst^\top \storquad\dtst - \dtgd^\top \supQ \dtgd - \\2 \dtgd^\top \supS \big(\velC(\stb(\var),\gdb(\var))\dtst + \velD(\stb(\var),\gdb(\var))\dtgd\big)-\\
(\star)^\top  \supR \big(\velC(\stb(\var),\gdb(\var))\dtst + \velD(\stb(\var),\gdb(\var))\dtgd\big) \leq 0,
\end{multline}
for any $\lambda\in[0,1]$, $\stpl,\st\in\stSet$, $\gdpl,\gd\in\gdSet$, $\dtst\in\reals^\stSize$ and $\dtgd\in\reals^\gdSize$. Hence, we also have by integration over $\var$ that
\begin{multline}\label{8_eq:mipf0}
	\int_0^1 (\star)^\top \storquad (\velA(\stb(\var),\gdb(\var))\dtst+\velB(\stb(\var),\gdb(\var))\dtgd)-\\\dtst^\top \storquad\dtst - \dtgd^\top \supQ \dtgd -\\ 2 \dtgd^\top \supS \big(\velC(\stb(\var),\gdb(\var))\dtst + \velD(\stb(\var),\gdb(\var))\dtgd\big)-\\
(\star)^\top  \supR \big(\velC(\stb(\var),\gdb(\var))\dtst + \velD(\stb(\var),\gdb(\var))\dtgd\big)\,d\var \leq 0,
\end{multline}
for any $\stpl,\st\in\stSet$, $\gdpl,\gd\in\gdSet$, $\dtst\in\reals^\stSize$ and $\dtgd\in\reals^\gdSize$. By \cite[Lemma 16]{Koelewijn2021a}, as $\storquad \succ 0$, we have that
\begin{multline}\label{8_eq:mipf1}
\!\!\! (\star)^\top\! \storquad \!\left(\int_0^1 \!\velA(\stb(\var),\gdb(\var))\dtst\!+\!\velB(\stb(\var),\gdb(\var))\dtgd \,d\var\!\right) \leq\\
	\int_0^1 (\star)^\top \storquad (\velA(\stb(\var),\gdb(\var))\dtst+\velB(\stb(\var),\gdb(\var))\dtgd) \,d\var,
\end{multline}
and similarly, as $\supR \preceq 0$, we have that 
\begin{multline}\label{8_eq:mipf1a}
\!\!\!\!\!\!(\star)^\top  \!(\text{-}\supR) \!\left(\int_0^1\! \velC(\stb(\var),\gdb(\var))\dtst \!+\! \velD(\stb(\var),\gdb(\var))\dtgd\,d\var\!\right)\!\leq \\
\int_0^1	(\star)^\top  (\text{-}\supR) \big(\velC(\stb(\var),\gdb(\var))\dtst + \velD(\stb(\var),\gdb(\var))\dtgd\big)\,d\var.\!\!
\end{multline}
Note that $\velA=\Partial{\stMap}{\st}$, $\velB=\Partial{\stMap}{\gd}$, $\velC=\Partial{\opMap}{\st}$, $\velD=\Partial{\opMap}{\gd}$. Hence, using the definition of $\vintA,\dots,\vintD$ in \cref{8_eq:velointmat}, we have
\begin{multline}\label{8_eq:mipf1b}
	\int_0^1 \velA(\stb(\var),\gdb(\var))\dtst+\velB(\stb(\var),\gdb(\var))\dtgd \,d\var = \\
	\vintA(\stpl,\st,\gdpl,\gd)\dtst + \vintB(\stpl,\st,\gdpl,\gd)\dtgd,
\end{multline}
\begin{multline}\label{8_eq:mipf2}
	\int_0^1 \velC(\stb(\var),\gdb(\var))\dtst+\velD(\stb(\var),\gdb(\var))\dtgd \,d\var = \\
	\vintC(\stpl,\st,\gdpl,\gd)\dtst + \vintD(\stpl,\st,\gdpl,\gd)\dtgd.
\end{multline}
Combining \cref{8_eq:mipf1,8_eq:mipf1a,8_eq:mipf1b,8_eq:mipf2} with \cref{8_eq:mipf0}, we obtain that
\begin{multline}\label{8_eq:velo_dissip_valuepffff}
		(\star)^\top \storquad \left(\vintA(\stpl,\st,\gdpl,\gd)\dtst +\vintB(\stpl,\st,\gdpl,\gd)\dtgd\right)-\\\ \dtst^\top \storquad\dtst- 
		\dtgd^\top \supQ \dtgd -\\ 2 \dtgd^\top\supS\big(\vintC(\stpl,\st,\gdpl,\gd)\dtst + \vintD(\stpl,\st,\gdpl,\gd)\dtgd\big)-\\
		(\star)^\top  \supR\big(\vintC(\stpl,\st,\gdpl,\gd)\dtst + \vintD(\stpl,\st,\gdpl,\gd)\dtgd\big) \leq 0,
\end{multline}
for any $\stpl,\st\in\stSet$, $\gdpl,\gd\in\gdSet$, $\dtst\in\reals^\stSize$ and $\dtgd\in\reals^\gdSize$. Substituting $\stpl = \st(t+1)$, $\st = \st(t)$, $\dtst = \st(t+1)-\st(t)$, $\gd = \gd(t)$, $\dtgd = \gd(t+1)-\gd(t)$ in \cref{8_eq:velo_dissip_valuepffff} and summing over time from $t_0$ to $t_1$ where $t_0 \leq t_1$, we obtain \cref{8_eq:velodissip} where $\storfunVelo$ is given by \cref{8_eq:velostorquad} and $\supfunVelo$ is given by \cref{8_eq:velosupply}. 
	
\proofsection{8_thm:velotoshiftstab}\label{8_pf:velotoshiftstab}
For each equilibrium point $(\stEq,\gdEq,\gpEq)\in\eqSet$, consider
\begin{equation}
	\lyapfunShift(\st(t),\gdEq):=\lyapfunVelo(\stMap(\st(t),\gdEq)-\st(t))=\lyapfunVelo(\dtst(t)). 
\end{equation}
For each $(\stEq,\gdEq,\gpEq)\in\eqSet$, this choice implies that $\storfunShift(\cdot,\gdEq)\in\posClass{\stEq}$, as $\storfunVelo\in\posClass{0}$. Note that this requires uniqueness of the equilibrium points (see \cref{4_assum:uniqueEq}), as otherwise there exists multiple $\stEq$ for which $\lyapfunShift(\stEq,\gdEq)=0$. By this choice of $\lyapfunShift$, we have that for each $(\stEq,\gdEq,\gpEq)\in\eqSet$,
\begin{multline}\label{8_eq:shiftvelostab}
	\lyapfunShift(\st(t+1),\gdEq)-\lyapfunShift(\st(t),\gdEq) =\\ \lyapfunVelo(\dtst(t+1))-\lyapfunVelo(\dtst(t))\leq 0,
\end{multline}
for all $t\in\nninteg$ and $\dtst\in\proj_\mr{\dtst}\Bvw(\gd \equiv \gdEq)$ and correspondingly for all $\st\in\proj_\mr{\st}\Bw(\gd\equiv\gdEq)$. This implies that \cref{8_eq:shiftedstability} holds for all $\st\in\proj_\mr{\st}\Bw(\gd\equiv\gdEq)$ and for all equilibrium points $(\stEq,\gdEq)\in\proj_\mr{\stEq,\gdEq}\eqSet$. Hence, by \cref{8_thm:shiftlyapstab}, \cref{8_eq:nonlinsys} is USS. USAS follows similarly by changing \cref{8_eq:shiftvelostab} to a strict inequality.

\proofsection{4_lem:velostab}\label{4_pf:velostab}
Let the system given by \cref{8_eq:nonlinsys} be velocity dissipative w.r.t. a supply function $\supfunVelo$. For this supply function, \cref{8_eq:supplystability} holds for all $\gpdot\in\reals^\gpSize$. Therefore, it holds that
\begin{equation}\label{4_eq:storfunshiftproof}
\storfunVelo(\dtst(t_1+1))-\storfunVelo(\dtst(t_0))\leq \sum_{t=t_0}^{t_1} \supfunVelo(\dtgd(t),\dtgp(t)) \leq 0
\end{equation}
for all $t_0,t_1\in\nninteg$ with $t_0\leq t_1$ and $(\dtst,\dtgd,\dtgp)\in\Bv$. This gives that
 \begin{equation}\label{4_eq:storfunshiftproof2}
	\storfunVelo(\dtst(t+1))-\storfunVelo(\dtst(t))\leq 0,
 \end{equation}
	for all $t\in\nninteg$ and $\dtst\in\proj_\mr{\dtst}\Bvset{\gdSetEq}$. Moreover, the storage function $\storfunVelo$ satisfies the conditions for the function $\lyapfunVelo$ in \cref{8_thm:velotoshiftstab}. Hence, by \cref{8_thm:velotoshiftstab}, \cref{4_eq:storfunshiftproof2} implies that the  system is USS.

In case of USAS, the supply function satisfies \cref{8_eq:supplystability}, but with strict inequality for all $\gpdot\in\reals^\gpSize$, except when $\dtst=0$. Therefore, \cref{4_eq:storfunshiftproof2} holds, but with strict inequality except when $\dtst(t)=0$, which by \cref{8_thm:velotoshiftstab} implies USAS.

\proofsection{8_thm:veloshiftperf}\label{8_pf:veloshiftperf}
If the nonlinear system given by \cref{8_eq:nonlinsysState} is velocity dissipative w.r.t. the supply function $\supfunVelo(\dtgd,\dtgp) = \dtgd ^\top \supQ \dtgd+\dtgp^\top \supR \dtgp$, then there exists a function $\storfunVelo$, such that for all $t_0,t_1\in\nninteg$ with $t_0\leq t_1$
\begin{multline}
	\storfunVelo(\dtst(t_1+1))-\storfunVelo(\dtst(t_0))\leq\\ \sum_{t=t_0}^{t_1} \dtgd(t)\!^\top \!\supQ \dtgd(t)+\dtgp(t)\!^\top\! \supR \dtgp(t),
\end{multline}
for all $(\dtst,\dtgd,\dtgp)\in\Bv$, corresponding to $(\st,\gd,\gp)\in\B$. Note that by consideration of the theorem, $\dtst(0)=0$. Hence, as $\storfunVelo(\dtst(0))=\storfunVelo(0)=0$ and $\storfunVelo(\stdot)\geq 0$, $\forall\, \stdot\in\reals^\stSize$, this implies that
\begin{equation}\label{8_eq:pf:qrvsp1}
	0\leq \sum_{t=0}^{T} \dtgd(t)\!^\top \! \supQ \dtgd(t)+\dtgp(t)\!^\top \! \supR \dtgp(t),
\end{equation}
for all $T\geq0$ and $(\dtst,\dtgd,\dtgp)\in\Bv$. 
Defining $\tilde \supQ:=\frac{1}{\norm{\supQ}}\supQ$ and $\tilde \supR:=\frac{1}{\norm{\supQ}}\supR$, it also holds that 
\begin{equation}\label{8_eq:pf:qrvsp1.1}
	0\leq \sum_{t=0}^{T}\dtgd(t) \!^\top \! \tilde \supQ \dtgd(t)+\dtgp(t)\!^\top \! \tilde \supR \dtgp(t),
\end{equation}
Next, using \cref{8_eq:nonlinsysState,8_eq:nonlinsysStateEqui,8_eq:nonlinsysStateVelo} and as $\dtst(t) = \st(t+1)-\st(t)$, we have that, omitting dependence on time for brevity,
\begin{align}
	\dtgp^\top \tilde \supR \dtgp \!&=\! \dtst^\top \ltiC^\top \tilde \supR\, \ltiC\dtst, \label{8_eq:perfproofeq1}\\
	\!&=\! (\star)^\top\! \tilde \supR\, \ltiC(\stMap(\st)+\ltiB \gd-\st), \notag\\[-4mm]
		\!&=\! (\star)^\top\! \tilde \supR\,\ltiC(\stMap(\st)+\ltiB \gd-\st+\overbrace{\stEq-(\stMap(\stEq)+\ltiB \gdEq)}^{=0}),\notag \\
	\!&=\! (\star)^\top\! \tilde \supR\,\ltiC(\stMap(\st)-\stMap(\stEq)-(\st-\stEq)+\ltiB (\gd -\gdEq)). \notag
\end{align}
Through the Fundamental Theorem of Calculus \cite{Thomas2005}, we have that 
\begin{align}
	\stMap(\st)-\stMap(\stEq)&=\left(\int_0^1\Partial{\stMap}{\st}(\stEq+\var(\st-\stEq))\,d\var\right)(\st-\stEq),  \notag \\
	&= \underbrace{\left(\int_0^1 \velA(\stEq+\var(\st-\stEq))\,d\var\right)}_{\vintA(\st,\stEq)}(\st-\stEq),
\end{align} \vskip -3mm \noindent
hence,
\begin{equation}
	\stMap(\st)-\stMap(\stEq)-(\st-\stEq) = (\vintA(\st,\stEq)-I)(\st-\stEq).
\end{equation}
Combining this with \cref{4_as:CB}, we can write \cref{8_eq:perfproofeq1} as
\begin{equation}
	\dtgp^\top \tilde \supR \dtgp = (\star)\!^\top \!\tilde \supR\,\ltiC(\vintA(\st,\stEq)-I)(\st-\stEq).
\end{equation}
Next, by satisfying \cref{8_as:veloShiftBound2} for $T=\tilde\supR\preceq 0$, we have that, for every $\stEq\in\stSetEq$,
\begin{multline}\label{8_eq:pf:qrvsp2}
	\dtgp^\top \tilde \supR \dtgp= (\star)\!^\top \! \tilde \supR \ltiC(\vintA(\st,\stEq)-I)(\st-\stEq) \leq\\
	\alpha^{-1}(\star)\!^\top \! \tilde \supR \ltiC(\st-\stEq)= \alpha^{-1}(\star)^\top \tilde \supR(\gp-\gpEq).
\end{multline}
Moreover, by \cref{8_as:wExoSys} and using that $\dtgd(t) = \gd(t+1)-\gd(t)$, we have that, for a given $(\stEq,\gdEq,\gpEq)\in\eqSet$,
\begin{align}
	\gd(t+1) &= \exoA(\gd(t)-\gdEq)+\gdEq, \notag \\
	\gd(t+1)-\gd(t)+\gd(t)&=\exoA(\gd(t)-\gdEq)+\gdEq, \notag \\
	\gd(t+1)-\gd(t)&=\exoA(\gd(t)-\gdEq)-(\gd(t)-\gdEq), \notag \\
	\dtgd(t) &= (\exoA-I)(\gd(t)-\gdEq),
\end{align}
and hence,
\begin{multline}\label{8_eq:pf:qrvsp3}
	\dtgd(t)\!^\top \!\tilde \supQ \dtgd(t)=(\star)\!^\top \!\tilde \supQ (\exoA-I)(\gd(t)-\gdEq)\leq\\ \beta^2 (\star)\!^\top \! \tilde \supQ (\gd(t)-\gdEq),
\end{multline}
where $\gd\in\exoBvr_{(\gdEq,\beta)}$ and $0\preceq\tilde \supQ \preceq I$.
Combining \cref{8_eq:pf:qrvsp1.1,8_eq:pf:qrvsp2,8_eq:pf:qrvsp3}, we obtain that, for every $(\stEq,\gdEq,\gpEq)\in\eqSet$,
\begin{equation}
	\sum_{t=0}^{T} \beta^2(\star)\!^\top \! \tilde \supQ (\gd(t)-\gdEq)+\alpha^{-1}(\star)\!^\top \! \tilde \supR (\gp(t)-\gpEq)\geq 0,
\end{equation}
for all $T\geq 0$ and $(\gd,\gp)\in\proj_\mr{\gd,\gp}\B$ with $\gd\in\exoBvr_{(\gdEq,\beta)}$.
Hence, also 
\begin{equation}
	\sum_{t=0}^{T} \beta^2(\star)\!^\top \! \supQ (\gd(t)-\gdEq)+\alpha^{-1}(\star)\!^\top \!  \supR (\gp(t)-\gpEq) \geq 0,
\end{equation}
for all $T\geq 0$ and $(\gd,\gp)\in\proj_\mr{\gd,\gp}\B$ with $\gd\in\exoBvr_{(\gdEq,\beta)}$.

\proofsection{7_an_thm:diffdissip}\label{7_an_pf:diffdissip}
The system given by \cref{7_an_eq:nl} is differentially dissipative w.r.t. a supply function $\supfunDif$ and for a storage function $\storfunDif$, if \cref{7_an_eq:DIE_diff} holds for all $(\stb,\gdb)\in\proj_\mr{x,u}\B$ and for all $t_0,t_1\in\nninteg$ with $t_0 \le t_1$. This condition is equivalent to 
\begin{equation}\label{7_an_eq:proofdiffdissip1}
	{\storfunDif}\big(\stb(t+1),\dst(t+1)\big) - \storfunDif\big(\stb(t),\dst(t)\big) \leq \supfunDif\big(\dgd(t),\dgp(t)\big), 
\end{equation}
holding for all $(\stb,\gdb)\in\proj_\mr{x,u}\B$ and for all $t\in\nninteg$. Substituting the differential dynamics \cref{7_an_eq:sys_diff}, the considered supply function \cref{7_an_eq:supply_diff}, and storage function \cref{7_an_eq:diffStorage} in \cref{7_an_eq:proofdiffdissip1} results in 
\begin{multline}\label{7_an_eq:proofdiffdissip2}
	\!\!\!\!\!(\star)\!^\top\!\storquad(\stb(t+1))\big(\difA(\stb(t),\gdb(t))\dst(t)\!+\!\difB(\stb(t),\gdb(t))\dgd(t)\big) - \\ \dst(t)^\top \storquad(\stb(t))\dst(t) \leq \dgd(t)^\top \supQ \dgd(t) +\\ 2 \dgd(t)^\top \supS \big(\difC(\stb(t),\gdb(t))\dst(t) + \difD(\stb(t),\gdb(t))\dgd(t)\big)
	+\\ (\star)^\top  \supR \big(\difC(\stb(t),\gdb(t))\dst(t) + \difD(\stb(t),\gdb(t))\dgd(t)\big), 
\end{multline}
holding for all $(\stb,\gdb)\in\proj_\mr{x,u}\B$ and for all $t\in\nninteg$. If it holds for all $(\stb,\gdb)\in{\stSet}\times{\gdSet}$, $\stdot\in\stdotSet$, $\dst\in\reals^\stSize$, and $\dgd\in\reals^\gdSize$ that 
\begin{multline}\label{7_an_eq:proofdiffdissip3}
	(\star)^\top \storquad(\stb+\stbdot)\big(\difA(\stb,\gdb)\dst+\difB(\stb,\gdb)\dgd\big) - \\ \dst^\top \storquad(\stb)\dst \leq \dgd^\top \supQ \dgd + 2 \dgd^\top \supS \big(\difC(\stb,\gdb)\dst + \\ \difD(\stb,\gdb)\dgd\big)
	+ (\star)^\top  \supR \big(\difC(\stb,\gdb)\dst + \difD(\stb,\gdb)\dgd\big), 
\end{multline}
then, \cref{7_an_eq:proofdiffdissip2} holds. Finally, \cref{7_an_eq:diffDissipFull} is equivalent to \cref{7_an_eq:proofdiffdissip3} by pre- and post multiplication of \cref{7_an_eq:diffDissipFull} with $\col(\dst, \dgd)^\top$ and $\col(\dst, \dgd)$, respectively.

\proofsection{4_lem:difstab}\label{4_pf:difstab}
The proof follows in a similar manner as \cref{4_lem:velostab}. Namely, let the system given by \cref{8_eq:nonlinsys} be differentially dissipative w.r.t. a supply function $\supfunDif$. For this supply function, \cref{8_eq:supplystabilitydiff} holds for all $\dgp\in\reals^\gpSize$. Therefore, it holds that
\begin{multline}\label{4_eq:storfunshiftproofdif}
\storfunDif(\stb(t_1+1),\dst(t_1+1))-\storfunDif(\stb(t_0),\dst(t_0))\leq \\\sum_{t=t_0}^{t_1} \supfunDif(\dgd(t),\dgp(t)) \leq 0
\end{multline}
for all $t_0,t_1\in\nninteg$ with $t_0\leq t_1$ and $(\dtst,\dtgd,\dtgp)\in\Bv$. This gives that
 \begin{equation}\label{4_eq:storfunshiftproof2dif}
	\storfunDif(\stb(t+1), \dst(t+1)) -\storfunDif(\stb(t+1), \dst(t+1)) \leq 0,
 \end{equation}
	for all $t\in\nninteg$ and $\stb\in\proj_{\st}\Bw(\gd)$ under all measurable and bounded $\gd\in(\reals^\gdSize)^\nninteg$. Moreover, the storage function $\storfunDif$ satisfies the conditions for the function $\lyapfunDif$ in \cref{7_an_lem:diffstabcondmi}. Hence, by \cref{7_an_lem:diffstabcondmi}, \cref{4_eq:storfunshiftproof2dif} implies that the  system is IS.

In case of IAS, the supply function satisfies \cref{8_eq:supplystabilitydiff}, but with strict inequality for all $\dgp\in\reals^\gpSize$ when $\st(t)\neq\sto(t)$. Therefore, \cref{4_eq:storfunshiftproof2} holds, but with strict inequality when $\st(t)\neq\sto(t)$, which by \cref{7_an_lem:diffstabcondmi} implies USAS.

\proofsection{7_an_thm:ind-incr-dissip}\label{7_an_pf:ind-incr-dissip}
To prove our result, we will make use of the results in the proof of \cite[Theorem 6]{Verhoek2020}. As the system is differentially dissipative, it implies that by writing out the $\var$-dependence and integrating over $\var$,
\begin{multline}\label{7_an_eq:jib1}
\!\!\!\!\int_0^1\!\Big[\storfunDif\big(\stb(t_1+1,\var),\dst(t_1+1,\var)\big) - \storfunDif\big(\stb(t_0,\var),\dst(t_0,\var)\big)  -\\ \sum_{t=t_0}^{t_1}\supfunDif\big(\dgd(t,\var),\dgp(t,\var)\big)\Big]\,d \var \leq 0,
\end{multline}
holds for all $(\stb,\gdb)\in\proj_\mr{x,u}\B$, $\var\in[0,\,1]$, and for all $t_0,t_1\in\nninteg$ with $t_0 \le t_1$. Let us first consider the storage function part of this inequality. Let us define a minimum energy path between $\st$ and $\sto$:
\begin{equation}\label{5_eq:geodesic}
\stbmin_{(\st,\sto)}(\var):= \arginf_{\hat{\st}\in\pathSet(\st,\sto)}\int_0^1\storfunDif\left(\hat{\st}(\var),{\Partial{\hat{\st}(\var)}{\var}}\right)\,d \var.
\end{equation}
As $V_\delta(\stb,\dst) = \dst^\top \storquad(\stb)\dst$, the path $\stbmin_{(\st,\sto)}$ corresponds to the geodesic connecting $\st$ and $\stb$ under the Riemannian metric $\storquad(\stb)$, see also \cite{Manchester2018,ReyesBaez2019}. By the Hopf-Rinow theorem, this implies, for any $\st,\stb\in\reals^\stSize$, that $\stbmin_{(\st,\sto)}$ is a unique, smooth function \cite{Manchester2018,Manchester2017}. Based on this minimum energy path, we define the incremental storage function as: 
\begin{equation}\label{5_eq:jibstorage}
	\storfunIncr(\st,\sto):= \int_0^1\storfunDif\left(\stbmin_{(\st,\sto)}(\var),{\Partial{\stbmin_{(\st,\sto)}(\var)}{\var}}\right)\,d \var.
\end{equation}
Note that $\storfunDif(\stb,\cdot)\in\posClass{0},\,\forall\,\stb\in\reals^\stSize$. Therefore, $\storfunIncr(\st,\st)=0$ for all $\st\in\reals^\stSize$ as $\stbmin_{(\st,\st)}(\lambda) = \st$, hence, ${\Partial{\stbmin_{(\st,\sto)}(\var)}{\var}}=0$ and by definition $\storfunDif(\cdot,0)= 0$. Moreover, for all $\st,\sto\in\reals^\stSize$ for which $\st\neq \sto$, we have that $\storfunIncr(\st,\sto)>0$, as in that case there exists a set of $\var\in[0,1]$ for which ${\Partial{\stbmin_{(\st,\sto)}(\var)}{\var}}\in\reals\backslash\{0\}$ (as it can only be zero for all $\var$ if $\st=\sto$) and by definition $\storfunDif(\stb,\dst)>0,\,\forall\dst\in\reals^\stSize\backslash\{0\}$. Consequently, we have that $\storfunIncr\in\posClassI$.

Based on the definition of the incremental storage function, it follows that
\begin{equation}\label{5_eq:jib-1}
	\storfunIncr(\st(t_1\!+\!1),\sto(t_1\!+\!1))\!\leq\! \int_0^1 \!\storfunDif\big(\stb(t_1\!+\!1,\var),\dst(t_1\!+\!1,\var)\big)\,d \var,
\end{equation}
for any $(\var\mapsto\stb(t_1\!+\!1,\var))\in \pathSet(\st(t_1\!+\!1),\sto(t_1\!+\!1))$ with $\st(t_1\!+\!1),\,\sto(t_1\!+\!1)\in\reals^\stSize$, $t_1\in\nninteg$, and  $(t\mapsto\stb(t,\var))\in\proj_\mr{\st} \B$ for any $\var \in[0,\,1]$.
Moreover, we parameterize the initial condition as $\stb(t_0,\var) = \parTraj \stIc(\var) = \stbmin_{(\stIc,\otherTraj \stIc)}(\var)$, from which it follows that
\begin{equation}\label{5_eq:jib0}
-\storfunIncr(\st(t_0),\sto(t_0)) = -\int_0^1  \storfunDif\big(\stb(t_0,\var),\dst(t_0,\var)\big)\,d \var. 
\end{equation}
Combining \cref{5_eq:jib-1} and \cref{5_eq:jib0} gives that
\begin{multline}\label{7_an_eq:incrstor_difstor_bound}
	\storfunIncr\big(\st(t_1+1),\sto(t_1+1)\big)-\storfunIncr\big(\st(t_0),\sto(t_0)\big)\leq\\
	\int_0^1\! \storfunDif\big(\stb(t_1\!+\!1,\var),\dst(t_1\!+\!1,\var)\big) \!-\! \storfunDif\big(\stb(t_0,\var),\dst(t_0,\var)\big) \,d \var.
\end{multline}

Subsequently, we consider the supply function part of \cref{7_an_eq:jib1}. This follows in the same manner as in \cite{Koelewijn2021a, Verhoek2020}, which we will briefly repeat. By changing summation and integration operations, the supply function part  of \cref{7_an_eq:jib1} is given by 
\begin{equation}\label{7_an_eq:jib3}
	\sum_{t=t_0}^{t_1}\int_0^1(\star)^\top\qsrMat\begin{bmatrix}\dgd(t,\var) \\ \dgp(t,\var)\end{bmatrix}\,d \var.
\end{equation}
Parameterizing our input as $\gdb(t,\var) =\gdo(t)+\var(\gd(t)-\gdo(t))$, it follows that $\dgd(t) = \Partial{\gdb(t,\var)}{\var}=\gd(t)-\gdo(t)$. Therefore, we have that $\int_0^1(\star)^\top \supQ\,\dgd(t,\var)d\var = (\star)^\top \supQ (\gd(t)-\gdo(t))$ and 
\begin{equation}
\begin{aligned}
\int_0^1 \!2\,\dgd(t,\var)\!^\top \!S\,\dgp(t,\var)d\var &= 2(\gd(t)\!-\!\gdo(t))\!^\top \! \supS{\int_0^1  \Partial{\gpb(t,\var)}{\var}}\,d \var, \\
&= 2(\gd(t)\!-\!\gdo(t))\!^\top \! \supS\, (\gpb(t,1)\! -\!\gpb(t,0)), \\
&=2(\gd(t)\!-\!\gdo(t))\!^\top \! \supS\, (\gp(t) -\gpo(t)).
\end{aligned}
\end{equation}
As we consider $\supR\preceq0$, i.e., $-\supR\succeq 0$, we have by \cite[Lemma 16]{Koelewijn2021a}
\begin{multline}\label{5_eq:jibqsrend}
\int_0^1  (\star)^\top \supR \,\dgp(t,\var)d\var = 
\int_0^1  (\star)^\top \supR\, \Partial{\gpb(t,\var)}{\var} \,d \var \le\\
(\star)^\top \supR \left(\int_0^1  \Partial{\gpb(t,\var)}{\var}\,d \var\right) = (\star)^\top \supR\,(\gp(t)-\gpo(t)).
\end{multline}

Combining this, results in the following inequality to hold
\begin{multline}\label{7_an_eq:jib5}
	 \sum_{t=t_0}^{t_1}\int_0^1(\star)^\top\qsrMat\begin{bmatrix}\dgd(t,\var) \\ \dgp(t,\var)\end{bmatrix}\,d \var \leq\\ \sum_{t=t_0}^{t_1}(\star)^\top\qsrMat\begin{bmatrix}\gd(t)-\gdo(t)\\\gp(t)-\gpo(t)\end{bmatrix}.
\end{multline}
Combining \cref{7_an_eq:incrstor_difstor_bound,7_an_eq:jib5} with \cref{7_an_eq:jib1} results in 
\begin{multline}
{\storfunIncr}\big(\st(t_1+1), \sto(t_1+1)\big) - {\storfunIncr}\big(\st(t_0), \sto(t_0)\big) \le\\ \sum_{t=t_0}^{t_1}\supfunIncr \big(\gd(t), \gdo(t),\gp(t), \gpo(t)\big),
\end{multline}
for all $t_0,t_1\in\nninteg$ with $t_0 \leq t_1$ and any two trajectories $(\st,\gd,\gp),(\sto,\gdo,\gpo)\in\B$ where $\storfunIncr$ is given by \cref{5_eq:jibstorage}, which is the condition for incremental dissipativity in \cref{7_an_def:incrdis}.

\proofsection{7_an_thm:diffdissipLPV}\label{7_an_pf:diffdissipLPV}
We have that \cref{7_an_eq:lpvdiff} is a DPV embedding of \cref{8_eq:nonlinsys} on the region $\stSetLPV\times\gdSetLPV\,\subseteq\reals^\stSize\!\times\reals^\gdSize\!$. Therefore, $\lpvA(\schMap(\stb,\gdb))=\difA(\stb,\gdb)$, $\dots$, $\lpvD(\schMap(\stb,\gdb))=\difD(\stb,\gdb)$ for all $(\stb,\gdb)\in\stSetLPV\times\gdSetLPV\,\subseteq\reals^\stSize\!\times\reals^\gdSize\!$. Moreover, we have \cref{7_an_eq:diffDissipFullLPV} holds for all $\sch\in\schSet$ and $\schdot\in\schdotSet$. Hence, it straightforwardly follows that \cref{7_an_eq:diffDissipFull} holds for all $(\stb,\gdb)\in{\stSet}\times{\gdSet}$ and $\stbdot\in\stdotSet$.

\bibliographystyle{elsarticle-num}
\bibliography{bibtex_db}

\end{document}